\newtheorem{theorem}{Theorem}
\newtheorem{lemma}{Lemma}
\newtheorem{corollary}{Corollary}
\newcommand{\qed}{}
\newcommand{\mqed}{\hfill$\Box$}
\newlength{\proofpostskipamount}\newlength{\proofpreskipamount}
\newenvironment{proof}%
               {\par\vspace{\proofpreskipamount}\noindent{\textbf{Proof:}}\hspace{0.5em}}
               {\nopagebreak%
                \strut\nopagebreak%
                \hspace{\fill}\mqed\par\vspace{\proofpostskipamount}\noindent}
\newcommand{\abs}[1]{| #1 |}
\newcommand{\C}{\ensuremath{\mathcal{C}}}
\newcommand{\pC}{\widehat{C}}
\newcommand{\sset}[1]{\{ #1 \}}
\newcommand{\edge}[2]{\ensuremath{#1\,#2}}
\title{Certifying 3-Edge-Connectivity}
\author{Kurt Mehlhorn \and Adrian Neumann \and Jens M. Schmidt}
\begin{document}
\maketitle
\begin{abstract} We present a certifying algorithm that tests
  graphs for 3-edge-connectivity; the algorithm works in linear time.  If the input graph is not
  3-edge-connected, the algorithm returns a 2-edge-cut. If it is
  3-edge-connected, it returns a construction sequence that
  constructs the input graph from the graph with two vertices and three parallel edges using
  only operations that (obviously) preserve 3-edge-connectivity.

Additionally, we show how to compute and certify the $3$-edge-connected components and a cactus representation of the $2$-cuts in linear time. For $3$-vertex-connectivity, we show how to compute the $3$-vertex-connected components of a $2$-connected graph.
\end{abstract}

\section{Introduction}
Advanced graph algorithms answer complex yes-no questions such as ``Is this graph planar?'' or ``Is this graph $k$-vertex-connected?''. These algorithms are not only nontrivial to implement, it is also difficult to test their implementations extensively, as usually only small test sets are available. It is hence possible that bugs persist unrecognized for a long time. An example is the implementation of the linear time planarity test of Hopcroft and Tarjan~\cite{HT} in LEDA~\cite{Mehlhorn1999}. A bug in the implementation was discovered only after two years of intensive use.

\emph{Certifying algorithms}~\cite{McConnell2011} approach this problem by computing an additional \emph{certificate} that proves the correctness of the answer. This may, e.g., be either a 2-coloring or an odd cycle for testing bipartiteness, or either a planar embedding or a Kuratowski subgraph for testing planarity. Certifying algorithms are designed such that checking the correctness of the certificate is substantially simpler than solving the original problem. Ideally, checking the correctness is so simple that the implementation of the checking routine allows for a formal verification~\cite{FormalVerification,Verification-CertComps-AutoCorres-Simpl}.

Our main result is a linear time certifying algorithm for $3$-edge-connectivity based on a result of Mader~\cite{Mader1978}. He showed that every 3-edge-connected graph can be obtained from $K_2^3$, the graph consisting of two vertices and three parallel edges, by a sequence of three simple operations that each introduce one edge and, trivially, preserve 3-edge-connectivity. We show how to compute such a sequence in linear time for $3$-edge-connected graphs. If the input graph is not $3$-edge-connected, a $2$-edge-cut is computed. The previous algorithms~\cite{Galil1991,Nagamochi1992a,Taoka1992,Tsin2007,Tsin2009} for deciding 3-edge-connectivity are not certifying; they deliver a 2-edge-cut for graphs that are not 3-edge-connected but no certificate in the yes-case.

Our algorithm is path-based~\cite{Gabow2000}. It uses the concept of a \emph{chain decomposition} of a graph introduced in~\cite{Schmidt2010b} and used for certifying $1$- and $2$-vertex and $2$-edge-connectivity in~\cite{Schmidt2013a} and for certifying $3$-vertex connectivity in~\cite{Schmidt2013}. 
A chain decomposition is a special ear decomposition~\cite{Lovasz1985}. 
We use chain decompositions to certify 3-edge-connectivity in linear time. Thus, chain decompositions form a common framework for certifying $k$-vertex- and $k$-edge-connectivity for $k \le 3$ in linear time. We use many techniques from~\cite{Schmidt2013}, but in a simpler form. Hence our paper may also be used as a gentle introduction to the 3-vertex-connectivity algorithm in~\cite{Schmidt2013}.

We state Mader's result in Section~\ref{preliminaries} and introduce chain decompositions in Section~\ref{sec:chain decomposition}. In Section~\ref{Chains as Mader-paths} we show that chain decompositions can be used as a basis for Mader's construction. This immediately leads to an $O((m + n) \log(m + n))$ certifying algorithm (Section~\ref{A First Algorithm}). The linear time algorithm is then presented in Sections~\ref{A Classification of Chains} and~\ref{sec:linear time alg}. In Section~\ref{sec:verify mader} we discuss the verification of Mader construction sequences.

The mincuts in a graph can be represented succinctly by a cactus representation~\cite{Dinits-Karzanov-Lomonosov,Nagamochi-Ibaraki-Book,Fleiner2009}; see Section~\ref{Sec: Cactus Representation}. The 3-edge-connected components of a graph are the maximal subsets of the vertex set such that any two vertices in the subset are connected by three edge-disjoint paths. These paths are not necessarily contained in the subset. 

Our algorithm can be used to turn any algorithm for computing 3-edge-connected components into a certifying algorithm for computing 3-edge-connected components and the cactus representation of 2-cuts (Section~\ref{Sec: Cactus Representation}). An extension of our algorithm computes the 3-edge-connected components and the cactus representation directly (Section~\ref{Computing a Cactus}).  A similar technique can be used to extend the 3-vertex-connectivity algorithm in~\cite{Schmidt2013} to an algorithm for computing 3-vertex-connected components.

\section{Related Work}
Deciding $3$-edge-connectivity is a well researched problem, with applications in diverse fields such as bioinformatics~\cite{dehne2006cluster} and quantum chemistry~\cite{corcoran2006perfect}. Consequently, there are many different linear time solutions known~\cite{Galil1991,Nagamochi1992a,Taoka1992,Tsin2007,Tsin2009,Nagamochi-Ibaraki-Book}. None of them is certifying. All but the first algorithm also compute the 3-edge-connected components. The cactus representation of a 2-edge-connected, but not 3-edge-connected graph $G$, can be obtained from $G$ by repeatedly contracting the 3-edge-connected components to single vertices~\cite{Nagamochi-Ibaraki-Book}.

The paper~\cite{McConnell2011} is a recent survey on certifying algorithms. For a linear time certifying algorithm for 3-vertex-connectivity, see~\cite{Schmidt2013} (implemented in~\cite{Neumann2011}). For general $k$, there is a randomized certifying algorithm for $k$-vertex connectivity in~\cite{Linial1988} with expected running time $O(kn^{2.5} + nk^{3.5})$. There is a non-certifying algorithm~\cite{Karger2000} for deciding $k$-edge-connectivity in time $O(m \log^3{n})$ with high probability.

In~\cite{Galil1991}, a linear time algorithm is described that transforms a graph $G$ into a graph $G'$ such that $G$ is 3-edge-connected if and only if $G'$ is 3-vertex-connected. Combined with this transformation, the certifying 3-vertex-connectivity algorithm from~\cite{Schmidt2013} certifies 3-edge-connectivity in linear time. However, that algorithm is much more complex than the algorithm given here. Moreover, we were unable to find an elegant method for transforming the certificate obtained for the 3-vertex-connectivity of $G'$ into a certificate for 3-edge-connectivity of $G$.

\section{Preliminaries}\label{preliminaries}

We consider finite undirected graphs $G$ with $n = |V(G)|$ vertices, $m = |E(G)|$ edges, no self-loops, and minimum degree three, and use standard graph-theoretic terminology from~\cite{Bondy2008}, unless stated otherwise. We use $\edge{u}{v}$ to denote an edge with endpoints $u$ and $v$.

A set of edges that leaves a disconnected graph upon deletion is called \emph{edge cut}. For $k \geq 1$, let a graph $G$ be \emph{$k$-edge-connected} if $n \ge 2$ and there is no edge cut $X \subseteq E(G)$ with $|X| < k$. Let $v \rightarrow_G w$ denote a path $P$ between two vertices $v$ and $w$ in $G$ and let $s(P)=v$ and $t(P)=w$ be the source and target vertex of $P$, respectively (as $G$ is undirected, the direction of $P$ is given by $s(P)$ and $t(P)$). Every vertex in $P \setminus \{s(P),t(P)\}$ is called an \emph{inner vertex} of $P$ and every vertex in $P$ is said to \emph{lie on} $P$.

Let $T$ be an undirected tree rooted at vertex $r$. For two vertices $x$ and $y$ in $T$, $x$ is an \emph{ancestor} of $y$ and $y$ is a \emph{descendant} of $x$ if $x \in V(r \rightarrow_T y)$, where $V(r \rightarrow_T y)$ denotes the vertex set of the path from $r$ to $y$ in $T$. If additionally $x \neq y$, $x$ is a \emph{proper} ancestor and $y$ is a \emph{proper} descendant. We write $x \le y$ ($x<y$) if $x$ is an ancestor (proper ancestor) of $y$. The parent $p(v)$ of a vertex $v$ is its immediate proper ancestor. The parent function is undefined for $r$. Let $K^m_2$ be the graph on $2$ vertices that contains exactly $m$ parallel edges.

Let \emph{subdividing an edge} $\edge{u}{v}$ of a graph $G$ be the operation that replaces $\edge{u}{v}$ with a path $uzv$, where $z$ was not previously in $G$. All 3-edge-connected graphs can be constructed using a small set of operations starting from a $K^3_2$.

\begin{figure}[htbp]
\centering
\includegraphics[width=0.8\linewidth]{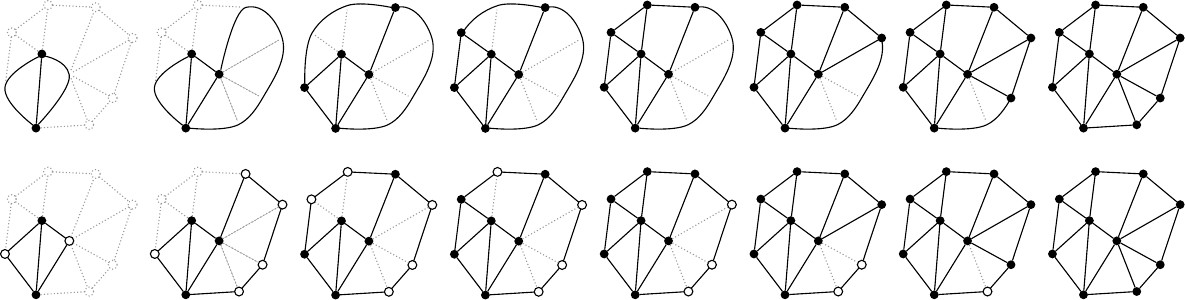}
\caption{Two ways of constructing the $3$-edge-connected graph shown in the rightmost column. The upper row shows the construction according to Theorem~\ref{Mader}. The lower row shows the construction according to Corollary~\ref{Mader-Cor}. Branch (non-branch) vertices are depicted as filled (non-filled) circles. The black edges exist already, while dotted gray vertices and edges do not exist yet. }
\label{fig:mader_construction}
\end{figure}

\begin{theorem}[Mader~\cite{Mader1978}]\label{Mader}
Every $3$-edge-connected graph (and no other graph) can be constructed from a $K_2^3$ using the following three operations:
\begin{compactitem}
\item Adding an edge (possibly parallel or a loop).
\item Subdividing an edge $\edge{x}{y}$ and connecting the new vertex to any existing vertex.
\item Subdividing two distinct edges $\edge{w}{x}$, $\edge{y}{z}$ and connecting the two new vertices.
\end{compactitem}
\end{theorem}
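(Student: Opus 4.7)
The statement has two directions.

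For the forward direction, I would show by induction on the number of operations applied that each graph built from $K_2^3$ is 3-edge-connected. The base $K_2^3$ has three edge-disjoint $uv$-paths between its two vertices. Adding an edge cannot reduce edge-connectivity. For the two subdivision operations the new vertex (or vertices) have degree exactly $3$, so no 2-edge-cut can isolate them; therefore any hypothetical 2-edge-cut in the larger graph would contract to a cut of at most the same size in the old graph, contradicting the inductive hypothesis.

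For the backward direction I would induct on $|V(G)|+|E(G)|$. The base case is $G = K_2^3$, which is forced once the measure is minimum among 3-edge-connected graphs (since the minimum degree is at least $3$ and $n \geq 2$). For the inductive step, given a 3-edge-connected $G \neq K_2^3$, the goal is to identify a \emph{reducible structure}, i.e.\ an inverse to one of the three operations that preserves 3-edge-connectivity:
\begin{compactitem}
\item delete an edge $e$ if $G-e$ is still 3-edge-connected (inverse of operation 1);
\item suppress a degree-3 vertex $v$ with pairwise distinct neighbors $x,y,w$, replacing $v$ and its incident edges by a single new edge $xy$, provided the result is 3-edge-connected (inverse of operation 2);
\item delete an edge $vw$ joining two adjacent degree-3 vertices and smooth both endpoints, provided the result is 3-edge-connected (inverse of operation 3).
\end{compactitem}
Applying the inductive hypothesis to the reduced graph and then the corresponding forward operation recovers $G$.

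The main obstacle is the structural claim that at least one reduction is always available in a 3-edge-connected graph other than $K_2^3$. If operation (1) cannot be inverted, then every edge $e$ is \emph{critical}, in the sense that $e$ lies in some 3-edge-cut whose removal of $e$ disconnects $G$. I would study the family of these 3-edge-cuts via the submodularity of the cut function: by picking a critical cut $(S,V\setminus S)$ with $|S|$ minimum, crossing arguments limit $S$ severely. If $|S|=1$, its vertex $v$ has degree $3$ and suppression (reduction 2) succeeds unless the three neighbors of $v$ are arranged so that suppression creates a new 2-edge-cut; this degenerate case forces an adjacent degree-$3$ vertex along the offending cut, and reduction (3) applies there instead. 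This case analysis on minimum critical cuts is the combinatorial heart of Mader's argument and is where I would need to invest the most care, essentially following~\cite{Mader1978}.
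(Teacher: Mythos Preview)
The paper does not supply its own proof of Theorem~\ref{Mader}; it is quoted verbatim from Mader~\cite{Mader1978} and used as a black box. What the paper \emph{does} prove is the constructive variant (Corollary~\ref{Mader-Cor}) for the backward direction, and it does so by an entirely different mechanism than your sketch: instead of analysing critical edges and minimum $3$-cuts via submodularity, it fixes a DFS tree, builds the chain decomposition, and shows (Theorem~\ref{order exists}) that in a $3$-edge-connected graph the chains can be added greedily, parent-first, as Mader-paths starting from the $K_2^3$-subdivision $C_1\cup C_2$. Thus the paper's ``proof'' of the existence of a Mader sequence is algorithmic and path-based, not cut-based.

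Your proposal is a faithful outline of Mader's original argument and is not wrong as a strategy, but it is essentially a pointer back to~\cite{Mader1978}: you explicitly defer the nontrivial part (the case analysis showing that some inverse operation is always available when $G\neq K_2^3$) to that reference. The forward direction is fine. For the backward direction, the step ``if suppression of a degree-$3$ vertex fails then reduction~(3) applies at an adjacent degree-$3$ vertex'' is precisely where the real work lies, and your sketch does not yet contain the argument---one has to control how the three neighbours of $v$ sit relative to the offending $2$-cut after suppression, and this is more delicate than a single sentence suggests. So as written this is a correct plan, not yet a proof; and in any case it bears no resemblance to the chain-decomposition route the present paper takes.
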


A subdivision $G'$ of a graph $G$ is a graph obtained by subdividing edges of $G$ zero or more times. The \emph{branch vertices} of a subdivision are the vertices with degree at least three (we call the other vertices \emph{non-branch}-vertices) and the \emph{links} of a subdivision are the maximal paths whose inner vertices have degree two. If $G$ has no vertex of degree two, the links of $G'$ are in one-to-one correspondence to the edges of $G$. Theorem~\ref{Mader} readily generalizes to subdivisions of 3-edge-connected graphs.

\begin{corollary}\label{Mader-Cor}
Every subdivision of a $3$-edge-connected graph (and no other graph) can be constructed from a
subdivision of a $K_2^3$ using the following three operations:
\begin{compactitem}
\item Adding a path connecting two branch vertices.
\item Adding a path connecting a branch vertex and a non-branch vertex.
\item Adding a path connecting two non-branch vertices lying on distinct links.
\end{compactitem}
In all three cases, the inner vertices of the path added are new vertices.
\end{corollary}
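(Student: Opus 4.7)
The plan is to prove both directions of Corollary~\ref{Mader-Cor} by a direct reduction to Theorem~\ref{Mader}, using the correspondence between a graph $H$ with minimum degree $3$ and any of its subdivisions: the branch vertices of the subdivision are precisely the vertices of $H$, while the links correspond to the edges of $H$. I will first observe the crucial invariant that every graph $H$ appearing in a Mader construction from $K_2^3$ has minimum degree $\geq 3$: $K_2^3$ has this property, and each of Mader's three operations can only add to existing degrees (adding an edge) or introduce new vertices of degree $3$ (the two subdivision operations). So in every subdivision of such an $H$, vertices of $H$ are branch vertices and newly subdivided vertices are non-branch vertices on links.

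For the ``only if'' direction, given a subdivision $G'$ of a $3$-edge-connected graph $G$, I would apply Theorem~\ref{Mader} to get a sequence $K_2^3 = G_0, G_1, \ldots, G_k = G$. Since every edge of $G$ corresponds to a unique link of $G'$, I would define $G'_i$ to be the union of the links of $G'$ corresponding to edges of $G_i$; then $G'_0$ is a subdivision of $K_2^3$ and $G'_k = G'$. For the step $G_i \to G_{i+1}$ I would translate as follows: if Mader adds an edge $uv$, I add the corresponding link in $G'$ between the branch vertices $u,v$ (operation~1); if Mader subdivides $xy$ at a new vertex $z$ and adds $zw$, then $z$ appears in $G'$ as a non-branch vertex on the link for $xy$, so I add the sublink from $z$ to the branch vertex $w$ (operation~2); if Mader subdivides $wx$ and $yz$ at new vertices $u,v$ and adds $uv$, the two subdivision points lie on the two distinct links for $wx$ and $yz$ in $G'_i$, so I add the link for $uv$ (operation~3). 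A small check is needed that $u$ and $v$ in the last case really are non-branch in $G'_i$ (they are, since they are not yet endpoints of any added link), and that the inner vertices of the added path are genuinely new (they are, being inner vertices of the link in $G'$).

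For the ``if'' direction, I would induct on the number of operations, maintaining the invariant that if $H'$ is the current subdivision then its \emph{suppression} $H$ (obtained by contracting every maximal degree-$2$ path to a single edge) is $3$-edge-connected and $H'$ is a subdivision of $H$. The base case is that any subdivision of $K_2^3$ suppresses to $K_2^3$, which is $3$-edge-connected. For the inductive step, each of the three operations corresponds exactly to one of Mader's operations on the suppressed graph: adding a path between branch vertices becomes adding an edge; adding a path between a branch vertex and a non-branch vertex on a link $L$ becomes subdividing the edge corresponding to $L$ and connecting the new vertex to an existing vertex; adding a path between non-branch vertices on distinct links becomes subdividing two distinct edges and joining the new vertices. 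By Theorem~\ref{Mader}, the suppressed graph after each step is $3$-edge-connected.

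The only real obstacle is keeping track of which vertices count as branch/non-branch at which stage, since a vertex can be non-branch in an intermediate $G'_i$ and become a branch vertex later; the min-degree-$3$ invariant handles this cleanly, making the whole proof essentially a bookkeeping exercise on top of Theorem~\ref{Mader}.
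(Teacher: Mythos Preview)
The paper does not actually prove this corollary; it simply remarks that Theorem~\ref{Mader} ``readily generalizes to subdivisions of 3-edge-connected graphs'' and states the result. Your argument supplies exactly the natural details behind that remark---passing between a graph and its suppression/subdivision and matching the three operations one-to-one---and it is correct.

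One minor point of phrasing in the ``only if'' direction: when you write ``the links of $G'$ corresponding to edges of $G_i$'', note that the edges of an intermediate $G_i$ are in general not edges of $G$ (they may be subdivided at later steps of the Mader sequence), so they do not literally correspond to links of $G'$. What you want---and what your operation-by-operation translation makes clear you intend---is the path in $G'$ obtained by tracing an edge of $G_i$ forward through all subsequent subdivisions to a path in $G$, and then replacing each edge of that path by its link in $G'$. With that reading your $G'_i$ is a well-defined subdivision of $G_i$ (with branch vertices exactly $V(G_i)$, by your min-degree-$3$ invariant), the added paths have genuinely new inner vertices, and the rest of the argument goes through as written.
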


Each path that is added to a graph $H$ in the process of Corollary~\ref{Mader-Cor} is called a \emph{Mader-path} (\emph{with respect to $H$}). Note that an ear is always a Mader-path unless both endpoints lie on the same link.

Figure~\ref{fig:mader_construction} shows two constructions of a $3$-edge-connected graph, one according to Theorem~\ref{Mader} and one according to Corollary~\ref{Mader-Cor}. In this paper, we show how to find the Mader construction sequence according to Corollary~\ref{Mader-Cor} for a 3-edge-connected graph in linear time. Such a construction is readily turned into one according to Theorem~\ref{Mader}.

\section{Chain Decompositions}\label{sec:chain decomposition}

We use a very simple decomposition of graphs into cycles and paths. The decomposition was previously used for linear-time tests of $2$-vertex- and $2$-edge-connectivity~\cite{Schmidt2013a} and $3$-vertex-connectivity~\cite{Schmidt2013}. In this paper we show that it can also be used to find a Mader's construction for a $3$-edge-connected graph. We define the decomposition algorithmically; a similar procedure that serves for the computation of low-points can be found in~\cite{Ramachandran1993}.

Let $G$ be a connected graph without self-loops and let $T$ be a depth-first search tree of $G$. Let $r$ be the root of $T$. We orient tree-edges towards the root and back-edges away from the root, i.e., $v < u$ for an oriented tree-edge $\edge uv$ and $x < y$ for an oriented back-edge $\edge xy$. 

We decompose 
$G$ into a set $\C = \{C_1,\ldots,C_{|\C|}\}$ of cycles and paths, called \emph{chains}, by applying the following procedure for each vertex $v$ in the order in which they were discovered during the DFS:
First, we declare $v$ visited (initially, no vertex is visited), if not already visited before. Then, for every back-edge $\edge{v}{w}$, we traverse the path $w \rightarrow_{T} r$ until a vertex $x$ is encountered that was visited before; $x$ is a descendant of $v$. The traversed subgraph $\edge{v}{w} \cup (w \rightarrow_{T} x)$ forms a new \emph{chain} $C$ with $s(C)=v$ and $t(C)=x$. All inner vertices of $C$ are declared visited. Observe that $s(C)$ and $t(C)$ are already visited when the construction of the chain starts.

Figure~\ref{fig:chains} illustrates these definitions. Since every back-edge defines one chain, there are precisely $m-n+1$ chains. We number the chains in the order of their construction.
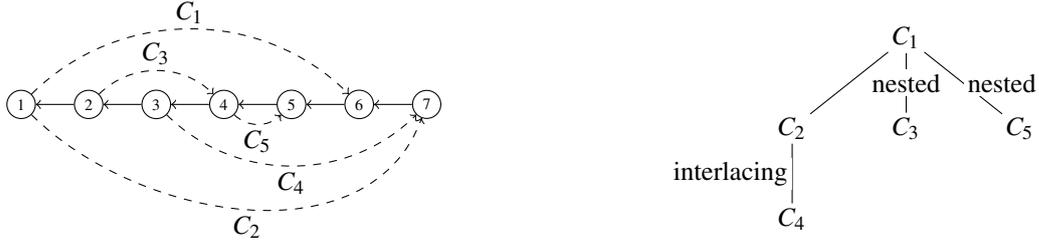
\begin{figure}
\centering
\begin{minipage}{0.4\linewidth}
\begin{tikzpicture}[node distance = 0.5cm, inner sep = 2pt]
\tikzstyle{every node}=[circle, draw]
\tiny
\node (1) {1};
\foreach \i in {2,3,4,5,6,7} {
	\pgfmathtruncatemacro{\x}{\i-1};
	\node (\i) [right= of \x] {\i};
	\draw[->] (\i) -- (\x);
}
\small
\tikzstyle{every node}=[]
\draw[->, dashed] (1)
	edge [out = 45, in=-225] node [above] {$C_1$} (6)
	edge [out=-45, in = 250] node [below] {$C_2$} (7);
\draw[->, dashed] (2) edge [out = 45, in=-225] node [above] {$C_3$} (4);
\draw[->, dashed] (3) edge [out = -45, in = 225] node [below] {$C_4$} (7);
\draw[->, dashed] (4) edge [out = -45, in = 225] node [below] {$C_5$} (5);
\end{tikzpicture}
\end{minipage}
\hfill
\begin{minipage}{0.4\linewidth}
\begin{tikzpicture}[inner sep= 1.5pt]
\small
 \node {$C_1$}
 [level distance = 12mm]
    child {
    	node {$C_2$}
    	child {
        	node {$C_4$}
        	edge from parent node[left, fill = white] {interlacing}
        }
        edge from parent node[left] {}
    }
    child {
    	node {$C_3$}
    	edge from parent node[midway, fill = white] {nested}
    }
    child {
    	node {$C_5$}
    	edge from parent node[right, fill = white] {nested}
    };
\end{tikzpicture}
\end{minipage}
\caption{
The left side of the figure shows a DFS tree with a chain decomposition; tree-edges are solid and back-edges are dashed. $C_1$ is (\edge{1}{6},\edge{6}{5},\edge{5}{4},\edge{4}{3},\edge{3}{2},\edge{2}{1}), $C_2$ is (\edge{1}{7},\edge{7}{6}), $C_3$ is (\edge{2}{4}), $C_4$ is (\edge{3}{7}), and $C_5$ is (\edge{4}{5}). $C_3$ and $C_5$ are nested children of $C_1$ and $C_4$ is an interlacing child of $C_2$. Also, $s(C_4)$ s-belongs to $C_1$.} 
\label{fig:chains}
\end{figure}

We call $\C$ a \emph{chain decomposition}. It can be computed in time $O(n+m)$. For 2-edge-connected graphs the term decomposition is justified by Lemma~\ref{lem:2connectivity}.

\begin{lemma}[\cite{Schmidt2013a}]\label{lem:2connectivity}
Let $\C$ be a chain decomposition of a graph $G$. Then $G$ is 2-edge-connected if and only if $G$ is connected and the chains in $\C$ partition $E(G)$.
\end{lemma}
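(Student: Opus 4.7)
The plan is to prove the two directions separately; the non-trivial work is showing (i) that $2$-edge-connectivity forces every tree-edge into some chain, and (ii) that the partition property makes $\C$ an ear decomposition of $G$.

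For the forward direction, I assume $G$ is $2$-edge-connected. Each back-edge is in exactly one chain by construction, and no tree-edge occurs twice: once a walk climbs a tree-edge $(p(c), c)$ from its lower endpoint $c$, the vertex $c$ is declared visited (as the starting $w$ or as an inner vertex) and blocks any later walk from passing through $(p(c), c)$. The essential claim is that every tree-edge $e = (p(u), u)$ belongs to some chain. Suppose not. Because $e$ is not a bridge there is a back-edge $(s, w)$ with $s < u \le w$; I pick one with smallest $s$ in DFS order and let $C$ be its chain. I then argue no vertex of $T_u$ (the subtree rooted at $u$) is visited when $C$ is built. Every earlier chain $C'$ has source $s(C') < s < u$, so $s(C') \notin T_u$; the minimality of $s$ forbids the back-edge endpoint of $C'$ from lying in $T_u$; and since that endpoint is outside $T_u$, the ancestor path along which $C'$ walks stays outside $T_u$. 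Ordinary step-declarations only visit vertices with DFS number $\le s < u$, so also outside $T_u$. Therefore the walk of $C$ from $w$ finds no visited vertex in $T_u$ and must cross $e$, giving $e \in C$ — a contradiction.

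For the reverse direction, I assume $G$ is connected and $\C$ partitions $E(G)$, and show $\C$ is an ear decomposition (cycle $C_1$ plus open or closed ears $C_2, C_3, \ldots$); $2$-edge-connectivity then follows by the classical Whitney/Robbins characterization. $C_1$ is a cycle through the root $r$ because $r$ is the only visited vertex initially. For $i \ge 2$, any inner vertex of $C_i$ must be new (otherwise the walk would have stopped there), so it lies in no earlier chain. Also $t(C_i)$, a proper descendant of $s(C_i)$, cannot yet have been declared visited at its own DFS step and so must appear in some earlier chain. It remains to show $s(C_i)$ is also in some earlier chain: if $s(C_i) = r$, this holds since $r \in V(C_1)$; otherwise consider the tree-edge $f = (p(s(C_i)), s(C_i))$, which lies in no earlier chain (none reaches $s(C_i)$), not in $C_i$ (its walk stays in the subtree rooted at $s(C_i)$), and not in any later $C_j$, since $t(C_j) \ge s(C_j) > s(C_i)$ keeps the walk of $C_j$ from reaching $p(s(C_i))$ — contradicting the partition assumption.

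The main obstacle is in the forward direction: rigorously ruling out that some earlier chain has already visited part of $T_u$ before the minimally chosen jumping back-edge is processed. The key ingredient is the elementary tree observation that the ancestor path of any vertex outside $T_u$ stays outside $T_u$, combined with the minimality of $s$. Once this is in place, both implications follow without further difficulty.
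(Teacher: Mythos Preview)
The paper does not prove this lemma; it is quoted from \cite{Schmidt2013a} and used as a black box. Your argument is essentially correct and fills that gap, so there is nothing in the paper to compare against. Two small imprecisions are worth tightening:

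\textbf{Forward direction.} ``Pick one with smallest $s$ in DFS order'' does not by itself justify ``every earlier chain $C'$ has $s(C')<s$'': there may be several back-edges from the same minimal source $s$ into $T_u$, and some of their chains may be processed before the one you picked. Choose instead the \emph{first-processed} back-edge that jumps over $e$. Then any earlier chain with source $s$ has its back-edge landing outside $T_u$, and your ancestor-path observation (the walk of such a chain never enters $T_u$) applies unchanged.

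\textbf{Reverse direction.} In ``$t(C_j)\ge s(C_j)>s(C_i)$'' you are implicitly comparing DFS numbers, whereas the paper reserves $<$ for the ancestor relation; for $j>i$ the source $s(C_j)$ need not be a descendant of $s(C_i)$. The intended statement is: for $j>i$ the DFS number of $s(C_j)$ is at least that of $s(C_i)$, hence the DFS number of $t(C_j)$ is at least that of $s(C_i)$, so the walk of $C_j$ cannot reach $p(s(C_i))$. Also, ``$t(C_i)$, a proper descendant of $s(C_i)$'' overstates Lemma~\ref{facts about chain decomposition}(\ref{s not above t}); equality is possible (closed ear), but then your separate argument for $s(C_i)$ covers the single endpoint and nothing is lost.
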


Since the condition of Lemma~\ref{lem:2connectivity} is easily checked during the chain decomposition, we assume from now on that $G$ is 2-edge-connected. Then $\C$ partitions $E(G)$ and the first chain $C_1$ is a cycle containing $r$ (since there is a back-edge incident to $r$). We say that $r$ \emph{strongly belongs (s-belongs)} to the first chain and any vertex $v \neq r$ \emph{s-belongs} to the chain containing the edge $\edge{v}{p(v)}$. We use s-belongs instead of belongs since a vertex can belong to many chains when chains are viewed as sets of vertices.

We can now define a parent-tree on chains. The first chain $C_1$ is the root. For any chain $C \neq C_1$, let the \emph{parent} $p(C)$ of $C$ be the chain to which $t(C)$ s-belongs. We write $C \le D$ ($C<D$) for chains $C$ and $D$ if $C$ is an ancestor (proper ancestor) of $D$ in the parent-tree on chains.

The following lemma summarizes important properties of chain decompositions.

\begin{lemma}\label{facts about chain decomposition}
Let $\{C_1,\ldots,C_{m-n+1}\}$ be a chain decomposition of a 2-edge-connected graph $G$ and let $r$ be the root of the DFS-tree. Then
\begin{compactenum}[(1)]
\item \label{s not above t} For every chain $C_i$, $s(C_i) \le t(C_i)$.
\item \label{properties of parent chain} Every chain $C_i$, $i \ge 2$, has a parent chain $p(C_i)$. We have $s(p(C_i)) \le s(C_i)$ and $p(C_i) = C_j$ for some $j < i$.
\item \label{t-parent precedes t} For $i \ge 2$: If $t(C_i) \not= r$, $t(p(C_i)) < t(C_i)$. If $t(C_i) = r$, $t(p(C_i)) = t(C_i)$.
\item \label{parent on chains follows parent on nodes} If $u \le v$, $u$ s-belongs to $C$, and $v$ s-belongs to $D$ then $C \le D$.
\item \label{t(D) above u} If $u \le t(D)$ and $u$ s-belongs to $C$, then $C \le D$.
\item \label{s(C) s-belongs to earlier chain} For $i \ge 2$: $s(C_i)$ s-belongs to a chain $C_j$ with $j < i$.
\end{compactenum}
\end{lemma}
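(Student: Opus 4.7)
The plan is to distil two observations from the chain-construction procedure and then deduce the six items by case analysis and one short induction. Observation (i) (structure): a chain $C$ consists of the back-edge $\{s(C),w_C\}$ followed by the tree path going up from $w_C$ to $t(C)$, so for $v\ne r$, ``$v$ s-belongs to $C$'' is equivalent to ``$v$ is an inner vertex of $C$''; in particular $s(C)$ and $t(C)$ never s-belong to $C$ (except that $r$ s-belongs to $C_1$ by definition). Observation (ii) (timing): when $C$'s construction begins, $s(C)$ has just been declared visited and $t(C)$ was already visited, while the inner vertices of $C$ are declared visited precisely during $C$'s construction. Combined with the fact that DFS processes vertices in ancestor-before-descendant order, this yields: for a non-root $v$, the chain to which $v$ s-belongs is the unique chain whose construction first declares $v$ visited, and that chain is constructed strictly before any chain whose source is $v$.

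Items (1), (2), (3), (6) are essentially mechanical from these. (1) is immediate from the upward walk. For (2), suppose $t(C_i)\ne r$; by (ii), $t(C_i)$ s-belongs to some $C_j$ constructed before $C_i$, so $j<i$, and by (i) applied to $C_j$, $s(C_j)\le t(C_j)<t(C_i)$. Both $s(C_j)$ and $s(C_i)$ are then ancestors of $t(C_i)$ by (1), hence comparable on the tree; if $s(C_i)<s(C_j)$ held, $s(C_i)$'s DFS turn would precede $s(C_j)$'s and force $i<j$, contradicting $j<i$, so $s(C_j)\le s(C_i)$. The case $t(C_i)=r$ is handled by $C_1$ being a cycle at $r$. (3) drops out of (i) applied to $C_j$: $t(C_j)\le p(t(C_i))<t(C_i)$ in the non-root case, and $t(C_1)=r$ otherwise. (6) is (ii) for $v=s(C_i)$: if $s(C_i)\ne r$, the chain to which $s(C_i)$ s-belongs is constructed strictly before $s(C_i)$'s DFS turn, hence before $C_i$; if $s(C_i)=r$, the s-belonging chain is $C_1$.

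I would handle (4) and (5) together. For (4), induct along the tree path from $v$ to $u$ using the step: if $v$ s-belongs to $D$ and $v\ne r$, then $p(v)$ s-belongs to some $E\le D$. Indeed $p(v)$ lies one step up on the tree portion of $D$, so either $p(v)$ is still an inner vertex of $D$ (giving $E=D$), or $p(v)=t(D)$, in which case $E=p(D)<D$, with the sub-case $t(D)=r$ giving $E=C_1\le D$. Item (5) then falls out: for $D\ne C_1$ apply (4) to the pair $(u,t(D))$ using that $t(D)$ s-belongs to $p(D)$, giving $C\le p(D)<D$; for $D=C_1$ only $u=r$ is possible and $C=C_1=D$. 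The main obstacle I anticipate is (2), where one must simultaneously extract $j<i$ from timing and $s(C_j)\le s(C_i)$ from DFS ancestor order, and verify compatibility with the edge case $s(C_j)=s(C_i)$; the induction in (4) is also easy to botch if one forgets the branch $p(v)=t(D)=r$, but writing (i) out at the start makes such slips visible.
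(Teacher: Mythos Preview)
Your proposal is correct and follows essentially the same route as the paper. The paper dispatches (1)--(3) by pointing back to the construction, proves (4) by exactly your induction on the tree path (with the same case split $p(v)$ inner to $D$ versus $p(v)=t(D)$), derives (5) from (4) via $t(D)$ s-belonging to $p(D)$, and proves (6) by producing a back-edge from a proper ancestor of $s(C_i)$ over $s(C_i)$---which is just your observation (ii) unpacked. Your write-up is more explicit about the timing argument and about where 2-edge-connectedness enters (namely, to guarantee that every non-root vertex is first visited as an inner vertex of some chain), which is a good thing to keep.
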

\begin{proof}
(\ref{s not above t}) to (\ref{t-parent precedes t}) follow from the discussion preceding the Lemma and the construction of the chains. We turn to (\ref{parent on chains follows parent on nodes}). Consider two vertices $u$ and $v$ with $u \le v$ and let $u$ s-belong to $C$ and let $v$ s-belong to $D$. Then $C \le D$, as the following simple induction on the length of the tree path from $u$ to $v$ shows. If $u = v$, $C=D$ by the definition of s-belongs. So assume $u$ is a proper ancestor of $v$. Since $v$ s-belongs to $D$, by definition $v \not= t(D)$ and $\edge{v}{p(v)}$ is contained in $D$. Let $D'$ be the chain to which $p(v)$ s-belongs. By induction hypothesis, $C \le D'$. Also, either $D = D'$ (if $p(v)$ s-belongs to $D$) or $D' = p(D)$ (if $p(v) = t(D)$) and hence $p(v)$ s-belongs to $p(D)$. In either case $C \le D$.

Claim (\ref{t(D) above u}) is an easy consequence of (\ref{parent on chains follows parent on nodes}). If $t(D) = r$, $C = C_1$, and the claim follows. If $t(D) \not= r$, $t(D)$ s-belongs to $p(D)$. Thus, $C \le p(D)$ by  (\ref{parent on chains follows parent on nodes}).

The final claim is certainly true for each $C_i$ with $s(C_i) = r$. So assume $s(C_i) > r$ and let $y = p(s(C_i))$. Since $G$ is 2-edge-connected, there is a back-edge $\edge uv$ with $u \le y$ and $s(C_i) \le v$. It induces a chain $C_k$ with $k < i$ and hence $s(C_i) y$ is contained in a chain $C_j$ with $j \le k$. \qed
\end{proof}

\section{Chains as Mader-paths}\label{Chains as Mader-paths}

We show that, assuming that the input graph is 3-edge-connected, there are two chains that form a subdivision of a $K_2^3$, and that the other chains of the chain decomposition can be added one by one such that each chain is a Mader-path with respect to the union of the previously added chains. We will also show that chains can be added parent-first, i.e., when a chain is added, its parent was already added. In this way the current graph $G_c$ consisting of the already added chains is always \emph{parent-closed}. We will later show how to compute this ordering efficiently. We will first give an $O((n + m) \log (n + m))$ algorithm and then a linear time algorithm.


Using the chain decomposition, we can identify a $K_2^3$ subdivision in the graph as follows. We may assume that the first two back-edges explored from $r$ in the DFS have their other endpoint in the same subtree $T'$ rooted at some child of $r$. The first chain $C_1$ forms a cycle. The vertices in $C_1 \setminus r$ are then contained in $T'$. By assumption, the second chain is constructed by another back-edge that connects $r$ with a vertex in $T'$. If there is no such back-edge, the tree edge connecting $r$ and the root of $T'$ and the back edge from $r$ into $T'$ form a $2$-edge cut. 
Let $x = t(C_2)$. Then $C_1\cup C_2$ forms a $K_2^3$ subdivision with branch vertices $r$ and $x$. The next lemma derives properties of parent-closed unions of chains.

\begin{lemma}\label{parent-closed union of chains}
Let $G_c$ be a parent-closed union of chains that contains $C_1$ and
$C_2$. Then
\begin{compactenum}[(1)]
\item For any vertex $v \not= r$ of $G_c$, the edge $\edge{v}{p(v)}$ is contained in $G_c$, i.e., the set of vertices of $G_c$ is a parent-closed subset of the DFS-tree.
\item $s(C)$ and $t(C)$ are branch vertices of $G_c$ for every chain $C$ contained in $G_c$.
\item Let $C$ be a chain that is not in $G_c$ but a child of some chain in $G_c$. Then $C$ is an ear with respect to $G_c$ and the path $t(C)\rightarrow_T s(C)$ is contained in $G_c$. $C$ is a Mader-path (i.e., the endpoints of $C$ are not inner vertices of the same link of $G_c$) with respect to $G_c$ if and only if there is a branch vertex on $t(C)\rightarrow_T s(C)$.
\end{compactenum}
\end{lemma}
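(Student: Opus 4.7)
The plan is to prove the three parts in order, using (1) as a foundation for (2) and (3). The key observation is that every non-root vertex $v$ is an internal vertex of exactly one chain, namely $D_v$, the chain containing $\edge{v}{p(v)}$ (equivalently, the chain $v$ s-belongs to). Combined with Lemma~\ref{facts about chain decomposition}, this lets me translate positional information about $v$ into membership assertions about chains in $G_c$.

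For (1), starting from $v \in G_c \setminus \{r\}$, I aim to show the slightly stronger statement $D_v \in G_c$. Since $v$ lies on some chain $C \in G_c$, I split into three subcases according to $v$'s position on $C$. If $v$ is internal to $C$, then $C = D_v$. If $v = t(C)$, then $D_v = p(C) \in G_c$ by parent-closedness. If $v = s(C)$, then Lemma~\ref{facts about chain decomposition}(\ref{s not above t}) gives $s(C) \le t(C)$ and Lemma~\ref{facts about chain decomposition}(\ref{parent on chains follows parent on nodes}) yields $D_v \le p(C)$ in the chain-tree; iterated parent-closedness then puts $D_v$ into $G_c$.

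For (2), I count edges at each endpoint of $C$ in $G_c$, using that chains partition $E(G)$ (Lemma~\ref{lem:2connectivity}) to ensure the counted edges are distinct. For $C \neq C_1$: the chain $p(C) \in G_c$ has two edges at $t(C)$ and $C$ contributes one more (the tree edge to the child of $t(C)$ on $C$); symmetrically, the chain $D_{s(C)} \in G_c$ supplied by (1) (distinct from $C$ because $s(C)$ is not internal to $C$) has two edges at $s(C)$ and $C$'s back-edge adds a third. The degenerate case $C = C_1$ reduces to showing that $r$ has degree at least three in $C_1 \cup C_2 \subseteq G_c$, which follows from the back-edges of $C_1$ and $C_2$ together with a tree edge of $C_1$'s cycle at $r$.

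For (3), the endpoints of $C$ lie in $G_c$ by (1), and iterating (1) along the path $t(C) \rightarrow_T s(C)$ puts the entire tree path into $G_c$. To see no internal vertex of $C$ lies in $G_c$: any such $v$ s-belongs only to $C$, so placing $v$ into some $G_c$-chain $A$ via the three subcases of (1) would force $A = C$, $p(A) = C$, or $C \le p(A)$, each of which puts $C$ into $G_c$ by parent-closedness, contradicting $C \notin G_c$. Hence $C$ is an ear. For the Mader-path equivalence, a branch vertex $x$ on the tree path either equals one of $s(C), t(C)$ (making that endpoint non-interior to any link) or lies strictly between them (separating them onto different links of $G_c$); conversely, if no vertex of the tree path is a branch vertex, then each has degree exactly two in $G_c$, its two tree edges on the path exhausting its $G_c$-neighborhood, so the whole path is forced into the interior of a single link. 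I expect the converse direction of this last equivalence to be the main obstacle, as one must rule out any back-edge at a degree-two interior vertex of the path---an exclusion that rests on (1) to guarantee that the tree path's edges fully account for all incidences along it.
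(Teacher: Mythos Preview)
Your proposal is correct and follows essentially the same route as the paper: part~(1) by case-splitting on where $v$ sits on a containing chain and invoking parent-closedness, part~(2) by counting three distinct edges at each endpoint using the chain partition, and part~(3) by applying~(1) to get the ear property and the tree path inclusion, then analyzing degrees along $t(C)\rightarrow_T s(C)$ for the Mader-path equivalence. One minor remark: you flag ``no branch vertex $\Rightarrow$ same link'' as the main obstacle, but that direction is the easy one (degree two at each path vertex forces consecutive tree edges into a common link); the step that both you and the paper assert without further argument is rather that a \emph{strictly interior} branch vertex forces $s(C)$ and $t(C)$ onto distinct links---one must rule out the link wrapping from $t(C)$ back to $s(C)$ via the second edge $e$ of $p(C)$ at $t(C)$, which follows because tracing $p(C)$ from $t(C)$ along $e$ hits the branch vertex $s(p(C))$ (part~(2)) before it could reach the degree-two vertex $s(C)$.
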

\begin{proof}
(1): Let $v \not= r$ be any vertex of $G_c$. Let $C$ be a chain in $G_c$ containing the vertex $v$. If $C$ also contains $\edge{v}{p(v)}$ we are done. Otherwise, $v = t(C)$ or $v = s(C)$. In the first case, $v$ s-belongs to $p(C)$, in the second case $v$ s-belongs to some $C'\leq C$ by Lemma~\ref{facts about chain decomposition}.(\ref{parent on chains follows parent on nodes}). Hence, by parent-closedness, $\edge{v}{p(v)}$ is an edge of $G_c$.

(2): Let $C$ be any chain in $G_c$. Since $C_1$ and $C_2$ form a $K^3_2$, $r$ and $x=t(C_2)$ are branch vertices. If $s(C) \not= r$, the edge $\edge{s(C)}{p(s(C))}$ is in $G_c$ by (1), the back-edge $\edge{s(C)}{v}$ inducing $C$ is in $G_c$, and the path $v \rightarrow_T s(C)$ is in $G_c$ by (1). Thus $s(C)$ has degree at least three. If $t(C) \not\in \sset{r,x}$, let $\pC$ be the chain to which $t(C)$ s-belongs, i.e. $\pC$ is the parent of $C$. As $G_c$ is parent-closed $\pC$ is contained in $G_c$. By the definition of s-belongs, $t(C)$ has degree two on the chain $\pC$. Further, it has degree one on the chain $C$. Since chains are edge-disjoint, it has degree at least three in $G_c$.

(3) We first observe that $t(C)$ and $s(C)$ belong to $G_c$. For $t(C)$, this holds since $t(C)$ s-belongs to $p(C)$ and $p(C)$ is part of $G_c$ by assumption. For $s(C)$, this follows from $s(C) \le t(C)$ and (1). No inner vertex $u$ of $C$ belongs to $G_c$, because otherwise the edge $\edge{u}{p(u)}$ would belong to $G_c$ by (1), which implies that $C$ would belong to $G_c$, as $G_c$ is a union of chains. Thus $C$ is an ear with respect to $G_c$, i.e., it is disjoint from $G_c$ except for its endpoints. Moreover, the path $t(C) \rightarrow_T s(C)$ belongs to $G_c$ by (1).

If there is no branch vertex on $t(C)\rightarrow_T s(C)$, the vertices $t(C)$ and $s(C)$ are inner vertices of the same link of $G_c$ and hence $C$ is not a Mader-path with respect to $G_c$. If there is a branch vertex on $t(C)\rightarrow_T s(C)$, the vertices $t(C)$ and $s(C)$ are inner vertices of two distinct links of $G_c$ and hence $C$ is a Mader-path with respect to $G_c$.
\qed
\end{proof}

We can now prove that chains can always be added in parent-first order. For a link $L$, each edge in $L$ that is incident to an end vertex of $L$ is called an \emph{extremal} edge of $L$.

\begin{theorem}\label{order exists} Let $G$ be a graph and let $G_c$ be a parent-closed union of chains such that no child of a chain $C \in G_c$ is a Mader-path with respect to $G_c$ and there is at least one such chain. Then the extremal edges of every link of length at least two in $G_c$ are a 2-cut in $G$.
\end{theorem}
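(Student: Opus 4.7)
The plan is to exhibit a vertex set $S\subseteq V(G)$ with $V_L:=\{v_1,\dots,v_{k-1}\}\subseteq S$, $v_0,v_k\notin S$, and such that the only edges of $G$ between $S$ and $V(G)\setminus S$ are the two extremal edges $e_0:=\edge{v_0}{v_1}$ and $e_{k-1}:=\edge{v_{k-1}}{v_k}$. Since $k\ge 2$ ensures $V_L\neq\emptyset$, this yields that $\{e_0,e_{k-1}\}$ is a 2-cut of $G$.

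I would begin with two preliminary observations. By Lemma~\ref{parent-closed union of chains}.(1), $V(G_c)$ is parent-closed in the DFS tree, so $V(G)\setminus V(G_c)$ decomposes into disjoint DFS subtrees, each rooted at a \emph{boundary child}, that is, a tree-child $c$ of some vertex in $V(G_c)$ with $c\notin V(G_c)$. Moreover, because each inner vertex $v_i$ of $L$ has degree $2$ in $G_c$, its two $G_c$-edges are precisely the two $L$-edges at $v_i$; hence every non-$L$ edge at $v_i$ lies in some chain $C$ not in $G_c$, and $v_i$ must be an endpoint of $C$ (the unique chain that $v_i$ $s$-belongs to contains $\edge{v_i}{p(v_i)}$ and already lies in $G_c$). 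I would then set $S:=V_L\cup\bigcup_c V(T_c)$, where $c$ ranges over the boundary children of the vertices in $V_L$ and $T_c$ is the DFS subtree rooted at $c$; by construction $S\cap V(G_c)=V_L$ and $v_0,v_k\notin S$.

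The main verification is that for every $v\in S$, every edge incident to $v$ either stays inside $S$ or equals $e_0$ or $e_{k-1}$. For $v\in V_L$ the $L$-edges are immediate, and any non-$L$ edge $e$ at $v=v_i$ belongs to some chain $C$ not in $G_c$ with $v_i\in\{s(C),t(C)\}$. Let $B$ be the unique ancestor of $C$ in the parent tree on chains whose parent chain lies in $G_c$. By hypothesis $B$ is not a Mader-path, so by Lemma~\ref{parent-closed union of chains}.(3) the tree path $s(B)\rightarrow_T t(B)$ contains no branch vertex; a short local argument at each intermediate vertex on this path (its two $G_c$-edges are forced to be the two tree edges on the path, because the next vertex on the path $s$-belongs to a chain that must then lie in $G_c$) shows that this whole tree path consists of inner vertices of one link, call it $L'$. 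The pivotal claim is $L=L'$: iterating Lemma~\ref{facts about chain decomposition}.(2) up the parent tree on chains from $C$ to $B$ yields $s(B)\le s(C)$, and combining this with $s(C)\le t(C)$, the descendant-closedness of $V(G)\setminus V(G_c)$, and the fact that all inner vertices of the chains in the parent-tree subtree rooted at $B$ lie strictly below $t(B)$ in the DFS tree, one sees that every endpoint of $C$ lying in $V(G_c)$ must sit on the tree path $s(B)\rightarrow_T t(B)$, hence in $V_{L'}$. In particular $v_i\in V_L\cap V_{L'}$, and since $v_i$ belongs to a unique link we conclude $L=L'$, so $t(B)\in V_L$. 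The other endpoint of $e$ is then either a boundary child of $v_i\in V_L$ (when $v_i=t(C)$) or lies in the DFS subtree of the boundary child of $t(B)$ on $B$'s tree path (when $v_i=s(C)$); in either case it lies in $S$. For $v\in S\setminus V_L$, $v$ lies in some $T_c$ with $p(c)\in V_L$, and an analogous chain-based analysis (every chain not in $G_c$ that contains an edge at $v$ has a boundary chain ancestor attached to $L$ by the same reasoning) shows that the other endpoint of every edge at $v$ stays inside $S$.

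The main obstacle is the pivotal claim $L=L'$: it simultaneously invokes Lemma~\ref{facts about chain decomposition}.(2), the parent- and descendant-closedness of $V(G_c)$, and the hypothesis that boundary chains are not Mader-paths. Once it is in hand the rest is a routine structural check.
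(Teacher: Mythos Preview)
Your approach is correct and genuinely different from the paper's. The paper argues by contradiction: assuming the extremal edges of some link $L$ are not a cut, it takes a shortest escaping path $P\subseteq G\setminus G_c$, locates its lowest-common-ancestor endpoint $a$, and shows that the chain containing the last edge of $P$ is a boundary chain that is a Mader-path, contradicting the hypothesis. You instead construct the cut side $S$ explicitly and verify edge by edge that nothing leaves $S$ except the two extremal edges. Your route is more structural and yields the cut directly, while the paper's path-chasing argument is shorter and avoids having to enumerate edge types.

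There is, however, a real gap in the case $v\in S\setminus V_L$, which you dismiss as ``analogous''. It is not quite analogous: for $v\in V_L$ you anchor the argument by knowing from the outset that one endpoint of the chain $C$ lies in $V_L$, which lets you conclude $L=L'$. For $v\in T_c$ you have no such anchor a priori; the dangerous case is a back-edge $\edge{a}{v}$ with $a$ an ancestor of $v$ lying in $V(G_c)\setminus V_L$. What you actually need here is the following observation: if $C^*$ is the chain containing the tree edge $\edge{c}{p(c)}$ (so $t(C^*)=p(c)\in V_L$ and, by the hypothesis applied to $C^*$, also $s(C^*)\in V_L$), then $C^*$ is the boundary-chain ancestor of \emph{every} chain $D\notin G_c$ whose back-edge lands in $T_c$. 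Indeed, either $t(D)\le p(c)$, in which case $D$ contains $\edge{c}{p(c)}$ and hence $D=C^*$, or $t(D)\in T_c$, in which case Lemma~\ref{facts about chain decomposition}.(\ref{parent on chains follows parent on nodes}) applied to $c\le t(D)$ gives $C^*\le p(D)$, so the unique boundary ancestor of $D$ is $C^*$. Now iterating Lemma~\ref{facts about chain decomposition}.(\ref{properties of parent chain}) gives $s(C^*)\le s(D)=a\le p(c)=t(C^*)$, and since the tree path $t(C^*)\rightarrow_T s(C^*)$ lies in $V_L$, you get $a\in V_L\subseteq S$. This step should be spelled out; once it is, your argument goes through.

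Two minor points: your ``short local argument'' that $t(B)\rightarrow_T s(B)$ is contained in a single link is more complicated than needed --- it follows immediately from Lemma~\ref{parent-closed union of chains}.(3) (no branch vertex on the path) together with the fact that the path lies in $G_c$. And in the sub-case $v_i=s(C)$ with $C=B$ a single back-edge, there is no ``boundary child of $t(B)$ on $B$'s tree path''; here $w=t(B)\in V_L$ directly.
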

\begin{proof} Assume otherwise. Then there is a parent-closed union $G_c$ of
  chains such that no child of a chain in $G_c$ is a Mader-path with respect to $G_c$ and there is at least one such chain outside of $G_c$, but for every link in $G_c$ the extremal edges are not a cut in $G$.

Consider any link $L$ of $G_c$. %
%
%
Since the extremal edges of $L$ do not form a 2-cut, there is a path in $G-G_c$ connecting an inner vertex on $L$ with a vertex that is either a branch vertex of $G_c$ or a vertex on a link of $G_c$ different from $L$. Let $P$ be such a path of minimum length. By minimality, no inner vertex of $P$ belongs to $G_c$. Note that $P$ is a Mader-path with respect to $G_c$. We will show that at least one edge of $P$ belongs to a chain $C$ with $p(C)\in G_c$ and that $C$ can be added, contradicting our choice of $G_c$.

Let $a$ and $b$ be the endpoints of $P$, and let $z$ be the lowest common ancestor of all points in $P$. Since a DFS generates only tree- and back-edges, $z$ lies on $P$. Since $z \le a$ and the vertex set of $G_c$ is a parent-closed subset of the DFS-tree, $z$ belongs to $G_c$. Thus $z$ cannot be an inner vertex of $P$ and hence is 
equal to $a$ or $b$. Assume w.l.o.g.~that $z = a$. All vertices of $P$ are descendants of $a$. We view $P$ as oriented from $a$ to $b$.

Since $b$ is a vertex of $G_c$, the path $b\rightarrow_T a$ is part of $G_c$ by Lemma~\ref{facts about chain decomposition} and hence no inner vertex of $P$ lies on this path. Let $\edge{a}{v}$ be the first edge on $P$. The vertex $v$ must be a descendant of $b$ as otherwise the path $v\rightarrow _P b$ would contain a cross-edge, i.e.\ an edge between different subtrees. Hence $\edge{a}{v}$ is a back-edge. Let $D$ be the chain that starts with the edge $\edge{a}{v}$. $D$ does not belong to $G_c$, as no edge of $P$ belongs to $G_c$.

We claim that $t(D)$ is a proper descendant of $b$ or $D$ is a Mader-path with respect to $G_c$. Since $v$ is a descendant of $b$ and $t(D)$ is an ancestor of $v$, $t(D)$ is either a proper descendant of $b$, equal to $b$, or a proper ancestor of $b$. We consider each case separately.

If $t(D)$ were a proper ancestor of $b$ the edge $\edge{b}{p(b)}$ would belong to $D$ and hence $D$ would be part of $G_c$, contradicting our choice of $P$. If $t(D)$ is equal to $b$ then $D$ is a Mader-path with respect to $G_c$. This leaves the case that $t(D)$ is a proper descendant of $b$.

Let $\edge yx$ be the last edge on the path $t(D)\rightarrow_T b$ that is not in $G_c$ and let $D^*$ be the chain containing $\edge yx$. Then $D^* \le D$ by Lemma~\ref{facts about chain decomposition}.(\ref{t(D) above u}) (applied with $C = D^*$ and $u = y$) and hence $s(D^*) \le s(D) \le a$ by part (\ref{parent on chains follows parent on nodes}) of the same lemma. Also $t(D^*) = x$. Since $x=t(D^*)\in G_c$, $p(D^*)\in G_c$.

As $a$ and $b$ are not inner vertices of the same link, the path $x=t(D^*) \rightarrow_T b \rightarrow_T a \rightarrow_T s(D^*)$ contains a branch vertex. Thus $D^*$ is a Mader-path by Lemma~\ref{parent-closed union of chains}.
\qed
\end{proof}
\begin{corollary}\label{greedy works}  If $G$ is 3-edge-connected, chains can be greedily added in parent-first order.
\end{corollary}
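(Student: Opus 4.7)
The plan is to derive the corollary inductively from Theorem~\ref{order exists}, applied contrapositively under the hypothesis that $G$ contains no $2$-edge-cut. I would start by setting $G_c := C_1 \cup C_2$ and verifying that this is a legitimate initial configuration: by the discussion preceding Lemma~\ref{parent-closed union of chains}, Proposition~\ref{prop:two_back-edges} guarantees that the second back-edge explored from the root $r$ lands in the same DFS-subtree as the first, so $C_1 \cup C_2$ is a subdivision of $K_2^3$; moreover $G_c$ is parent-closed because $p(C_2) = C_1$ (the endpoint $t(C_2)$ lies strictly below $r$ on the cycle $C_1$ and hence s-belongs to $C_1$).

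For the inductive step, I would assume that $G_c \subsetneq G$ is a parent-closed union of chains containing $C_1$ and $C_2$, and exhibit a chain that can legally be appended. First I would argue that at least one chain outside $G_c$ has its parent in $G_c$: pick any chain $D \notin G_c$ and walk up the parent-tree on chains; since $C_1 \in G_c$, this walk must eventually enter $G_c$, and the last chain on the walk still outside $G_c$ is a candidate. Then I would apply Theorem~\ref{order exists} contrapositively: since $G$ is $3$-edge-connected, no $2$-edge-cut exists, so the theorem forces some child of a chain in $G_c$ (not already in $G_c$) to be a Mader-path with respect to $G_c$. Adding such a chain $C$ extends $G_c$ by one chain while preserving parent-closedness, because $p(C) \in G_c$ by the choice of $C$.

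Termination is immediate since $|\C| = m - n + 1$ is finite and each chain enters $G_c$ at most once, so after $|\C| - 2$ extensions we have $G_c = G$. This yields the claimed parent-first construction sequence in which every chain added is a Mader-path with respect to the then-current graph, which is exactly what Corollary~\ref{Mader-Cor} requires. The main subtlety I expect is in the base case: one must ensure that the DFS ordering of $r$'s neighbors can be arranged so that the first two back-edges from $r$ both terminate in one subtree of $r$, which is where Proposition~\ref{prop:two_back-edges} (and hence $3$-edge-connectivity) is needed directly; in the inductive step, $3$-edge-connectivity is used only indirectly, through the hypothesis of Theorem~\ref{order exists} that no $2$-edge-cut exists.
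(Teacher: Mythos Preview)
Your proposal is correct and follows essentially the same approach as the paper: the paper's one-line proof simply says that if the greedy process gets stuck before all chains are added, Theorem~\ref{order exists} exhibits a 2-cut, contradicting 3-edge-connectivity. You have unpacked this into an explicit induction with a careful treatment of the base case $G_c = C_1 \cup C_2$ and of the existence of a candidate child at each step, which is more detailed than the paper but not substantively different.
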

\begin{proof} If we reach a point where not all chains are added, but we can not proceed in a greedy fashion, by Theorem~\ref{order exists} we find a cut in $G$.
\end{proof}

\section{A First Algorithm}\label{A First Algorithm}

Corollary~\ref{greedy works} gives rise to an $O((n+m) \log (n+m))$ algorithm, the Greedy-Chain-Addition Algorithm. In addition to $G$, we maintain the following data structures:
\newcommand{\cL}{\ensuremath{\mathcal{L}}\xspace}
\begin{compactitem}
\item The current graph $G_c$. Each link is maintained as a doubly linked list of vertices. Observe that all inner vertices of a link lie on the same tree path and hence are numbered in decreasing order. The vertices in $G$ are labeled \emph{inactive}, \emph{branch}, or \emph{non-branch}. The vertices in $G \setminus G_c$ are called \emph{inactive}. Every non-branch vertex stores a pointer to the link on which it lies and a list of all chains incident to it and having the other endpoint as an inner vertex of the same link. \item A list \cL of addable chains. A chain is addable if it is a Mader-path with respect to the current graph.
\item For each chain its list of children.
\end{compactitem}

We initialize $G_c$ to $C_1 \cup C_2$. It has three links, $t(C_2)\rightarrow_Tr$, $r\rightarrow _{C_1}t(C_2)$, and $r\rightarrow_{C_2} t(C_2)$. We then iterate over the children of $C_1$ and $C_2$. For each child, we check in constant time whether its endpoints are inner vertices of the same link. If so, we associate the chain with the link by inserting it into the lists of both endpoints. If not, we add the chain to the list of addable chains. The initialization process takes time $O(n + m)$.

As long as the list of addable chains is non-empty, we add a chain, say $C$. Let $u$ and $v$ be the endpoints of $C$. We perform the following actions:
\begin{compactitem}
\item If $u$ is a non-branch vertex, we make it a branch vertex. This splits the link containing it and entails some processing of the chains having both endpoints on this link.
\item If $v$ is a non-branch vertex, we make it a branch vertex. This splits the link containing it, and entails some processing of the chains having both endpoints on this link.
\item We add $C$ as a new link to $G_c$.
\item We process the children of $C$.
\end{compactitem}
We next give the details for each action.

If $u$ is a non-branch vertex, it becomes a branch vertex. Let $L$ be the link of $G_c$ containing $u$; $L$ is split into links $L_1$ and $L_2$ and the set $S$ of chains having both endpoints on $L$ is split into sets $S_1$, $S_2$ and $S_{\text{add}}$, where $S_i$ is the set of chains having both endpoints on $L_i$, $i = 1,2$, and $S_{\text{add}}$ is the set of chains that become addable (because they are incident to $u$ or have one endpoint each in $L_1$ and $L_2$). We show that we can perform the split of $L$ in time $O(1 + \abs{S_{\text{add}}} + \min(\abs{L_1} + \abs{S_1},\abs{L_2} + \abs{S_2}))$. We walk from both ends of $L$ towards $u$ in lockstep fashion. In each step we either move to the next vertex or consider one chain. Once we reach $u$ we stop. Observe that this strategy guarantees the time bound claimed above.

When we consider a chain, we check whether we can move it to the set of addable chains. If so, we do it and delete the chain from the lists of both endpoints. Once, we have reached $u$, we split the list representing the link into two. The longer part of the list retains its identity, for the shorter part we create a new list header and redirect all pointers of its elements.

Adding $C$ to $G_c$ is easy. We establish a list for the new link and let all inner vertices of $C$ point to it. The inner vertices become active non-branch vertices.

Processing the children of $C$ is also easy. For each child, we check whether both endpoints are inner vertices of $C$. If so, we insert the child into the list of its endpoints. If not, we add the child to the list of addable chains.

If \cL becomes empty, we stop. If all chains have been added, we have constructed a Mader sequence. If not all chains have been processed, there must be a link having at least one inner vertex. The first and the last edge of this link form a 2-edge-cut.

It remains to argue that the algorithm runs in time $O((n + m) \log (n + m))$. We only need to argue about the splitting process. We distribute the cost $O(1 + \abs{S_{\text{add}}} + \min(\abs{L_1} + \abs{S_1},\abs{L_2} + \abs{S_2}))$ as follows: $O(1)$ is charged to the vertex that becomes a branch vertex. All such charges add up to $O(n)$. $O(\abs{S_{\text{add}}})$ is charged to the chains that become addable. All such charges add up to $O(m)$. $O(\min(\abs{L_1} + \abs{S_1},\abs{L_2} + \abs{S_2}))$ is charged to the vertices and chains that define the minimum. We account for these charges with the following token scheme inspired by the analysis of the corresponding recurrence relation in~\cite{Me3}.

Consider a link $L$ with $k$ chains having both endpoints on $L$. We maintain the invariant that each vertex and chain owns at least $\log (\abs{L} + k)$ tokens. When a link is newly created we give $\log(n + m)$ tokens to each vertex of the link and to each chain having both endpoints on the link. In total we create $O((n+m)\log(n+m))$ tokens. Assume now that we split a link $L$ with $k$ chains into links $L_1$ and $L_2$ with $k_1$ and $k_2$ chains respectively. Then $\min(\abs{L_1} + k_1,\abs{L_2} + k_2) \le (\abs{L} + k)/2$ and hence we may take one token away from each vertex and chain of the sublink that is charged without violating the token invariant.

\begin{theorem} The Greedy-Chain-Addition algorithm runs in time $O((n + m) \log (n + m))$.
\end{theorem}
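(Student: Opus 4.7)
The plan is to account for the total running time by identifying three separable sources of cost per chain addition and charging each to a different budget. Initialization requires a linear scan of $G$ to build $C_1\cup C_2$, its three links, and the child lists, plus a pass over the children of $C_1$ and $C_2$ to either attach them to a link or enqueue them in $\mathcal{L}$; this is $O(n+m)$. For each addition of a chain $C$ with endpoints $u$ and $v$, I would split the cost into (i) promoting $u$ and/or $v$ to a branch vertex, (ii) reporting the set $S_{\mathrm{add}}$ of chains that become addable, and (iii) the lockstep walk performed during the split of the link containing $u$ (and similarly $v$). The first is $O(1)$ per promotion and is charged to the vertex, summing to $O(n)$ across the run. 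The second is $O(|S_{\mathrm{add}}|)$ and is charged to the chains that move into $\mathcal{L}$; since each chain moves to $\mathcal{L}$ at most once, these charges sum to $O(m)$. Creating the new link for $C$ and processing the children of $C$ takes time proportional to $|C|$ plus the number of children, which also sums to $O(n+m)$ across all additions.

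The real work is bounding (iii), the total split cost, which per split is $O(1+\min(|L_1|+|S_1|,|L_2|+|S_2|))$ because the lockstep traversal from the two ends of $L$ toward $u$ stops as soon as one side is exhausted. This is exactly the smaller-half pattern, and I would handle it by the token (potential) argument the paper suggests. Define the potential of a link $L$ that hosts $k$ internal chains as $(|L|+k)\log(|L|+k)$, and maintain the invariant that every vertex of $L$ and every such internal chain owns $\log(|L|+k)$ tokens. Whenever a new link of size $\ell$ with $k$ internal chains is created, pay $\ell\cdot\log(n+m)$ tokens up front for its vertices plus $\log(n+m)$ per chain that attaches to it. Since every vertex is an inner vertex of some link at most $O(1)$ times over the course of the algorithm and each chain is internal to at most one link at a time, the total number of tokens issued is $O((n+m)\log(n+m))$.

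The main obstacle, and the one to dispatch carefully, is verifying that the invariant survives a split. Suppose a link $L$ with $k$ internal chains splits into $L_1,L_2$ with $k_1,k_2$ internal chains (the remaining chains go to $S_{\mathrm{add}}$ and are charged separately). Then
\[
\min(|L_1|+k_1,\,|L_2|+k_2)\ \le\ \tfrac{1}{2}(|L|+k),
\]
so $\log(|L_i|+k_i)\le\log(|L|+k)-1$ on the smaller side, and each vertex and internal chain on the smaller side can surrender one token to pay for the unit of lockstep work spent on it while still retaining enough tokens to meet the invariant for its new, halved link. Vertices and chains on the larger side keep all their tokens, and the invariant for the larger side holds because $\log(|L_i|+k_i)\le\log(|L|+k)$. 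Hence every unit of work performed during splits is paid for by a token whose total supply over the entire execution is $O((n+m)\log(n+m))$.

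Adding the three contributions $O(n+m)$ for initialization and the two linear charges, $O(m)$ for addability transitions, and $O((n+m)\log(n+m))$ for the split tokens, I conclude the stated bound of $O((n+m)\log(n+m))$. The one subtlety I would be careful to justify in writing is that the lockstep split can indeed be implemented so that only the work spent on the smaller side is charged, including the creation of the new list header and the redirection of element pointers on the smaller side; this is the reason the longer part of the doubly linked list is allowed to retain its identity.
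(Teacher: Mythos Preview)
Your proposal is correct and follows essentially the same approach as the paper: the same three-way cost decomposition (branch-vertex promotion, chains becoming addable, and the lockstep split), and the same token invariant where each vertex and internal chain of a link $L$ with $k$ attached chains holds $\log(|L|+k)$ tokens, with the smaller-half inequality $\min(|L_1|+k_1,|L_2|+k_2)\le(|L|+k)/2$ releasing one token per charged element. Your closing remark about letting the longer half retain the list identity is exactly the implementation detail the paper relies on as well.
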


\section{A Classification of Chains}\label{A Classification of Chains}
When we add a chain in the Greedy-Chain-Addition algorithm, we also process its children. Children that do not have both endpoints as inner vertices of the chain can be added to the list of addable chains immediately. However, children that have both endpoints as inner vertices of the chain cannot be added immediately and need to be observed further until they become addable. We now make this distinction explicit by classifying chains into two types, interlacing and nested.

We classify the chains $\{C_3,\ldots C_{m-n+1}\}$ into two types. Let $C$ be a chain with parent $\pC = p(C)$. We distinguish two cases\footnote{In~\cite{Schmidt2013}, three types of chains are distinguished. What we call nested is called Type 1 there and what we call interlacing is split into Types 2 and 3 there. We do not need this finer distinction.} for $C$.
\begin{compactitem}
	\item If $s(C)$ is an ancestor of $t(\pC)$ and a descendant of $s(\pC)$, $C$ is \emph{interlacing}. We have $s(\pC)\le s(C)\le t(\pC)\le t(C)$.
	\item If $s(C)$ is a proper descendant of $t(\pC)$, $C$ is \emph{nested}. We have $s(\pC) \le t(\pC) < s(C) \le t(C)$ and $t(C)\rightarrow_T s(C)$ is contained in $\pC$.
\end{compactitem}
These cases are exhaustive as the following argument shows. Let $\edge{s(\pC)}{v}$ be the first edge on $\pC$. By Lemma~\ref{facts about chain decomposition}, $s(\pC) \le s(C) \le v$. We split the path $v\rightarrow_T s(\pC)$ into $t(\pC)\rightarrow _T s(\pC)$ and $(v\rightarrow_T t(\pC))\backslash t(\pC)$. Depending on which of these paths $s(C)$ lies on, $C$ is interlacing or nested.

The following simple observations are useful. For any chain $C \neq C_1$, $t(C)$ s-belongs to $\pC$. If $C$ is nested, $s(C)$ and $t(C)$ s-belong to $\pC$. If $C$ is interlacing, $s(C)$ s-belongs to a chain which is a proper ancestor of $\pC$ or $\pC = C_1$. The next lemma confirms that interlacing chains can be added once their parent belongs to $G_c$.

\begin{lemma}\label{interlacing are easy}
Let $G_c$ be a parent-closed union of chains that contains $C_1$ and $C_2$, let $C$ be any chain contained in $G_c$, and let $D$ be an interlacing child of $C$ not contained in $G_c$. Then $D$ is a Mader-path with respect to $G_c$.
\end{lemma}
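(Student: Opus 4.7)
The plan is to reduce everything to Lemma~\ref{parent-closed union of chains}, which already does most of the heavy lifting. First I would invoke part (3) of that lemma: since $p(D) = C$ lies in $G_c$ while $D$ itself does not, $D$ is automatically an ear relative to $G_c$, the tree path $t(D) \rightarrow_T s(D)$ is contained in $G_c$, and $D$ is a Mader-path with respect to $G_c$ if and only if some branch vertex of $G_c$ lies on that tree path. So the task reduces to producing one branch vertex on $t(D) \rightarrow_T s(D)$.

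The natural candidate is $t(C)$. Because $D$ is interlacing with parent $C$, the definition of interlacing gives $s(C) \le s(D) \le t(C) \le t(D)$. The two inequalities $s(D) \le t(C)$ and $t(C) \le t(D)$ say exactly that $t(C)$ is a descendant of $s(D)$ and an ancestor of $t(D)$, so $t(C)$ lies on the tree path from $t(D)$ up to $s(D)$. Moreover, part (2) of Lemma~\ref{parent-closed union of chains}, applied to the chain $C$ that is already contained in $G_c$, tells me that $t(C)$ is a branch vertex of $G_c$. Combining these two observations with the characterization from part (3) finishes the proof.

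The only things I would double-check are the degenerate cases where $t(C)$ coincides with an endpoint of $D$: if $s(D) = t(C)$ then $s(D)$ itself is a branch vertex of $G_c$, and similarly for $t(D) = t(C)$; in either situation $D$ trivially connects a branch vertex of $G_c$ to another vertex of $G_c$, so it is still a Mader-path. I do not foresee a real obstacle; the lemma is essentially a direct unpacking of the definition of interlacing, once the structural facts about parent-closed unions of chains from Lemma~\ref{parent-closed union of chains} are in hand.
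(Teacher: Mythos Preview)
Your proposal is correct and essentially identical to the paper's own proof: invoke Lemma~\ref{parent-closed union of chains}(3) to reduce the question to finding a branch vertex on $t(D)\rightarrow_T s(D)$, then use the interlacing inequalities $s(C)\le s(D)\le t(C)\le t(D)$ together with Lemma~\ref{parent-closed union of chains}(2) to exhibit $t(C)$ as that branch vertex. The paper's write-up is slightly terser and does not separately discuss the degenerate endpoint cases, but the argument is the same.
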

\begin{proof} We have already shown in Lemma~\ref{parent-closed union of chains} that $D$ is an ear with respect to $G_c$, that the path $t(D) \rightarrow_T s(D)$ is part of $G_c$, and that $s(C)$ and $t(C)$ are branching vertices of $G_c$. Since $D$ is interlacing, we have $s(C) \le s(D) \le t(C) \le t(D)$. Thus $t(D) \rightarrow_T s(D)$ contains a branching vertex and hence $D$ is a Mader-path by Lemma~\ref{parent-closed union of chains}.(3). \qed
\end{proof}

\section{A Linear Time Algorithm}\label{sec:linear time alg}

According to Lemma~\ref{interlacing are easy}, interlacing chains whose parent belongs to the current graph are always Mader-paths and can be added. Nested chains have both endpoints on their parent chain and can only be added once the tree-path connecting its endpoints contains a branching point. Consider a chain nested in chain $C_i$. Which chains can help its addition by creating branching points on $C_i$? First,  interlacing chains having their source on some $C_j$ with $j \le i$, and second, chains nested in $C_i$ and their interlacing offspring having  their source on $C_i$. Chains having their source on some $C_j$ with $j > i$ cannot help because they have no endpoint on $C_i$. This observation shows that chains can be added in phases. In the $i$-th phase, we try to add all chains having their source vertex on $C_i$.




\begin{algorithm}[t]
\caption{Certifying linear-time algorithm for 3-edge connectivity.}\label{alg: main}
\begin{algorithmic}
\Procedure{Connectivity}{G=(V,E)}
\State Let $\{C_1,C_2,\ldots,C_{m-n+1}\}$ be a chain decomposition of $G$ as
described in Sect.~\ref{sec:chain decomposition};
\State Initialize $G_c$ to $C_1 \cup C_2$;
\For {$i$ from 1 to $m-n+1$}
             \Comment \emph{Phase $i$: add all chains whose source s-belongs to $C_i$}
	\State Group the chains $C$ for which $s(C)$ s-belongs to $C_i$ into
        segments;
              \State \emph{Part I of Phase $i$}: Add all segments to $G_c$ whose minimal chain is interlacing;
	\State \emph{Part II of Phase $i$}: Either find an insertion order $S_1,\ldots,S_k$ of the segments having a nested minimal chain or exhibit a 2-edge-cut and stop;
	\For {$j$ from 1 to $k$}
		\State Add the chains contained in $S_j$ parent-first;
	\EndFor
\EndFor
\EndProcedure
\end{algorithmic}
\end{algorithm}

The overall structure of the linear-time algorithm is given in Algorithm~\ref{alg: main}. An implementation in Python is available at \url{https://github.com/adrianN/edge-connectivity}. The algorithm operates in phases and maintains a current graph $G_c$. Let $C_1$, $C_2$, \ldots, $C_{m-n+1}$ be the chains of the chain decomposition in the order of creation. We initialize $G_c$ to $C_1 \cup C_2$. In phase $i$, $i \in [1,m-n+1]$, we consider the $i$-th chain $C_i$ and either add all chains $C$ to $G_c$ for which the source vertex $s(C)$ s-belongs to $C_i$ to $G_c$ or exhibit a 2-edge-cut. As already mentioned, chains are added parent-first and hence $G_c$ is always parent-closed. We maintain the following invariant:\smallskip

\noindent{\textbf{Invariant:}} After phase $i$, $G_c$ consists of all chains for which the source vertex s-belongs to one of the chains $C_1$ to $C_i$.

\begin{lemma}\label{prop:ci_already_added}
For all $i$, the current chain $C_i$ is part of the current graph $G_c$ at the beginning of phase $i$ or the algorithm has exhibited a 2-edge-cut before phase $i$.
\end{lemma}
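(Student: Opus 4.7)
My plan is to prove this by induction on $i$, relying on the stated invariant together with Lemma~\ref{facts about chain decomposition}.(\ref{s(C) s-belongs to earlier chain}).

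For the base case $i=1$, the claim is immediate: the algorithm initializes $G_c$ to $C_1 \cup C_2$, so $C_1 \subseteq G_c$ at the start of phase $1$.

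For the inductive step, assume the claim holds for all phases before $i$, and assume the algorithm has not exhibited a 2-edge-cut before phase $i$ (otherwise the lemma is trivially true). It then suffices to identify an earlier phase in which $C_i$ was inserted into $G_c$. The key is Lemma~\ref{facts about chain decomposition}.(\ref{s(C) s-belongs to earlier chain}), which states that for $i \ge 2$, $s(C_i)$ s-belongs to some chain $C_j$ with $j < i$. The algorithm's invariant says that after phase $j$, $G_c$ contains every chain whose source vertex s-belongs to one of $C_1, \ldots, C_j$. Since $s(C_i)$ s-belongs to $C_j$, the chain $C_i$ itself falls in this set and was therefore added to $G_c$ during phase $j$ (assuming no 2-edge-cut was produced in phase $j$, which is the assumption we made). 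Hence $C_i$ is part of $G_c$ at the beginning of phase $i$.

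I do not expect any substantive obstacle here; the lemma is essentially a bookkeeping consequence of the invariant and of property~(\ref{s(C) s-belongs to earlier chain}) of Lemma~\ref{facts about chain decomposition}. The only subtlety is making the conditional phrasing clean: one must either invoke the induction hypothesis in the form ``either the claim holds or a 2-edge-cut has already been found'' and then condition on the latter being false, so that the invariant is available and places $C_i$ in $G_c$ at the end of phase $j < i$.
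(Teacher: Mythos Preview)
Your proposal is correct and follows essentially the same route as the paper: both arguments invoke Lemma~\ref{facts about chain decomposition}.(\ref{s(C) s-belongs to earlier chain}) to find $j<i$ with $s(C_i)$ s-belonging to $C_j$, and then appeal to the invariant to conclude $C_i$ was added in phase $j$. The only cosmetic difference is that the paper treats $i=1,2$ together as base cases (since $G_c$ is initialized to $C_1\cup C_2$), whereas you cover $i=2$ via the general argument; also, your ``induction'' framing is harmless but not really needed, since the invariant already carries all the information and the induction hypothesis is never actually invoked.
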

\begin{proof}
The initial current graph consists of chains $C_1$ and $C_2$ and hence the claim is true for the first and the second phase. Consider $i > 2$. The source vertex $s(C_i)$ s-belongs to a chain $C_j$ with $j < i$ (Lemma~\ref{facts about chain decomposition}.(\ref{s(C) s-belongs to earlier chain})) and hence $C_i$ is added in phase $j$. \qed
\end{proof}

The next lemma gives information about the chains for which the source vertex s-belongs to $C_i$. None of them belongs to $G_c$ at the beginning of phase $i$ (except for chain $C_2$ that belongs to $G_c$ at the beginning of phase $1$) and they form subtrees of the chain tree. Only the roots of these subtrees can be nested. All other chains are interlacing.

\begin{lemma}\label{lem:no_type_one} Assume that the algorithm reaches phase $i$ without exhibiting a 2-edge-cut. Let $C \not= C_2$ be a chain for which $s(C)$ s-belongs to $C_i$. Then $C$ is not part of $G_c$ at the beginning of phase $i$. Let $D$ be any ancestor of $C$ that is not in $G_c$. Then:
\begin{compactenum}[(1)]
\item $s(D)$ s-belongs to $C_i$.
\item If $D$ is nested, it is a child of $C_i$.
\item If $p(D)$ is not part of the current graph, $D$ is interlacing.
\end{compactenum}
\end{lemma}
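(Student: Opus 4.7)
My plan is to reduce each of the three conclusions to a combination of the algorithm's invariant (which holds at the start of phase $i$ by assumption) and the structural facts about the chain tree summarized in Lemma~\ref{facts about chain decomposition}. First I would dispatch the preliminary claim that $C \notin G_c$: by the invariant at the end of phase $i-1$, $G_c$ consists only of chains whose source s-belongs to $C_1,\ldots,C_{i-1}$; since s-belonging is unique and $s(C)$ s-belongs to $C_i$, the chain $C$ cannot be in $G_c$, with $C_2$ being the only exception that needs to be excluded explicitly (its source $r$ s-belongs to $C_1$, but $C_2$ was placed into the initial $G_c$).

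For claim (1), fix an ancestor $D$ of $C$ in the chain tree that is not in $G_c$. Since $D \notin G_c$, the invariant gives that $s(D)$ s-belongs to some chain $C_\ell$ with $\ell \geq i$. Walking up from $C$ to $D$ in the chain tree and iterating Lemma~\ref{facts about chain decomposition}.(\ref{properties of parent chain}) yields $s(D) \leq s(C)$ in the DFS tree. With $s(D)$ s-belonging to $C_\ell$ and $s(C)$ s-belonging to $C_i$, Lemma~\ref{facts about chain decomposition}.(\ref{parent on chains follows parent on nodes}) forces $C_\ell$ to be a chain-tree ancestor of $C_i$. Since moving to a chain-tree parent strictly decreases the index by Lemma~\ref{facts about chain decomposition}.(\ref{properties of parent chain}), any chain-tree ancestor of $C_i$ has index at most $i$, so $\ell \leq i$. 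Together with $\ell \geq i$ this gives $\ell = i$, as desired.

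Claim (2) is an immediate consequence of (1). If $D$ is nested, then by the definition of nested, $s(D)$ is a proper descendant of $t(p(D))$ and the path $t(D)\rightarrow_T s(D)$ is contained in $p(D)$; hence $s(D)$ is an inner vertex of $p(D)$ and therefore s-belongs to $p(D)$. Combined with (1), uniqueness of s-belonging forces $p(D) = C_i$, so $D$ is a child of $C_i$.

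Claim (3) then follows from (2) together with Lemma~\ref{prop:ci_already_added}. Suppose for contradiction that $D$ were nested. By (2), $p(D) = C_i$, and since by assumption no 2-edge-cut has been exhibited, Lemma~\ref{prop:ci_already_added} guarantees $C_i \in G_c$ at the start of phase $i$, contradicting $p(D) \notin G_c$. Hence $D$ must be interlacing. The main obstacle, as I see it, is keeping the two tree structures, the DFS tree and the chain tree, straight simultaneously; once that bookkeeping is in place, each step is a direct appeal to Lemma~\ref{facts about chain decomposition} or to a previously established observation.
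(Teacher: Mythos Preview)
Your proof is correct and follows essentially the same approach as the paper. The only notable difference is that in part~(1) you spell out explicitly, via Lemma~\ref{facts about chain decomposition}.(\ref{parent on chains follows parent on nodes}) and the index-decreasing property of chain parents, why $s(D)\le s(C)$ forces the chain to which $s(D)$ s-belongs to have index at most $i$; the paper compresses this into the single clause ``hence $s(D)$ belongs to $C_j$ for some $j\le i$'', but the underlying reasoning is identical.
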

\begin{proof}
We use induction on $i$. Consider the $i$-th phase and let $C \not= C_2$ be chains whose source vertex $s(C)$ s-belongs to $C_i$. We first prove that $C$ is not in $G_c$. This is obvious, since in the $j$-th phase we add exactly the chains whose source vertex $s$-belongs to $C_j$.

(1): Let $D$ be any ancestor of $C$ which is not part of $G_c$. By Lemma~\ref{facts about chain decomposition}, we have $s(D) \le s(C)$ and hence $s(D)$ belongs to $C_j$ for some $j\le i$. If $j < i$, $D$ would have been added in phase $j$, a contradiction to the assumption that $D$ does not belong to $G_c$ at the beginning of phase $i$.

(2): $s(D)$ s-belongs to $C_i$ by (1). If $D$ is nested, $s(D)$ and $t(D)$ s-belong to the same chain. Thus $D$ is a child of $C_i$.

(3): If $p(D)$ is not part of the current graph, $p(D)\neq C_i$ by Lemma~\ref{prop:ci_already_added} and hence $D$ is not a child of $C_i$. Hence by (2), $D$ is interlacing.\qed
\end{proof}

We can now define the segments with respect to $C_i$ by means of an equivalence relation. Consider the set $\cal S$ of chains whose source vertex s-belongs to $C_i$. For a chain $C \in \cal S$, let $C^*$ be the minimal ancestor of $C$ that does not belong to $G_c$. Two chains $C$ and $D$ in $\cal S$ belong to the same segment if and only if $C^* = D^*$. In Figure~\ref{fig:chains} on page~\pageref{fig:chains}, if we start with $G_c = C_1 \cup C_2$, we form three segments in the first phase, namely $\sset{C_4}$, $\sset{C_3}$, and $\sset{C_5}$. The first segment can be added according to Lemma~\ref{interlacing are easy}. Then $C_3$ can be added and then $C_5$.

Consider any $C \in \cal S$. By part (1) of the preceding lemma either $p(C) \in \cal S$ or $p(C)$ is part of $G_c$. Moreover, $C$ and $p(C)$ belong to the same segment in the first case. Thus segments correspond to subtrees in the chain tree. In any segment only the minimal chain can be nested by Lemma~\ref{lem:no_type_one}. If it is nested, it is a child of $C_i$ (parts (2) and (3) of the preceding lemma). Since only the root of a segment may be a nested chain, once it is added to the current graph all other chains in the segment can be added in parent-first order by Lemma~\ref{interlacing are easy}. All that remains is to find the proper ordering of the segments faster than in the previous section. We do so in Lemma~\ref{lem:add_hard_segments}. If no proper ordering exists, we exhibit a 2-edge-cut.



\begin{lemma}\label{cor:add_segment} All chains in a segment $S$ can be added in parent-first order if its minimal chain can be added.
\end{lemma}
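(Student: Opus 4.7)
The plan is to combine the structure of a segment (a subtree of the chain tree whose only possibly nested chain is its root) with Lemma~\ref{interlacing are easy}, which guarantees addability of interlacing chains whose parent already lies in the current graph.

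First I would recall from the discussion preceding the lemma that a segment $S$ with respect to $C_i$ corresponds to a subtree of the chain tree: every chain $C \in S$ has either $p(C) = C^* \in S$ (the minimal chain of $S$) or $p(C) \in S$, or else $p(C)$ already belongs to $G_c$. In particular, listing the chains of $S$ in parent-first order within this subtree is well-defined, and each chain's parent either precedes it in the order or was part of $G_c$ from the start of phase $i$. This uses only the definition of a segment together with Lemma~\ref{lem:no_type_one}.(1), which ensures that the ancestors of a chain in $S$ that are not yet in $G_c$ all belong to $S$.

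Next I would argue that after $C^*$ is added, every remaining chain $D \in S \setminus \{C^*\}$ is interlacing. This is exactly Lemma~\ref{lem:no_type_one}.(2)--(3): a nested chain in $S$ must be a child of $C_i$ with source s-belonging to $C_i$, and any ancestor of $D$ in $S$ different from $C^*$ has its parent also in $S$, so $D$ itself cannot be nested (otherwise $D = C^*$). Hence all non-minimal chains of $S$ are interlacing.

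I would then perform a straightforward induction on the parent-first order. After the minimal chain $C^*$ is added by hypothesis, suppose we have already added all chains of $S$ that precede $D$ in the order. Then $p(D) \in G_c$, so $G_c$ remains a parent-closed union of chains containing $C_1$ and $C_2$. Since $D$ is interlacing and its parent is in $G_c$, Lemma~\ref{interlacing are easy} tells us that $D$ is a Mader-path with respect to the current $G_c$, hence addable. This completes the induction and the proof. There is no real obstacle; the only thing to watch is correctly identifying that non-root chains in a segment are interlacing, which is immediate from Lemma~\ref{lem:no_type_one}.
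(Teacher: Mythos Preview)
Your proposal is correct and follows essentially the same approach as the paper: use Lemma~\ref{lem:no_type_one} to conclude that all non-minimal chains of a segment are interlacing, then apply Lemma~\ref{interlacing are easy} along a parent-first traversal. The paper's own proof is just the two-line version of what you wrote, without spelling out the induction explicitly.
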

\begin{proof} By Lemma~\ref{lem:no_type_one} all but the minimal chain in a segment are interlacing. Thus the claim follows from Lemma~\ref{interlacing are easy}. \qed
\end{proof}

We come to part I of phase $i$, the addition of all segments whose minimal chain is interlacing. As a byproduct, we will also determine all segments with nested minimal chain. We iterate over all chains $C$ whose source $s(C)$ s-belongs to $C_i$. For each such chain, we traverse the path $C$, $p(C)$, $p(p(C))$, \ldots until we reach a chain that belongs to $G_c$ or is already marked (initially, all chains are unmarked). We now distinguish cases. If the last chain on the path is nested we mark all chains on the path with the nested chain. If we hit a marked chain we copy the marker to all chains in the path.  Otherwise, i.e., all chains are interlacing and unmarked, we add all chains in the path to $G_c$ in parent-first order, as these segments can be added according to Corollary~\ref{cor:add_segment}. 

It remains to compute a proper ordering of the segments in which the minimal chain is nested or to exhibit a 2-edge-cut. We do so in part II of phase $i$. For simplicity, we will say `segment' instead of `segment with nested minimal chain' from now on.

For a segment $S$ let the \emph{attachment points} of $S$ be all vertices in $S$ that are in $G_c$. Note that the attachment points must necessarily be endpoints of chains in $S$ and hence adding the chains of $S$ makes the attachment points branch vertices. Nested children $C$ of $C_i$ can be added if there are branch vertices on $t(C)\rightarrow_T s(C)$, therefore adding a segment can make it possible to add further segments.

\begin{lemma}\label{prop:attachment points} Let $C$ be a nested child of $C_i$ and let $S$ be the segment containing $C$. The attachment points of $S$ consist of $s(C)$, $t(C)$, and the vertices $s(D)$ of the other chains in the segment. All such points lie on the path $t(C) \rightarrow_T s(C)$ and hence on $C_i$. 
\end{lemma}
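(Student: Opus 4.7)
The plan is to characterize attachment points as sources or targets of chains in $S$ and then induct on chain-tree depth. The key preliminary observation is that a vertex $v$ lies in $G_c$ iff the chain $Y$ to which $v$ s-belongs lies in $G_c$: the forward direction follows from parent-closedness of $G_c$ together with Lemma~\ref{facts about chain decomposition}.(\ref{parent on chains follows parent on nodes}) applied to $s(Z) \le t(Z)$ for any chain $Z \in G_c$ containing $v$. Since an inner vertex of $D \in S$ s-belongs to $D \notin G_c$, any attachment point must be a source or target of some $D \in S$. For $D = C$, both $s(C)$ and $t(C)$ lie on $t(C) \rightarrow_T s(C) \subseteq C_i$ by nestedness and s-belong to $C_i$ (in particular $s(C) > t(C_i)$ makes $s(C)$ an inner vertex of $C_i$). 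For $D \neq C$, Lemma~\ref{lem:no_type_one}.(3) gives that $D$ is interlacing with $p(D) \in S \not\subset G_c$, so $t(D)$ s-belongs to $p(D)$ and $t(D) \notin G_c$. Thus only $v = s(D)$ with $D \in S \setminus \{C\}$ remains.

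I would induct on the depth $k \ge 1$ of $D$ below $C$ in the chain tree, writing the ancestor path $D_0 = C, D_1, \ldots, D_k = D$ inside $S$. Iterated interlacing yields $s(C) \le s(D_k) \le t(D_{k-1})$. Let $Y$ be the chain $s(D_k)$ s-belongs to. Two applications of Lemma~\ref{facts about chain decomposition}.(\ref{parent on chains follows parent on nodes})---to $s(C) \le s(D_k)$ (using $s(C)$ s-belongs to $C_i$) and to $s(D_k) \le t(D_{k-1})$ (using $t(D_{k-1})$ s-belongs to $p(D_{k-1})$)---yield $C_i \le Y \le p(D_{k-1})$. For $k = 1$, $p(D_{k-1}) = C_i$ forces $Y = C_i$, and interlacing directly gives $s(D_1) \le t(C)$, so $s(D_1)$ is on the path. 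For $k \ge 2$, $p(D_{k-1}) = D_{k-2}$, so $Y$ lies on the chain-tree path $C_i, C, D_1, \ldots, D_{k-2}$; only $C_i$ is in $G_c$, so $s(D_k) \in G_c$ forces $Y = C_i$, i.e., $s(D_k)$ is an inner vertex of $C_i$.

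The hardest step is to deduce $s(D_k) \le t(C)$. I would identify the DFS-children $c_C, c_i$ of $t(C)$ satisfying $\edge{c_C}{t(C)} \in C$ and $\edge{c_i}{t(C)} \in C_i$ (the latter exists iff $C_i$'s backbone continues below $t(C)$); when both exist they are distinct since chains partition the edges, and then their DFS-subtrees are disjoint. A secondary induction on $j$ shows $t(D_j)$ is a descendant of $c_C$ for all $j \ge 1$: the base uses that $t(D_1)$, as an inner vertex of $C$, lies on $C$'s backbone below $t(C)$; the step uses that $t(D_j)$ is a proper descendant of $t(D_{j-1})$. Suppose $s(D_k) > t(C)$: as an ancestor of $t(D_{k-1})$ strictly below $t(C)$, $s(D_k)$ lies in $c_C$'s subtree; as an inner vertex of $C_i$ strictly below $t(C)$, it would have to lie in $c_i$'s subtree (which then must exist)---contradicting disjointness. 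Hence $s(C) \le s(D_k) \le t(C)$, placing $s(D_k)$ on $t(C) \rightarrow_T s(C) \subseteq C_i$.
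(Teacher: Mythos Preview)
Your proof is correct, but it takes a noticeably more elaborate route than the paper's. Both arguments begin the same way: attachment points must be endpoints of chains in $S$, and for $D\neq C$ the target $t(D)$ s-belongs to $p(D)\in S$, hence $t(D)\notin G_c$. The divergence is in how you handle $s(D)$.

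The paper proves a single clean fact: no vertex on $(t(D)\rightarrow_T t(C))\setminus\{t(C)\}$ lies in $G_c$. This follows immediately from Lemma~\ref{parent-closed union of chains}.(1) (if such a vertex $v$ were in $G_c$, the edge $\edge{v}{p(v)}$ would be in $G_c$, but that edge belongs to one of $C,D_1,\ldots,D_{k-1}\in S$). Since $s(D)\le t(D)$ and $t(C)\le t(D)$ are both ancestors of $t(D)$, they are comparable; if $t(C)<s(D)$ then $s(D)$ sits on that forbidden path, contradicting $s(D)\in G_c$. So $s(D)\le t(C)$, and you are done in two lines.

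Your argument instead pins down the chain $Y$ to which $s(D_k)$ s-belongs via the sandwich $C_i\le Y\le p(D_{k-1})$, forces $Y=C_i$ when $s(D_k)\in G_c$, and then rules out $s(D_k)>t(C)$ by a geometric subtree-disjointness argument (the children $c_C$ and $c_i$ of $t(C)$). This is valid and self-contained, but the secondary induction on $t(D_j)$ and the case analysis on whether $c_i$ exists are doing by hand what the paper's single observation about $(t(D)\rightarrow_T t(C))\setminus\{t(C)\}$ gives for free. In effect, your subtree argument is a structural reproof, for the specific vertex $s(D_k)$, of the fact that the whole path above $t(C)$ avoids $G_c$.
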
 
\begin{proof} Let $D$ be any chain in $S$ different from $C$. By Lemma~\ref{lem:no_type_one}, $C$ is the minimal chain in $S$. Since $S$ is a subtree of the chain tree, we have $C < D$ and hence by Lemma~\ref{facts about chain decomposition} $t(C) \le t(D)$. Since none of the chains in $S$ is part of $G_c$, parent-closedness implies that no vertex on the path $(t(D) \rightarrow_T t(C))\setminus t(C)$ belongs to $G_c$. In particular, either $t(D) = t(C)$ or $t(D)$ is not a vertex of $G_c$ and hence not an attachment point of $S$. It remains to show $s(C) \le s(D) \le t(C)$. Since $C \le D$, we have $s(C)\le s(D)$ by Lemma~\ref{facts about chain decomposition}. Since $s(D) \le t(D)$ and $t(C) \le t(D)$ we have either $s(D) \le t(C) \le t(D)$ or $t(C) < s(D) \le t(D)$. In the former case, we are done. In the latter case, $s(D)$ is not a vertex of $G_c$ by the preceding paragraph, a contradiction, since $s(D)$ s-belongs to $C_i$ by Lemma~\ref{lem:no_type_one}.
\qed
\end{proof}

For a set of segments $S_1,\ldots, S_k$, let the \emph{overlap graph} be the graph on the segments and a special vertex $R$ for the branch vertices on $C_i$. In the overlap graph, there is an edge between $R$ and a vertex $S_i$, if there are attachment points $a_1\le a_2$ of $S_i$ such that there is a branch vertex on the tree path $a_2\rightarrow_T a_1$. Further, between two vertices $S_i$ and $S_j$ there is an edge if there are attachment points $a_1$, $a_2$ in $S_i$ and $b_1$, $b_2$ in $S_j$, such that $a_1\le b_1\le a_2 \le b_2$ or $b_1 \le a_1 \le b_2 \le a_2$. We say that $S_i$ and $S_j$ \emph{overlap}.

\newcommand{\calC}{{\mathcal{C}}}

\begin{lemma}\label{overlap with calC suffices}
Let $\calC$ be a connected component of the overlap graph $H$ and let $S$ be any segment with respect to $C_i$ whose minimal chain $C$ is nested. Then $S\in \calC$ if and only if
  \begin{compactenum}[(i)]
  \item \label{connected by branch} $R\in \calC$ and there is a branch vertex on $t(C)\rightarrow_T s(C)$ or
  \item \label{connected by overlap} there are attachments $a_1$ and $a_2$ of $S$ and attachments $b_1$ and $b_2$ of segments in $\calC$ with $a_1 \le b_1 \le a_2 \le b_2$ or $b_1 \le a_1 \le b_2 \le a_2$.
  \end{compactenum}
\end{lemma}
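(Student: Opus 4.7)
My plan is to prove both directions of the biconditional by connecting membership of $S$ in $\calC$ to the existence of a direct edge in $H$ between $S$ and some element of $\calC$.

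For the $(\Leftarrow)$ direction, I would first handle (i). By Lemma~\ref{prop:attachment points}, $s(C)$ and $t(C)$ are attachments of $S$ lying on $C_i$ with $s(C) \le t(C)$. A branch vertex on $t(C) \rightarrow_T s(C)$ then matches exactly the definition of an $R$-$S$ edge in $H$, so $R \in \calC$ forces $S \in \calC$. For (ii), the interleaving $a_1 \le b_1 \le a_2 \le b_2$ (the other case is symmetric) yields, in the easy sub-case where $b_1$ and $b_2$ are attachments of a single segment $S' \in \calC$, an immediate $S$-$S'$ overlap edge. The harder sub-case is when $b_1$ is an attachment of $S_1 \in \calC$ and $b_2$ of $S_2 \in \calC$ with $S_1 \ne S_2$; here I plan to locate a specific segment in $\calC$ that directly overlaps $S$ by walking the overlap chain inside $\calC$ that connects $S_1$ to $S_2$, arguing that some segment on the chain must possess one attachment in $[a_1,a_2]$ and another beyond $a_2$, which produces the required crossing with $S$.

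For the $(\Rightarrow)$ direction, assume $S \in \calC$. Since $\calC$ is a connected component of $H$ (and in the case of interest contains at least one element other than $S$), $S$ has at least one neighbor $X$ in $\calC$. If $X = R$, the $R$-$S$ edge condition supplies attachments $a_1 \le a_2$ of $S$ with a branch vertex between them; since $s(C) \le a_1 \le a_2 \le t(C)$ (because $s(C),t(C)$ are the extreme attachments of $S$ by Lemma~\ref{prop:attachment points}), that branch vertex lies on $t(C) \rightarrow_T s(C)$, giving (i). If $X$ is a segment, the $S$-$X$ overlap edge is exactly the interleaving of four attachments demanded by (ii), with $b_1, b_2$ drawn from $X \in \calC$.

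The main obstacle is the sub-case of the $(\Leftarrow)$ direction for (ii) where $b_1$ and $b_2$ come from distinct segments: translating such a virtual witness, spread across two different segments of $\calC$, into a concrete overlap edge with one specific segment of $\calC$ is where the real content of the lemma sits. The other steps follow immediately from the definitions of the edges in $H$, the definition of attachments via Lemma~\ref{prop:attachment points}, and the definition of a connected component.
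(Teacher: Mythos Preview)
Your overall structure is sound and your treatment of $(\Leftarrow)$ for (i) and of $(\Rightarrow)$ is essentially fine, but your plan for the hard sub-case of $(\Leftarrow)$, condition (ii), has a real gap and differs from the paper's argument.

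The paper does \emph{not} walk a path from $S_1$ to $S_2$. Instead it argues by contradiction: if $S\notin\calC$, then no segment of $\calC$ overlaps $S$, so every segment of $\calC$ has all its attachment points either strictly inside the interval $[a_1,a_2]$ or entirely outside it. The existence of $b_1$ and $b_2$ forces both classes to be non-empty; inside and outside segments cannot overlap each other, and $R$ cannot be adjacent to any inside segment (since a branch vertex strictly between $a_1$ and $a_2$ would lie on $t(C)\rightarrow_T s(C)$, which is excluded once (i) fails). Hence $\calC$ would be disconnected, a contradiction. This partition argument is short and avoids any path-tracing.

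Your walk argument can be made to work, but as written it ignores that a path in $\calC$ from $S_1$ to $S_2$ may pass through the special vertex $R$; edges incident to $R$ are \emph{not} overlap edges between segments, so your claim that ``some segment on the chain must possess one attachment in $[a_1,a_2]$ and another beyond $a_2$'' does not follow directly. You would need to add: if the walk reaches $R$ from a segment whose attachments all lie in $[a_1,a_2]$, then the witnessing branch vertex also lies in $[a_1,a_2]\subseteq t(C)\rightarrow_T s(C)$, so (i) holds and $S\in\calC$ after all. (Also, ``beyond $a_2$'' should be ``outside $[a_1,a_2]$'', since the crossing attachment could equally well be below $a_1$.) With these two patches your approach goes through; without them it does not.

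A side remark on $(\Rightarrow)$: the case you bracket off, $\calC=\{S\}$, is not merely ``the case of interest'' to be ignored. It is handled because when $S\in\calC$ one may take $b_1,b_2$ to be attachments of $S$ itself, and then (ii) is satisfied trivially (e.g.\ $a_1=b_1\le a_2=b_2$). This observation in fact makes the entire $(\Rightarrow)$ direction immediate, so your neighbor-finding argument, while correct, is more than is needed.
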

\begin{proof} We first show $S\in \calC$ if (\ref{connected by branch}) or (\ref{connected by overlap}) holds. For (\ref{connected by branch}) the claim follows directly from the definition of the overlap graph. For (\ref{connected by overlap}), assume $S \not\in \calC$ for the sake of a contradiction. Then either $R\not \in \calC$ or there is no branch vertex in $t(C)\rightarrow_T s(C)$ by (\ref{connected by branch}). Further, no segment in $\calC$ overlaps with $S$ and hence any segment in $\calC$ has its attachments points either strictly between $a_1$ and $a_2$ or outside the path $a_2 \rightarrow_T a_1$. Moreover, both classes of segments are non-empty. However, segments in the two classes do not overlap and $R$ cannot be connected to the segments in the former class. Thus $\calC$ is not connected, a contradiction.

If neither (\ref{connected by branch}) nor (\ref{connected by overlap}) hold, there can be no segment in $\calC$ overlapping $S$ and either $S$ is not connected to $R$ or no segment in $\calC$ is connected to $R$.
\end{proof}

\begin{lemma}\label{lem:add_hard_segments}
  Assume the algorithm reaches phase $i$. If the overlap graph $H$ induced by the segments with respect to  $C_i$ is connected, we can add all segments of $C_i$. If $H$ is not connected, we can exhibit a 2-edge-cut for any component of $H$ that does not contain $R$.
\end{lemma}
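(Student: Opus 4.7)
The plan splits according to whether $H$ is connected. For the connected case, I perform a BFS on $H$ starting at $R$ and add segments in the order they are first visited. When segment $S$ is first visited it is adjacent in $H$ either to $R$ or to a previously added segment $S'$. In the first subcase, the overlap-graph edge guarantees a branch vertex of $G_c$ between two attachments of $S$, which by Lemma~\ref{prop:attachment points} places it on $t(C_S)\rightarrow_T s(C_S)$, so the minimal chain $C_S$ is a Mader-path with respect to the current graph. In the second subcase the interleaving of attachments of $S$ and $S'$ places at least one attachment of the already-added $S'$---now a branch vertex---strictly between two attachments of $S$, yielding the same conclusion. We then add $C_S$ and, since by Lemma~\ref{lem:no_type_one} the remaining chains of $S$ are interlacing, finish $S$ parent-first via Corollary~\ref{cor:add_segment}. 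An induction on the BFS depth shows that every segment is added.

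For the disconnected case, let $\calC$ be a component of $H$ not containing $R$, and let $a, b$ be the shallowest and deepest attachments of segments of $\calC$ on $C_i$; denote the induced subpath of $C_i$ by $[a,b]$. The key intermediate step is to show that $[a,b]$ contains no branch vertex of $G_c$ and no attachment of any segment of $\calC_R$. Because the segments of $\calC$ are pairwise connected in $H$ through overlaps, the union of their ranges covers $[a,b]$ completely, so any such branch vertex or $\calC_R$-attachment would lie in the range of some $S \in \calC$. A branch vertex on $S$'s range would place $S \in \calC_R$ by Lemma~\ref{overlap with calC suffices}(\ref{connected by branch}). For a $\calC_R$-attachment inside $[a,b]$, one observes that $\calC_R$ must also contain some attachment outside $[a,b]$---otherwise $\calC_R$ would have no $G_c$-branch vertex among its attachments and therefore could not contain $R$---and then an inside and an outside $\calC_R$-attachment together interleave with the attachments $s(C_S)$ and $t(C_S)$ of $S$, so Lemma~\ref{overlap with calC suffices}(\ref{connected by overlap}) again forces $S\in \calC_R$. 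Either outcome contradicts $S\in \calC$.

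With branch-freeness of $[a,b]$ in hand, the plan is to hypothetically extend $G_c$ by all segments of $\calC_R$---possible by the BFS procedure of the first part restricted to the component of $R$---to obtain a parent-closed union of chains $G_c'$. The attachments of $\calC_R$-segments that now become branch vertices all lie outside $[a,b]$, so no vertex of $[a,b]$ is a branch vertex of $G_c'$ either, and $[a,b]$ lies strictly inside a single link $L$ of $G_c'$ on $C_i$; since $a\ne b$ and $L$ extends past both $a$ and $b$ until it reaches a branch vertex, $L$ has length at least two. In $G_c'$ every addable segment has been added, so no child of a chain of $G_c'$ is a Mader-path with respect to $G_c'$, while the minimal chain of any $S\in \calC$ witnesses that at least one such child exists. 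Theorem~\ref{order exists} then applies to $G_c'$ and the two extremal edges of $L$ form a $2$-edge-cut of $G$, which is the cut exhibited for $\calC$. The main obstacle is the branch-freeness claim for $[a,b]$ in $G_c'$: without it the link $L$ might fail to have length at least two and Theorem~\ref{order exists} would not directly deliver a cut corresponding to $\calC$.
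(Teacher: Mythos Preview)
Your connected case and branch-freeness argument match the paper's approach. The gap is in the final step of the disconnected case, where you invoke Theorem~\ref{order exists} on $G_c'$.

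The hypothesis of Theorem~\ref{order exists} requires that \emph{no} child of a chain in $G_c'$ is a Mader-path with respect to $G_c'$. You justify this by writing ``every addable segment has been added,'' but segments with respect to $C_i$ do not exhaust the children of chains in $G_c'$. Consider a chain $C'$ that you just added as part of a $\calC_R$-segment, and let $D$ be an interlacing child of $C'$ with $s(D)$ s-belonging to $C'$. Then $D$ will only be processed in the phase of $C'$ (strictly after phase $i$), so $D\notin G_c'$; yet $p(D)=C'\in G_c'$, and by Lemma~\ref{interlacing are easy} $D$ \emph{is} a Mader-path with respect to $G_c'$. Thus the hypothesis of Theorem~\ref{order exists} fails in general, and the theorem cannot be applied as stated. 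One might try to repair this by greedily adding \emph{all} Mader-path children until none remain, but then you must additionally argue that this process never creates a branch vertex inside $[a,b]$---otherwise the segments of $\calC$ could become addable and the link $L$ could collapse. That is a separate nontrivial argument.

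The paper avoids this by not invoking Theorem~\ref{order exists} at all. After establishing branch-freeness of the path $y\rightarrow_T x$ in $G_c'$ (just as you do), it names the specific pair of edges $\edge{x}{p(x)}$ and $\edge{z}{y}$ (where $z$ is $y$'s predecessor on $C_i$) and proves directly that they form a 2-edge-cut, by a path argument in the style of Theorem~\ref{order exists} but tailored to these two edges. The direct argument only needs local information about $\calC$ and $\calC_R$, and in particular does not require any global ``no Mader-path child'' condition.
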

\begin{proof} Assume first that $H$ is connected. Let $R,S_1,\ldots,S_k$ be the vertices of $H$ in a preorder, e.g.\ the order they are explored by a DFS, starting at $R$, the vertex corresponding to the branch vertices on $C_i$. An easy inductive argument shows that we can add all segments in this order. Namely, let $k \ge 1$ and let $C$ be the minimal chain of $S_k$. All attachment points of $S_k$ lie on the path $t(C) \rightarrow_T s(C)$ by Lemma~\ref{prop:attachment points}, and there is either an edge between $R$ and $S_k$ or an edge between $S_j$ and $S_k$ for some $j <k$.  In the former case, there is a branch vertex on $t(C) \rightarrow_T s(C)$ at the beginning of the phase, in the latter case there is one after adding $S_j$. Thus the minimal chain of $S_k$ can be added and then all other chains by Lemma~\ref{cor:add_segment}.

On the other hand, suppose $H$ is not connected. Let $\calC$ be any connected component of $H$ that does not contain $R$, and let $\calC_R$ be the connected component that contains $R$. Let $x$ and $y$ be the minimal and maximal attachment points of the segments in $\calC$, and let $G_c$ be the current graph after adding all chains in $\calC_R$. We first show that there is no branch vertex of $G_c$ on the path $y \rightarrow_T x$. Assume otherwise and let $w$ be any such branch vertex. Observe first that there must be a chain $C \in \calC$ with $s(C) \le w \le t(C)$. Otherwise, every chain in $\calC$ has all its attachment points at proper ancestors of $w$ or at proper descendants of $w$ and hence $\calC$ is not connected. Let $S$ be the segment containing $C$. By Lemma~\ref{prop:attachment points}, we may assume that $C$ is the minimal chain of $S$. Since $S \not\in \calC_R$, $\edge{R}{S}$ is not an edge of $H$ and hence no branch vertex exists on the path $t(C)  \rightarrow_T s(C)$ at the beginning of part II of the phase. Hence $w$ is an attachment point of a segment in $\calC_R$. In particular $\calC_R$ contains at least one segment. We claim that $\calC_R$ must also have an attachment point outside $t(C)  \rightarrow_T s(C)$. This holds since all initial branch vertices are outside the path and since $\calC_R$ is connected. Thus $S \in \calC_R$ by Lemma~\ref{overlap with calC suffices},  a contradiction.

We show next that the tree-edge $\edge{x}{p(x)}$ and the edge $\edge{z}{y}$ from $y$'s predecessor $z$ on $C_i$ to $y$ form a 2-edge-cut; $\edge{z}{y}$ may be a tree-edge or a back-edge. The following argument is similar to the argument in Theorem~\ref{order exists}, but more refined.

Assume otherwise. Then, as in the proof of Theorem~\ref{order exists}, there is a path $P=a\rightarrow b$ such that $a\le u$ for all $u\in P$, and either $a$ lies on $y\rightarrow_T x$ and $b$ does not, or vice versa, and no inner vertex of $P$ is in $G_c$. Moreover, the first edge $\edge av$ of $P$ is a back-edge and $v$ is a descendant of $b$. Note that unlike in the proof of  Theorem~\ref{order exists}, $a$ and $b$ need not lie on different links, as we want to show that $\edge{x}{p(x)}$ and $\edge{z}{y}$ form a cut and these might be different from the last edges on the link containing $x$ and $y$.

Let $D$ be the chain that starts with the edge $\edge{a}{v}$. $D$ does not belong to $G_c$, as no edge of $P$ belongs to $G_c$. In particular, $a$ does not s-belong to $C_j$ for $j < i$ (as otherwise, $D$ would already be added). Since $a \le b$ and one of $a$ and $b$ lies on $y \rightarrow_T x$ (which is a subpath of $C_i$), $a$ s-belongs to $C_i$. By the argument from the proof of Theorem~\ref{order exists}, $t(D)$ is a descendant of $b$.

Let $D^*$ be the chain that contains the last edge of $P$. If $t(D) = b$, $D = D^*$. Otherwise, $t(D)$ is a proper descendant of $b$. Let $\edge{u}{b}$ be the last edge on the path $t(D) \rightarrow_T b$.  We claim that $\edge{u}{b}$ is also the last edge of $P$. This holds since the last edge of $P$ must come from a descendant of $b$ (as ancestors of $b$ belong to $G_c$) and since it cannot come from a child different from $y$ as otherwise $P$ would have to contains a cross-edge. Thus $D^* \le D$ by Lemma~\ref{facts about chain decomposition}.(\ref{t(D) above u}) and hence $s(D^*) \le s(D) \le a$ by part (\ref{parent on chains follows parent on nodes}) of the same lemma.

$D$ and $D^*$ belong to the same segment with respect to $C_i$, say $S$, and $a$ and $b$ are vertices in $S \cap G_c$. This can be seen easily. Since $a$ s-belongs to $C_i$, $D$ belongs to some segment with respect to $C_i$ and since $D^* \le D$, $D^*$ belongs to the same segment. Since $t(D^*) = b$ and $b$ is a vertex of $G_c$, $D^*$ is the minimal chain in $S$. Thus $D^*$ is nested and hence $b$ s-belongs to $C_i$. Hence $a$ and $b$ are attachment points of $S$.

Thus $S$ overlaps with $\calC$ and hence $S \in \calC$ by Lemma~\ref{overlap with calC suffices}. Therefore $x$ and $y$ are not the extremal attachment points, that is the minimal (or maximal) vertices in $S\cap G_c$, of $\calC$, a contradiction. \qed
\end{proof}

It remains to show that we can find an order as required in Lemma~\ref{lem:add_hard_segments}, or a 2-edge-cut, in linear time. We reduce the problem of finding an order on the segments to a problem on intervals. W.l.o.g.\ assume that the vertices of $C_i$ are numbered consecutively from $1$ to $|C_i|$. Consider any segment $S$, and let $a_0\leq a_1\leq \ldots \leq a_k$ be the set of attachment points of $S$, i.e., the set of vertices that $S$ has in common with $C_i$. By Lemma~\ref{prop:attachment points}, $a_0$ and $a_k$ are the endpoints of the minimal chain in $S$ and each $a_i$, $0 < i < k$, is equal to $s(D)$ for some other chain in $S$. We associate the intervals%
\[
  \{[a_0,a_\ell] | 1\leq \ell \leq k\} \cup \{[a_\ell,a_k] | 1\leq \ell< k\},
\]
 with $S$ and for every branch vertex $v$ on $C_i$ we define an interval $[0,v]$. See Figure~\ref{fig:attachment_point_intervals} for an example.
\begin{figure}
\centering
\definecolor{cffffff}{RGB}{255,255,255}
\definecolor{c0000ff}{RGB}{0,0,255}
\begin{tikzpicture}[y=0.60pt, x=0.6pt,yscale=-1, inner sep=0pt, outer sep=0pt]
    \path[shift={(0,1.66744)},fill=black]
      (134.8939,81.4960)arc(0.000:180.000:4.000)arc(-180.000:0.000:4.000) -- cycle;
    \path[draw=black,dashed,line join=miter,line cap=butt,line width=0.800pt]
      (159.4488,83.1635) -- (302.2232,83.1635) .. controls (302.2232,83.1635) and
      (248.3759,41.0607) .. (216.5586,41.0607) .. controls (184.7413,41.0607) and
      (130.8939,83.1635) .. (130.8939,83.1635);
    \path[fill=black] (154.84824,98.587296) node[above right] (text3027) {$1$};
    \path[fill=black] (202.43971,98.587296) node[above right] (text3031) {$2$};
    \path[fill=black] (250.03119,98.587296) node[above right] (text3035) {$3$};
    \path[fill=black] (297.62265,98.587296) node[above right] (text3039) {$4$};
    \path[shift={(2.01483,-1.25057)},draw=black,fill=cffffff]
      (256.6169,84.4141)arc(0.000:180.000:4.000)arc(-180.000:0.000:4.000) -- cycle;
    \path[fill=black] (416.13892,98.587296) node[above right] (text3027-8) {$1$};
    \path[fill=black] (458.60425,98.587296) node[above right] (text3031-6) {$2$};
    \path[fill=black] (500.96216,98.587296) node[above right] (text3035-9) {$3$};
    \path[fill=black] (543.7937,98.587296) node[above right] (text3039-1) {$4$};
    \path[fill=black] (373.78101,98.728889) node[above right] (text4128) {$0$};
    \path[draw=black,line join=miter,line cap=butt,line width=0.800pt]
      (374.9402,41.6045) -- (417.2375,41.6045);
    \path[draw=black,line join=miter,line cap=butt,line width=0.800pt]
      (417.2375,39.6244) -- (417.2375,43.5846);
    \path[draw=black,line join=miter,line cap=butt,line width=0.800pt]
      (374.9402,39.6244) -- (374.9402,43.5846);
    \path[draw=black,line join=miter,line cap=butt,line width=0.800pt]
      (374.9402,39.6244) -- (374.9402,43.5846);
    \path[draw=c0000ff,line join=miter,line cap=butt,line width=0.800pt]
      (159.4488,83.1635) .. controls (163.8978,75.9450) and (248.9835,20.3629) ..
      (285.2178,38.3891) .. controls (295.8598,60.2515) and (302.2232,83.1635) ..
      (302.2232,83.1635);
    \path[draw=c0000ff,line join=miter,line cap=butt,line width=0.800pt]
      (207.0403,83.1635) .. controls (232.2578,58.7003) and (258.3328,43.9550) ..
      (285.2178,38.3891);
    \path[shift={(15.70172,1.25058)},draw=black,fill=cffffff]
      (195.3386,81.9129)arc(0.000:180.000:4.000)arc(-180.000:0.000:4.000) -- cycle;
    \path[shift={(32.86245,-45.85455)},draw=black,fill=cffffff]
      (256.6169,84.4141)arc(0.000:180.000:4.000)arc(-180.000:0.000:4.000) -- cycle;
    \path[shift={(-1.04214,1.25058)},fill=black]
      (164.4910,81.9129)arc(0.000:180.000:4.000)arc(-180.000:0.000:4.000) -- cycle;
    \path[draw=black,fill=cffffff]
      (306.2232,83.1635)arc(0.000:180.000:4.000)arc(-180.000:0.000:4.000) -- cycle;
    \path[draw=black,line join=miter,line cap=butt,line width=0.800pt]
      (417.2376,81.8313) -- (544.2820,81.8313);
    \path[draw=black,line join=miter,line cap=butt,line width=0.800pt]
      (544.2820,79.8512) -- (544.2820,83.8114);
    \path[draw=black,line join=miter,line cap=butt,line width=0.800pt]
      (417.2376,79.8512) -- (417.2376,83.8114);
    \path[draw=black,line join=miter,line cap=butt,line width=0.800pt]
      (417.2376,79.8512) -- (417.2376,83.8114);
    \path[draw=black,line join=miter,line cap=butt,line width=0.800pt]
      (417.2376,68.4224) -- (459.3367,68.4224);
    \path[draw=black,line join=miter,line cap=butt,line width=0.800pt]
      (459.3367,66.4423) -- (459.3367,70.4025);
    \path[draw=black,line join=miter,line cap=butt,line width=0.800pt]
      (417.2376,66.4423) -- (417.2376,70.4025);
    \path[draw=black,line join=miter,line cap=butt,line width=0.800pt]
      (417.2376,66.4423) -- (417.2376,70.4025);
    \path[draw=black,line join=miter,line cap=butt,line width=0.800pt]
      (459.6248,55.0134) -- (544.2820,55.0134);
    \path[draw=black,line join=miter,line cap=butt,line width=0.800pt]
      (544.2820,53.0334) -- (544.2820,56.9935);
    \path[draw=black,line join=miter,line cap=butt,line width=0.800pt]
      (459.6248,53.0334) -- (459.6248,56.9935);
    \path[draw=black,line join=miter,line cap=butt,line width=0.800pt]
      (459.6248,53.0334) -- (459.6248,56.9935);
\end{tikzpicture}
\caption{Intervals for the solid segment with respect to the dashed chain. It has the attachment points $1$,$2$,$4$. Filled vertices are branching points.}
\label{fig:attachment_point_intervals}
\end{figure}
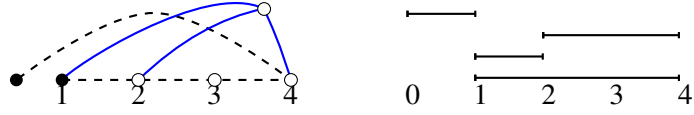

We say two intervals $[a, a'], [b, b']$ \emph{overlap} if $a\leq b\leq a'\leq b'$. Note that overlapping is different from intersecting; an interval does not overlap intervals in which it is properly contained or which it properly contains. This relation naturally induces a graph $H'$ on the intervals. Contracting all intervals that are associated to the same segment into one vertex makes $H'$ isomorphic to the overlap graph as required for Lemma~\ref{lem:add_hard_segments}. Hence we can use $H'$ to find the order on the segments. Note that the interval set $\{[a_0,a_\ell] | 1\leq \ell \leq k\}$ for each segment does not suffice without employing a clever tie-breaking rule: If there are two segments with attachments $a<b<c$ and $a<b'<c$, respectively, such that $b' \neq b$, no interval of the first segment overlaps with one of the second.

A naive approach that constructs $H'$, contracts intervals, and runs a DFS will fail, since the overlap graph can have a quadratic number of edges. However, using a method developed by Olariu and Zomaya~\cite{Olariu1996}, we can compute a spanning forest of $H'$ in time linear in the number of intervals. The presentation in~\cite{Olariu1996} is for the PRAM and thus needlessly complicated for our purposes. A simpler explanation can be found in the appendix.

The number of intervals created for a chain $C_i$ is bounded by
\[
	|\text{NestedChildren}(C_i)|+2|\text{Interlacing}(C_i)|+|V_{\text{branch}}(C_i)|,
\]
where $\text{NestedChildren}(C_i)$ are the nested children of $C_i$, $\text{Interlacing}(C_i)$ are the interlacing chains that start on $C_i$, and $V_{\text{branch}}(C_i)$ is the set of branch vertices on $C_i$. Note that we generate 
the interval $[s(C),t(C)]$ for each nested child $C$, and the intervals $[s(C),s(D)]$ and $[s(D),t(C)]$ for each interlacing chain $D$ belonging to a segment with nested minimal chain $C$. Thus the total time spend the ordering procedure is $O(m)$. 
From the above discussion, we get:

\begin{theorem}
For a $3$-edge-connected graph, a Mader construction sequence can be found in time O($n+m$).
\end{theorem}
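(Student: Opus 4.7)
The plan is to assemble the ingredients developed in Sections~\ref{sec:chain decomposition}--\ref{sec:linear time alg} and argue that each step of Algorithm~\ref{alg: main} is implementable within a time budget that sums to $O(n+m)$ across all phases. Correctness is already established piecewise: Lemma~\ref{interlacing are easy}, Corollary~\ref{cor:add_segment} and Lemma~\ref{lem:add_hard_segments} together show that if the input is 3-edge-connected the algorithm produces a Mader construction sequence, and if it is not, it exhibits a 2-edge-cut. What remains is the running-time analysis.

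First I would handle the preprocessing: compute a DFS tree and a chain decomposition $C_1,\ldots,C_{m-n+1}$ in $O(n+m)$ time as in Section~\ref{sec:chain decomposition}, simultaneously checking 2-edge-connectivity via Lemma~\ref{lem:2connectivity} and the condition of Proposition~\ref{prop:two_back-edges}, and initialize $G_c=C_1\cup C_2$. I would precompute, for each chain $C_i$, the list of chains whose source s-belongs to $C_i$ and the list of its children in the chain tree; both lists can be filled in $O(n+m)$ total by a single pass over the chains.

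Next I would account for the work done in phase $i$. Grouping the relevant chains into segments is done by the parent-pointer walk with marking described after Lemma~\ref{cor:add_segment}: each chain is traversed at most once overall, because once a chain becomes marked or is added to $G_c$ subsequent walks stop at it. Thus segment formation, classification of the root of each segment as interlacing or nested, and the immediate addition of segments with interlacing root cost $O(1)$ per chain amortized. The final obstacle is the handling of segments with nested root: for this I would implement the interval reduction from the paragraph after Lemma~\ref{lem:add_hard_segments}. The number of intervals generated for $C_i$ is bounded by $|\mathrm{NC}(C_i)|+2|\mathrm{In}(C_i)|+|V_{\mathrm{branch}}(C_i)|$, so summed over all $i$ this is $O(n+m)$; using the Olariu--Zomaya method a spanning forest of the interval overlap graph is computed in time linear in the number of intervals. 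Contracting intervals belonging to the same segment gives the overlap graph of Lemma~\ref{lem:add_hard_segments}, whose components are obtained by a DFS on the spanning forest in the same time bound.

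From the spanning forest I would produce the preorder traversal starting at $R$ required by Lemma~\ref{lem:add_hard_segments}; any component not containing $R$ immediately yields a 2-edge-cut as exhibited in that lemma's proof, in which case we stop. Otherwise the segments with nested root are added in the computed preorder, followed by the remaining interlacing chains of each segment added parent-first (Corollary~\ref{cor:add_segment}); both steps cost $O(1)$ per chain and per vertex made branching, amortized. Summing the $O(1)$-per-chain work for segment formation and addition, the linear work for the interval construction and Olariu--Zomaya routine, and the $O(n+m)$ preprocessing, the total running time is $O(n+m)$. The main obstacle in writing this out rigorously is verifying that the interval encoding really captures the overlap graph of Lemma~\ref{lem:add_hard_segments} after segment contraction (so that the spanning-forest components coincide with the connected components of the true overlap graph), and that the attachment points of each segment are enumerated in sorted order without extra sorting; the latter follows from maintaining the attachment points along the DFS-numbered vertices of $C_i$ while chains are scanned during segment formation.
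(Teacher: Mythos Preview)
Your proposal is correct and follows essentially the same approach as the paper: the paper itself gives no formal proof of this theorem beyond the sentence ``From the above discussion follows,'' and you have accurately reconstructed and organized the running-time accounting scattered through Sections~\ref{sec:chain decomposition}--\ref{sec:linear time alg} (chain decomposition in $O(n+m)$, amortized $O(1)$ per chain for segment formation via the marking walk, the interval bound $|\mathrm{NC}(C_i)|+2|\mathrm{In}(C_i)|+|V_{\mathrm{branch}}(C_i)|$, and the linear Olariu--Zomaya spanning-forest computation). The two concerns you flag at the end---that the interval encoding faithfully represents the overlap graph after segment contraction, and that attachment points come sorted for free---are exactly the loose ends the paper leaves implicit.
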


\section{Verifying the Mader Sequence}\label{sec:verify mader}

The certificate is either a 2-edge-cut, or a sequence of Mader-paths. For a 2-edge-cut, we simply remove the two edges and verify that $G$ is no longer connected.

For checking the Mader sequence, we doubly-link each edge in a Mader-path to the corresponding edge in $G$. Let $G'$ be a copy of $G$. We remove the Mader-paths, in reverse order of the sequence, suppressing vertices of degree two as they occur. This can create multiple edges and loops. Let $G'_i$ be the multi-graph before we remove the $i$-th path $P_i$. We need to verify the following:

\begin{compactitem}
\item $G$ must have minimum degree three.
\item The union of Mader-paths must be isomorphic to $G$ and the Mader-paths must partition the edges of $G$. This is easy to check using the links between the edges of the paths and the edges of $G$.
\item The paths we remove must be ears. More precisely, at step $i$, $P_i$ must have been reduced to a single edge in $G'_i$, as inner vertices of $P_i$ must have been suppressed if $P_i$ is an ear for $G'_i$.
\item The $P_i$ must not subdivide the same link twice. That is, after deleting the edge corresponding to $P_i$, it must not be the case that both endpoints are still adjacent (or equal, i.e.\ $P_i$ is a loop) but have degree two.
\item When only two paths are left, the graph must be a $K_2^3$.
\end{compactitem}


\section{The Cactus Representation of 2-Cuts}\label{Sec: Cactus Representation}
We review the cactus representation of 2-cuts in a 2-connected but not 3-connected graph and show how to certify it.

A \emph{cactus} is a graph in which every edge is contained in exactly one cycle. Dinits, Karzanov, and Lomonsov~\cite{Dinits-Karzanov-Lomonosov} showed that the set of mincuts of any graph has a cactus representation, i.e., for any graph $G$ there is a cactus $C$ and a mapping $\phi: V(G) \rightarrow V(C)$ such that the mincuts of $G$ are exactly the preimages of the mincuts of $C$, i.e., for every mincut\footnote{For this theorem, a cut is specified by a set of vertices, and the edges in the cut are the edges with exactly one endpoint in the vertex set.} $A \subseteq V(C)$, $\phi^{-1}(A)$ is a mincut of $G$, and all mincuts of $G$ can be obtained in this way. The pair $(C,\phi)$ is called a \emph{cactus representation} of $G$. Fleiner and Frank~\cite{Fleiner2009} provide a simplified proof for the existence of a cactus representation. We will call the elements of $V(G)$ \emph{vertices}, the elements of $V(C)$ \emph{nodes}, and the preimages of nodes of $C$ \emph{blobs}.

In general, a cactus representation needs to include nodes with empty preimages. This happens for example for the $K_4$; its cactus is a star with double edges where the central node has an empty preimage and the remaining nodes correspond to the vertices of the $K_4$. For graphs whose mincuts have size two, nodes with empty preimages are not needed, and a cactus representation can be obtained by contracting the 3-edge-connected components into a single node. 

\newcommand{\citeNaga}{\cite[Section 2.3.5]{Nagamochi-Ibaraki-Book}}

\begin{lemma}[\citeNaga] \label{cactus representation} Let $G$ be a 2-edge-connected graph that is not 3-edge-connected. Contracting each 3-edge-connected components of $G$ into a node yields a cactus representation $(C,\phi)$ of $G$ with the following properties:
\begin{compactenum}[i)]
\item The edges of $C$ are in one-to-one correspondence to the edges of $G$ that are contained in a 2-cut.
\item For every node $c\in V(c)$, $\phi^{-1}(c)$ is a 3-edge-connected component of $G$.
\end{compactenum}
\end{lemma}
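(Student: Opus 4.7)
The plan is to combine the Dinits-Karzanov-Lomonosov existence result with the refinement of Nagamochi and Ibaraki. Since $G$ is 2-edge-connected but not 3-edge-connected, every mincut of $G$ has size exactly 2, so the Nagamochi-Ibaraki refinement (cited in the paper as Section 2.3.5 of their book) applies and yields a cactus representation $(C,\phi)$ in which every node has non-empty preimage. The two properties must then be verified for this particular representation.

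To verify property (ii), I would use Menger's theorem. Two vertices $u,v$ of $G$ satisfy $\phi(u) = \phi(v)$ iff no mincut of $C$ separates $\phi(u)$ from $\phi(v)$, which by the cactus correspondence is equivalent to no mincut of $G$ separating $u$ from $v$. By Menger's theorem (for edge-connectivity), this is equivalent to the existence of at least three edge-disjoint $u$-$v$ paths in $G$, i.e.\ to $u$ and $v$ lying in the same 3-edge-connected component. Hence $\phi^{-1}(c)$ is exactly a 3-edge-connected component of $G$.

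To verify property (i), I would exhibit the bijection explicitly. Every edge $e$ of $C$ lies on a unique cycle (cactus property), so picking any other edge $e'$ of that cycle yields a mincut $\{e, e'\}$ of $C$. Under the cactus correspondence this maps to a mincut of $G$ of size exactly 2, say $\{f, f'\}$. Since the ``side'' of $C$ cut off by $e$ (together with $e'$) determines the same vertex set on the $G$-side regardless of which partner $e'$ is chosen, the $G$-edge paired with $e$ is the same for every choice of $e'$, giving a well-defined map $e \mapsto f(e)$ into the edges of $G$ contained in some 2-cut. Conversely, any edge $uv$ of $G$ lying in a 2-cut satisfies $\phi(u) \neq \phi(v)$: otherwise $u$ and $v$ would be 3-edge-connected by (ii), so any edge cut containing $uv$ and disconnecting them would need a further path to destroy, contradicting size 2. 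Thus $uv$ is paired with the unique edge of $C$ joining $\phi(u)$ and $\phi(v)$ that realises the corresponding $C$-mincut. Edges $uv$ with $\phi(u) = \phi(v)$ are not in any 2-cut for the same reason. The two maps are then checked to be mutual inverses.

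The main obstacle is the well-definedness of $e \mapsto f(e)$ together with the independence on the choice of partner $e'$ on the same cactus cycle. This is the structural content of the no-empty-blob cactus construction for size-2 mincuts: the $\binom{k}{2}$ mincuts associated with a $k$-edge cactus cycle correspond to a consistent family of $\binom{k}{2}$ 2-cuts of $G$, all of which share pairs of edges drawn from a fixed $k$-element set $\{f_1, \dots, f_k\} \subseteq E(G)$. I would cite Nagamochi-Ibaraki for this structural fact rather than reprove it, after which properties (i) and (ii) both follow.
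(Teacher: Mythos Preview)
Your proposal is correct but takes a genuinely different route from the paper. The paper does not start from the Dinits--Karzanov--Lomonosov cactus and then verify (i) and (ii); instead it \emph{constructs} the cactus directly by contracting each 3-edge-connected component of $G$ to a single node, so that (ii) holds by definition and (i) is immediate (the surviving edges are exactly those of $G$ with endpoints in different 3-edge-connected components, i.e., the edges lying in some 2-cut). All the work then goes into showing that the resulting contraction graph is a cactus: an edge in two cycles would yield three edge-disjoint paths between two nodes in different components, and an edge in no cycle would be a bridge of $G$.

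Your approach inverts the effort: you take the cactus for granted and push the work into (i) and (ii). Property (ii) via Menger is clean. Property (i), however, is where your argument becomes heavy: the well-definedness of $e \mapsto f(e)$ and the consistent labelling of the $k$ edges along a cactus cycle is precisely the structural content you end up citing from Nagamochi--Ibaraki rather than proving. That is legitimate, but it means your proof is less self-contained than the paper's, which is fully elementary once 3-edge-connected components are defined. On the other hand, your route makes explicit the connection to the general mincut cactus and to Menger's theorem, which the paper's contraction argument leaves implicit.
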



\subsection{Verifying a Cactus Representation}

Let $G$ be a graph and let $(C,\phi)$ be an alleged cactus-representation of its 2-cuts in the sense of Lemma~\ref{cactus representation}. We show how to verify a cactus representation in linear time. We need to check two things. First, we need to ensure that $C$ is indeed a cactus graph, that is, every edge of $C$ is contained in exactly one cycle, that $\phi$ is a surjective mapping and hence there are no empty blobs, and that every edge of $G$ either connects two vertices in the same blob or is also present in $C$. Second, we need to verify that the blobs of $C$ are $3$-edge-connected components of $G$. For this purpose, the cactus representation is augmented by a Mader construction sequence for each blob $B$. The verification procedure from Sect.~\ref{sec:verify mader} can then be applied.

We first verify that $C$ is a cactus. We compute a chain decomposition of $C$ and verify that every chain is a cycle. We label all edges in the $i$-th cycle by $i$. We have now verified that $C$ is a cactus.

Surjectivity of $\phi$ is easy to check. We then iterate over the edges $\edge uv$ of $G$. If its endpoints belong to the same blob, we associate the edge with the blob. If its endpoints do not belong to the same blob, we add the pair $\edge{\phi(u)}{\phi(v)}$ to a list. Having processed all edges, we check whether the constructed list and the edge list of $C$ are identical by first sorting both lists using radix sort and then comparing them for identity.

We finally have to check that the blobs of $C$ correspond to 3-edge-connected components of $G$. Our goal is to use the certifying algorithm for $3$-edge-connectivity on the substructures of $G$ that represent $3$-edge-connected components. Let $B$ be any blob. We already collected the edges having both endpoints in $B$. We also have to account for the paths using edges outside $B$.  We do so by creating an edge $\edge uv$ for a every path in $G$ leaving $B$ at vertex $u$ and returning to $B$ at vertex $v$. 
It is straightforward to compute these edges; we look at all edges having exactly one endpoint in the blob. Each such edge corresponds to an edge in $C$. For each such edge, we know to which cycle it belongs. The outgoing edges pair up so that the two edges of each pair belong to the same cycle. 

The maximality of each blob $B$ is given by the fact that every edge of $C$ is contained in a 2-edge-cut of $C$ and hence contained in a 2-edge-cut of $G$.

Every algorithm for computing the 3-edge-connected components of a graph, e.g.~\cite{Nagamochi1992a,Taoka1992,Tsin2007,Tsin2009,Nagamochi-Ibaraki-Book}, can be turned into a certifying algorithm for computing the cactus representation of 2-cuts. We obtain the cactus $C$ and the mapping $\phi$ by contraction of the 3-edge-connected components (Lemma~\ref{cactus representation}). Then one applies our certifying algorithm for 3-edge-connectivity to each 3-edge-connected component. The drawback of this approach is that it requires  \emph{two} algorithms that check 3-connectivity. 
In the next section we will show how to extend our algorithm so that it computes the 3-edge-connected components and the cactus representation of 2-cuts of a graph directly.

\section{Computing a Cactus Representation}\label{Computing a Cactus}

We discuss how to extend the algorithm to construct a cactus representation. We begin by examining the structure of the 2-cuts of $G$ more closely to extend our algorithm such that it finds all 2-cuts of the graph and encodes them efficiently.

We will first show that the two edges of every $2$-edge-cut of $G$ are contained in a common chain. This restriction allows us to focus on the $2$-edge-cuts that are contained in the currently processed chain $C_i$ only. 
In the subsequent section, we show how to maintain a cactus for every phase $i$ of the algorithm that represents all $2$-edge-cuts of the graph of the branch vertices and links of $C_1 \cup \ldots \cup C_i$ in linear space. The final cactus will therefore represent all $2$-edge-cuts in $G$.

There is one technical detail regarding the computation of overlap graphs: For the computation of a Mader-sequence in Section~\ref{sec:linear time alg}, we stopped the algorithm when the first $2$-edge-cut occurred, as then a Mader-sequence does not exist anymore. Here, we simply continue the algorithm with processing the next chain $C_{i+1}$. This does not harm the search for cuts in subsequent chains, as the fact that $2$-edge-cuts are only contained in common chains guarantees that every $2$-edge-cut that contains an edge $e$ in $C_i$ has its second edge also in $C_i$.

For simplicity, we assume that $G$ is 2-edge-connected and has minimum degree three from now on. Then all 3-edge-connected components contain at least two vertices.

\subsection{2-Edge-Cuts are Contained in Chains and an Efficient Representation of All Cuts in a Chain}\label{sec:efficient cut representation}

In phase $i$ of the algorithm, using Lemma~\ref{lem:add_hard_segments}, we can find a 2-edge-cut for each connected component of the overlap graph $H$ that does not contain $R$ ($R$ is the special vertex in $H$ that represents the branch vertices on $C_i$). Lemma~\ref{lem:h induces all cuts} shows that the set of edges contained in these cuts is equal to the set of edges contained in any cut on $C_i$.
Lemma~\ref{lem:facts about cuts} states easy facts about $2$-edge-cuts, in particular, that the edges of any 2-edge-cut are contained in a common chain. The proofs can be found in many 3-connectivity papers, e.g.~\cite{Nagamochi1992a,Taoka1992,Tsin2007,Tsin2009}. As in the previous sections, all DFS-tree-edges are oriented towards the root, while back-edges are oriented away from the root.

\begin{lemma}\label{lem:facts about cuts}
Let $T$ be a DFS-tree of a $2$-edge-connected graph $G$. Every $2$-edge-cut $(\edge uv,\edge xy)$ of $G$ satisfies the following:

\begin{compactenum}[(1)]
  \item At least one of $\edge uv$ and $\edge xy$ is a tree-edge, say $\edge xy$.\label{not both backedges}
  \item $G-\edge uv-\edge xy$ has exactly two components. Moreover, the edges $\edge uv$ and $\edge xy$ have exactly one endpoint in each component.\label{endpoints disconnected}
  \item The vertices $u$, $v$, $x$, and $y$ are contained in the same leaf-to-root path of $T$.\label{on a tree path}
  \item If $\edge uv$ and $\edge xy$ are tree-edges and w.l.o.g.\ $u \leq y$, the vertices in $y \rightarrow_T u$ and $\{x,v\}$ are in different components of $G-\edge uv-\edge xy$.\label{components two tree-edges}
  \item If $\edge uv$ is a back-edge, then $\edge xy \in (v \rightarrow_T u)$ and, additionally, the vertices in $v \rightarrow_T x$ and $y \rightarrow_T u$ are in different components of $G-\edge uv-\edge xy$.\label{components back and tree-edge}
\end{compactenum}\smallskip
\noindent Moreover, let $\calC$ be a chain decomposition of $G$. For every $2$-edge-cut $\{\edge uv,\edge xy\}$ of $G$, $\edge uv$ and $\edge xy$ are contained in a common chain $C \in \calC$.

\end{lemma}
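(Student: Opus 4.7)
For (1), if both $\edge uv$ and $\edge xy$ were back-edges, the DFS-tree $T$ would survive in $G-\edge uv-\edge xy$ and keep it connected, contradicting the cut property. For (2), 2-edge-connectivity of $G$ makes $G-\edge uv$ connected, so deleting one more edge creates at most two components, hence exactly two by hypothesis. If both endpoints of $\edge uv$ lay in the same component of $G-\edge uv-\edge xy$, reinserting $\edge uv$ would keep them there and leave $G-\edge xy$ disconnected, contradicting 2-edge-connectivity; symmetrically for $\edge xy$.

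The heart of the proof is a structural analysis via the DFS-tree cuts induced by the tree-edge(s) in the cut. By (1), at least one edge, say $\edge xy$, is a tree-edge; by the paper's convention $y=p(x)$. Let $T_x$ denote the subtree rooted at $x$. The essential DFS fact is that the only edges of $G$ joining $T_x$ to $T\setminus T_x$ are $\edge xy$ itself and back-edges $\edge ab$ with $a\leq y$ and $b\in T_x$ (undirected DFS has no cross-edges). Because $\{\edge uv,\edge xy\}$ is a cut, the second crossing edge of this partition must be $\edge uv$ and the unique other crosser -- otherwise some back-edge would still connect $T_x$ and $T\setminus T_x$ in $G-\edge uv-\edge xy$.

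For (5), when $\edge uv$ is a back-edge, uniqueness forces $u\leq y$ and $v\geq x$, so $u,v,x,y$ lie on a common root-to-leaf path with $u\leq y<x\leq v$ and $\edge xy$ lies on $v\rightarrow_T u$. The two components of $G-\edge uv-\edge xy$ are then exactly $T_x$ and $T\setminus T_x$; since $v\rightarrow_T x\subseteq T_x$ and $y\rightarrow_T u$ consists of strict ancestors of $x$, the two subpaths lie in different components. This case also yields the mixed-edge sub-case of (3).

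For the all-tree-edge case of (3) and for (4), let $\edge uv$ also be a tree-edge with $v=p(u)$. First I would prove $u$ and $x$ lie on a common root-to-leaf path: if not, $T_u$ and $T_x$ are disjoint; by 2-edge-connectivity of $G$, there must be back-edges from $T_u$ and from $T_x$ to the complementary part $M=T\setminus(T_u\cup T_x)$, and together with $M$'s internal tree these merge the three tree-components of $T-\edge uv-\edge xy$ into a single component of $G-\edge uv-\edge xy$, contradicting (2). Hence $u$ and $x$ are comparable, and distinctness of the two edges forces $u\neq x$, so after relabeling $u<x$ and therefore $u\leq p(x)=y$. Now $T_x\subsetneq T_u$, and $T-\edge uv-\edge xy$ has three tree-components $A_1=T\setminus T_u$, $A_2=T_u\setminus T_x$, $A_3=T_x$. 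A short case analysis on how back-edges can merge these three shows that the mergings $\{A_1,A_2\}$ and $\{A_2,A_3\}$ each force one of $\edge uv$, $\edge xy$ to be crossed by no back-edge and thus to be a 1-cut of $G$, contradicting 2-edge-connectivity; the surviving pattern is $\{A_1,A_3\}$ merged and $A_2$ separate. Since $y\rightarrow_T u\subseteq A_2$ while $x\in A_3$ and $v\in A_1$, the claim of (4) follows. The delicate point will be the degenerate configuration $u=y$, where $y\rightarrow_T u$ collapses to $\{y\}$ and one must check that $A_2$ still contains $y$ (it does, as $y\in T_u\setminus T_x$); the rest of the analysis proceeds identically.
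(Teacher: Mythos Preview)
Your proof is correct. For (1) and (2) you argue exactly as the paper does. For (3)--(5) you and the paper take somewhat different routes.

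The paper's approach is more economical: it never identifies the two components explicitly. For the back-edge sub-case of (3), it argues by contradiction that if $\edge xy\notin v\rightarrow_T u$ then the tree path $v\rightarrow_T u$ survives in $G-\edge uv-\edge xy$ and connects $u$ to $v$, contradicting (2). For (4), it simply observes that the tree path $y\rightarrow_T u$ survives (it uses neither removed edge), so all its vertices lie in one component, and then (2) forces $x$ and $v$ into the other. For (5), the two sub-paths $v\rightarrow_T x$ and $y\rightarrow_T u$ survive and (2) applied to $\edge xy$ finishes. No case analysis on mergings of tree-components is needed.

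Your approach is more structural: you pin down the components as exactly $T_x$ and $T\setminus T_x$ in the back-edge case, and $A_2=T_u\setminus T_x$ versus $A_1\cup A_3$ in the two-tree-edge case, via a case analysis on which pairs of the three tree-components can merge. This yields strictly more information than the lemma states, at the cost of a longer argument. On the other hand, your incomparability argument for (3) in the two-tree-edge case is more transparent than the paper's rather terse line ``Since $\edge xy$ is a tree-edge, $G$ does not contain any back-edges connecting these components of $x$ and $y$. Hence, $G-\edge uv$ is disconnected,'' which requires some unpacking.

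Both arguments are valid; the paper's is shorter because it leverages (2) aggressively as a black box, while yours gives a complete picture of the cut structure.
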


\begin{lemma}\label{lem:h induces all cuts}
Let $\mathcal E$ be the set of edges that are contained in the 2-edge cuts induced by the connected components of the overlap graph $H$ at the beginning of part II of phase $i$. Then any 2-edge-cut $\{\edge xy, \edge uv\}$ on $C_i$ is a subset of $\mathcal E$.
\end{lemma}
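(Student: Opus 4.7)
The plan is to show, for each of the two edges of the cut $K = \{\edge{x}{y}, \edge{u}{v}\}$, that it appears in a 2-edge-cut induced (via Lemma~\ref{lem:add_hard_segments}) by some connected component of $H$ that avoids $R$. First, using Lemma~\ref{lem:cuts on a chain} together with Lemma~\ref{lem:facts about cuts}, I would characterize the cut structurally: the two components $V_1$ and $V_2$ of $G - K$ partition the vertices of $C_i$ so that $V_1 \cap V(C_i)$ is a contiguous subpath $[v_s, v_d]$ of the tree-portion of $C_i$ (possibly a single vertex), where $v_s$ is the shallower (most ancestor) and $v_d$ the deeper extreme. The two edges of $K$ are exactly the two edges of $C_i$ bounding $[v_s, v_d]$: the tree-edge $\edge{v_s}{p(v_s)}$ above $v_s$, and the edge of $C_i$ joining $v_d$ to its chain-predecessor (the back-edge of $C_i$ iff $v_d$ is its target). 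In particular, $s(C_i)$ and $t(C_i)$ lie in $V_2$, and since they are connected in $G-K$ to all their tree-ancestors via non-cut edges, $V_2$ also contains the DFS root $r$.

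The main obstacle will be to establish that $V_1 \cap V(C_i)$ contains no branch vertex of $G_c$ at the beginning of part II of phase $i$. Suppose $b \in V_1 \cap V(C_i)$ is such a branch vertex. Because $b$ is inner to $C_i$, its two incident edges on $C_i$ already contribute degree two, so $b$ must be an endpoint of some other chain $D \in G_c$ with $D \neq C_i$. Since the edges of $D$ are disjoint from those of $C_i$ (and hence from $K$), the chain $D$ is connected in $G - K$; combined with $b \in V_1$, this forces $D \subseteq V_1$. I would then rule out both endpoint cases: if $b = t(D)$, then $p(D) = C_i$ so $D$ is a child of $C_i$; it cannot be nested (nested children of $C_i$ are added only in part II of phase $i$, hence not yet in $G_c$), nor interlacing (an interlacing child of $C_i$ satisfies $s(D) \le t(C_i) \in V_2$, contradicting $D \subseteq V_1$). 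If $b = s(D)$, then $D$ was added during part I of phase $i$ as an interlacing chain; but $t(D) \in V(p(D)) \cap V_1$ forces the connected chain $p(D)$ into $V_1$, and iterating up the chain tree eventually forces every ancestor chain---in particular $C_1$---into $V_1$, contradicting $r \in V(C_1) \cap V_2$.

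It follows that the attachment span of every nested-rooted segment whose attachments all lie in $V_1 \cap V(C_i)$ (call this an \emph{inside} segment) contains no branch vertex, so no inside segment is adjacent to $R$ in $H$. Since inside and outside segments have disjoint attachment intervals on $C_i$ they cannot overlap in $H$, so the inside segments constitute a union of complete connected components of $H$, all disjoint from $R$. To locate the required components I would apply the same case analysis with $b$ replaced by $v_s$: because $G$ has minimum degree three, $v_s$ is an endpoint of some chain $D \neq C_i$, and the case analysis rules out $D \in G_c$, forcing $D$ into a part-II segment of phase $i$. Hence $v_s$ is an attachment of some inside segment $S$; since attachments of inside segments cannot lie beyond $V_1 \cap V(C_i)$, the component $\calC$ of $H$ containing $S$ has $v_s$ as its most-ancestor extremal attachment, and by Lemma~\ref{lem:add_hard_segments} the cut induced by $\calC$ contains $\edge{v_s}{p(v_s)}$, one of the edges of $K$. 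The symmetric argument for $v_d$ yields the other edge of $K$, establishing $K \subseteq \mathcal{E}$.
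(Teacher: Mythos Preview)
Your overall strategy coincides with the paper's: isolate the subpath $[v_s,v_d]$ of $C_i$ lying in the component $V_1$, argue it carries no branch vertex of $G_c$, observe that every nested-rooted segment meeting $[v_s,v_d]$ is therefore confined to $[v_s,v_d]$ and disconnected from $R$ in $H$, and then read off the two boundary edges of $[v_s,v_d]$ as extremal cut edges of such components. The paper compresses the ``no branch vertex'' step into one line via Lemma~\ref{lem:facts about cuts}, while you try to prove it by a direct chain-by-chain analysis; that is fine in spirit, but your execution in the case $b=s(D)$ contains a genuine gap.

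In that case you assert that ``iterating up the chain tree eventually forces every ancestor chain---in particular $C_1$---into $V_1$.'' This cannot work: since $s(D)=b$ s-belongs to $C_i$ and $t(D)$ s-belongs to $p(D)$, Lemma~\ref{facts about chain decomposition}(\ref{parent on chains follows parent on nodes}) gives $C_i\le p(D)$, and we already excluded $p(D)=C_i$, so $C_i$ is a \emph{proper} ancestor of $D$ in the chain tree. Hence the walk $D,p(D),p(p(D)),\ldots$ reaches a child $E$ of $C_i$ \emph{before} it can reach $C_1$. At that step your inference ``$t(E)\in V_1$ forces $p(E)\subseteq V_1$'' fails, because $p(E)=C_i$ contains the edges of $K$ and is not connected in $G-K$; in particular $C_1$ is never forced into $V_1$.

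The repair is short and reuses exactly the dichotomy you already set up for $b=t(D)$. When the iteration reaches $E$ with $p(E)=C_i$, parent-closedness gives $E\in G_c$ at the start of part~II. If $E$ were nested it would be the nested root of its segment and hence still outside $G_c$ at that moment, so $E$ must be interlacing; then $s(E)\le t(C_i)$, and the tree path $s(E)\rightarrow_T t(C_i)$ avoids $K$, placing $s(E)$ in $V_2$ and contradicting $E\subseteq V_1$. With this correction the remainder of your argument (inside segments form full $H$-components avoiding $R$, and $v_s$, $v_d$ are extremal attachments of such components) goes through and matches the paper's conclusion.
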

\begin{proof}
Assume for the sake of contradiction that there is an edge $\edge uv$ in the $2$-edge-cut that is not in $\mathcal E$. We distinguish the following cases.

First assume that both $\edge uv$ and $\edge xy$ are tree-edges and w.l.o.g.\ $v<u\le y<x$. Since $G$ has minimal degree three, every vertex on $C_i$ has an incident edge that is not on $C_i$. Hence it is either a branch vertex, or belongs to some segment with respect to $C_i$ (incident back-edges start chains in segments w.r.t.\ $C_i$, incident tree edges are the last edges of chains in segments w.r.t.\ $C_i$). As $s(C_i) \le v$ is a branch vertex, by Lemma~\ref{lem:facts about cuts}.(\ref{components two tree-edges}) the path $y\rightarrow_T u$ can not contain a branch vertex. In particular, $u$ is not a branch vertex.

Let $S_u$ be any segment having $u$ as attachment vertex. All segments in the connected component of $S_u$ in $H$ must have their attachment vertices on $y\rightarrow _T u$ and the connected component does not contain $R$. Hence this connected component induces a cut containing $\edge uv$.

Now assume that one of $\edge uv$ and $\edge xy$ is a back-edge. If $\edge uv$ is the back-edge, then $u=s(C_i)$ and we have $u<y<x<v$ by Lemma~\ref{lem:facts about cuts}. The path $v\rightarrow_T x$ cannot contain a branch vertex. Let $S_v$ be any segment that has $v$ as attachment vertex. All segments in the connected component of $S_v$ must have their attachment vertices on $v\rightarrow_T x$ and the connected component does not contain $R$. Hence $\edge uv$ is contained in a cut induced by this connected component.

If on the other hand $\edge uv$ is the tree-edge we have $y<v<u<x$ basically the same argument applies when we replace $S_v$ by a segment $S_u$ containing $u$.
\end{proof}


We next show how to compute a space efficient representation of all 2-cuts on the chain $C_i$. Using this technique we can store all 2-cuts in $G$ in linear space. In the next section we will then use this to construct the cactus-representation of all 2-cuts in $G$.

Number the edges in $C_i$ as $e_1$, $e_2$, \ldots, $e_k$. Here $e_1$ is a back edge and $e_2$ to $e_k$ are tree edges. We start with a simple observation. Let $h < i < j$. If $(e_h,e_i)$ and $(e_i,e_j)$ are 2-edge-cuts, then $(e_i,e_j)$ is a 2-edge-cut.

Using this observation, we want to group the edges of $2$-edge-cuts of $C_i$ such that (i) every two edges in a group form a $2$-edge-cut and (ii) no two edges of different groups form a $2$-edge-cut. The existence of such a grouping has already been observed in~\cite{Nagamochi1992a,Taoka1992,Tsin2009}. We show how to find it using the data structures we have on hand during the execution of our algorithm.

Consider the overlap graph $H$ in phase $i$ of our algorithm. We need some notation. Let $I$ be the set of intervals on $C_i$ that contains for every component of $H$ (except the component representing the branch vertices on $C_i$) with extremal attachment vertices $a$ and $b$ the interval $[a,b]$. Since the connected components of $H$ are maximal sets of overlapping intervals, $I$ is a \emph{laminar family}, i.e.\ every two intervals in $I$ are either disjoint or properly contained in each other. In particular, no two intervals in $I$ share an endpoint.
The layers of this laminar family encode which edges form pairwise 2-cuts in $G$, see Figure~\ref{fig:laminar family}. We define an equivalence relation to capture this intuition.

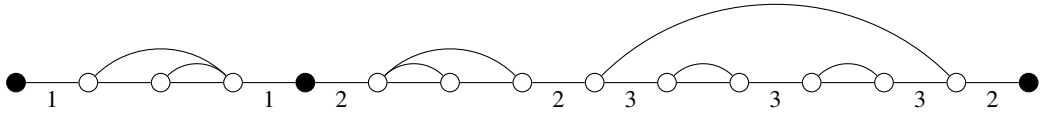
\begin{figure}
\centering
\begin{tikzpicture}[node distance=4ex]
\tikzstyle{vertex}=[circle, draw, inner sep=0.5ex]
\node[vertex, fill=black] (1) {};
\foreach \i in {2,3,4} {
  \pgfmathtruncatemacro{\left}{\i-1}
  \node[vertex, right=of \left] (\i) {};
}
\node[vertex, fill=black, right=of 4] (5) {};
\foreach \i in {6,...,14} {
    \pgfmathtruncatemacro{\left}{\i-1}
  \node[vertex, right=of \left] (\i) {};
}
\node[vertex, fill=black, right=of 14] (15) {};

\foreach \i/\j in {2/4, 3/4, 6/7, 6/8, 9/14,10/11,12/13} {
  \draw (\i) edge [out=45, in=135] (\j);
}
\foreach \i in {3,4,7,8,11,13} {
  \pgfmathtruncatemacro{\left}{\i-1}
  \draw (\i) edge (\left);
}

\footnotesize
\foreach \i in {2,5} {
  \pgfmathtruncatemacro{\left}{\i-1}
  \draw (\i) edge node [below] {1} (\left);
}
\foreach \i in {6,9,15} {
  \pgfmathtruncatemacro{\left}{\i-1}
  \draw (\i) edge node [below] {2} (\left);
}
\foreach \i in {10,12,14} {
  \pgfmathtruncatemacro{\left}{\i-1}
  \draw (\i) edge node [below] {3} (\left);
}
\end{tikzpicture}
\caption{The intervals induced by the connected components of the overlap graph $H$ form a laminar family. The levels of this family encode which edges form pairwise 2-cuts. Two edges in the figure are labeled with the same number if they form a cut. Filled vertices are branch vertices.}
\label{fig:laminar family}
\end{figure}

For an interval $[a,b]$, $a < b$, let $\ell([a,b])$ and $r([a,b])$ be the edges of $C_i$ directly before and after $a$ and $b$, respectively. We call $\{\ell([a,b]),r([a,b])\}$ the \emph{interval-cut} of $[a,b]$. For a subset $S \subseteq I$ of intervals, let $E_S$ be the union of edges that are contained in interval-cuts of intervals in $S$. According to Lemma~\ref{lem:h induces all cuts}, every $2$-edge-cut in $C_i$ consists of edges in $E_I$.

We now group the edges of $E_I$ using the observation above. Let two intervals $I_1 \in I$ and $I_2 \in I$ \emph{contact} if $r(I_1) = \ell(I_2)$ or $\ell(I_1) = r(I_2)$. Clearly, the transitive closure $\equiv$ of the contact relation is an equivalence relation. 
Every block $B$ of $\equiv$ is a set of pairwise disjoint intervals which are contacting consecutively.
This allows us to compute the blocks of $\equiv$ efficiently. We can compute them in time $|I|$ and store them in space $|I|$ by using a greedy algorithm that iteratively extracts the inclusion-wise maximal intervals in $I$ that are contacting consecutively. 

\begin{lemma}[\cite{Nagamochi1992a,Taoka1992,Tsin2009}]
Two edges $e$ and $e'$ in $C_i$ form a $2$-edge-cut if and only if $e$ and $e'$ are both contained in $E_B$ for some block $B$ of $\equiv$.
\end{lemma}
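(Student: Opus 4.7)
The plan is to prove the two implications separately, with the backward direction being a short induction and the forward direction requiring a structural analysis of the cut.

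For the backward direction, each block $B$ is linear, so its intervals can be listed as $I_1, I_2, \ldots, I_p$ with $r(I_q) = \ell(I_{q+1})$ for all $q$. Lemma~\ref{lem:add_hard_segments} gives that each interval-cut $\{\ell(I_q), r(I_q)\}$ is a $2$-edge-cut. Applying the observation at the start of this section (if $\{e_a, e_b\}$ and $\{e_b, e_c\}$ are $2$-cuts then so is $\{e_a, e_c\}$) inductively along the contact sequence then shows that every pair of edges in $E_B$ forms a $2$-edge-cut.

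For the forward direction, let $\{e_h, e_j\}$ with $h<j$ be a $2$-edge-cut and set $W = \{v_h, v_{h+1}, \ldots, v_{j-1}\}$, the portion of $C_i$ isolated by the cut. Lemma~\ref{lem:h induces all cuts} places both $e_h$ and $e_j$ in $E_I$. The heart of the argument is the structural claim that $W$ contains no branch vertex of $G_c$. I plan to prove this by supposing $w \in W$ were a branch vertex, so that $w$ would be an endpoint of some chain $C'' \neq C_i$ already in $G_c$. Since $w$ is an inner vertex of $C_i$ it s-belongs to $C_i$; combining this with the chain classification and the fact that at the start of Part~II of phase $i$ only interlacing-root segments of $C_i$ have been added, one sees that either $s(C'')$ lies outside $W$---so the chain $C''$ provides a path leaving $W$ in $G - e_h - e_j$, contradicting that $\{e_h, e_j\}$ is a cut---or the argument bootstraps by producing another branch vertex of $W$ strictly closer to $v_{j-1}$, a process that cannot iterate indefinitely.

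Once $W$ has no branch vertex, each $v \in W$ has degree at least three in $G$ but only two in $G_c$, so $v$ is an attachment of some segment with respect to $C_i$. That segment's connected component in $H$ yields an interval $[a,b] \in I$ containing $v$; the cut property forces all segments of the component to attach inside $W$, so $[a,b] \subseteq \{h, \ldots, j-1\}$. The top-level intervals of $I$ within $W$ are pairwise disjoint and together cover $W$---any vertex between two consecutive ones would itself have to lie in a further top-level interval, contradicting maximality. Hence consecutive top-level intervals are adjacent at $C_i$-vertices and contact each other, so they all lie in a common block $B$; the leftmost has $\ell$-edge $e_h$ and the rightmost has $r$-edge $e_j$, giving $e_h, e_j \in E_B$. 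The main obstacle I expect is the branch-vertex case analysis in the forward direction, whose subtleties come from the chain classification and the precise state of $G_c$ at the start of Part~II.
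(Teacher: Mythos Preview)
Your backward direction matches the paper's proof exactly. For the forward direction, however, you take a genuinely different route.

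The paper argues by contradiction at the level of blocks: assuming $e\in E_B$ and $e'\in E_{B'}$ with $B\neq B'$, it takes the block with the smaller span $[\pred(B),\scc(B)]$, observes that (unless $\pred(B)$ or $\scc(B)$ is a branch vertex) there must be intervals $I_1,I_2\in I$ properly containing $[\pred(B),\scc(B)]$, and uses the connectivity furnished by the components those intervals represent to force $e'$ inside $[\pred(B),\scc(B)]$, contradicting minimality. This argument works entirely inside the laminar structure of $I$ and never needs to analyse branch vertices or the chain classification.

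Your approach instead isolates the $C_i$-subpath $W$ cut off by $\{e_h,e_j\}$, argues it contains no branch vertex, and then covers $W$ by maximal intervals of $I$ to assemble a single block. The covering part (steps 3--4) is clean and correct: once every vertex of $W$ is an attachment of some nested-root segment and all such intervals lie inside $W$, laminarity and minimum-degree three force the top-level intervals to tile $W$ with contacts, and the extreme ones pick up $e_h$ and $e_j$.

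The weak point is step 1. Your sketch (``either $s(C'')$ lies outside $W$ \ldots\ or bootstrap to another branch vertex closer to $v_{j-1}$'') does not obviously terminate and conflates ``outside $W$'' with ``in the other component of $G-e_h-e_j$''. What you actually need is that any chain $C''\neq C_i$ in $G_c$ touching $w\in W$ must reach the component containing $s(C_i)$ and $t(C_i)$. This is true, but the clean way to see it is to split into two regimes: chains already in $G_c$ at the \emph{start} of phase $i$ (for which $w=t(C'')$ forces $C''$ to be an interlacing child of $C_i$, hence $s(C'')\le t(C_i)$ lies in the other component; while $w=s(C'')$ is impossible since then $s(C'')$ s-belongs to $C_i$), and chains added in part I (interlacing-root segments, whose minimal chain has its target off $C_i$ or at $t(C_i)$, again in the other component). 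Your proposal gestures at this but does not carry it out; as written, the bootstrap clause is not a proof. If you fill this in, your approach goes through and is arguably more geometric than the paper's; the paper's size argument, on the other hand, avoids the chain-classification case analysis entirely.
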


\subsection{An Incremental Cactus Construction}

In this section we show how to construct a cactus representation incrementally along our algorithm for constructing a Mader sequence. At the beginning of each phase $i$, we will have a cactus for the graph $G^i$ whose vertices are the branch vertices that exist at this time and whose edges are the links between these branch vertices.

We assume that $G$ is 2-edge-connected but not 3-edge connected and that $G$ has minimum degree three. This ensures that in phase $i$ every vertex on the current chain $C_i$ belongs to some segment or is a branch vertex.

We will maintain a cactus representation $(C,\phi)$, i.e., for every node $v$ of $C$, the blob $B = \phi^{-1}(v)$ is the vertex-set of a $3$-edge-connected component in $G^i$. We begin with a single blob that consists of the two branch vertices of the initial $K_2^3$, which clearly are connected by three edge-disjoint paths.

Consider phase $i$, in which we add all chains whose source s-belongs to $C_i$. At the beginning of the phase, the endpoints of $C_i$ and some branch vertices on $C_i$ already exist in $G^i$. We have a cactus representation of the current graph. The endpoints of $C_i$ are branch vertices and belong to the same blob $B$, since 2-edge-cuts are contained in chains.

We add all segments that do not induce cut edges and tentatively assign all vertices of $C_i$ to $B$. If the algorithm determines that $C_i$ does not contain any $2$-edge-cut, the assignment becomes permanent, the phase is over and we proceed to phase $i+1$. Otherwise we calculate the efficient representation of $2$-edge-cuts on $C_i$ from Sect.~\ref{sec:efficient cut representation}.

Let $e_1$ be the first edge on $C_i$ in a $2$-edge-cut, let $A$ be a block of the contact equivalence relation described in the last section containing $e_1$ and let $E_A = \sset{e_1,e_2,\ldots,e_\ell}$ such that $e_j$ comes before $e_{j+1}$ in $C_i$ for all $j$. Then every two edges in $E_A$ form a 2-edge-cut. We add a cycle with $\ell-1$ empty blobs $B_2,\ldots,B_\ell$ to $B$ in $C$. The $\ell$ new edges correspond to the $\ell$ edges in $E_A$.

For every pair $e_j=(a,b)$, $e_{j+1}=(c,d)$ in $E_A$ we remove the vertices between these edges from $B$. Since the edges in $C_i$ are linearly ordered, removing the vertices in a subpath takes constant time. We place the end vertices $b$ and $c$ of the path between $e_j$ and $e_{j+1}$ in the blob $B_j$, add the segments that induced this cut and recurse on the path between $b$ and $c$. That is, we add all vertices on the path from $b$ to $c$ to $B_j$, check for cut edges on this path, and, should some exist, add more blobs to the cactus. The construction takes constant time per blob. Figure~\ref{fig:cactus} shows an example.

\begin{figure}[t]
\centering
\begin{tikzpicture}
\tikzstyle{vertex}=[circle, draw, inner sep=0.5ex]
\footnotesize
\node (ci) {\normalsize $C_i$:};
\node [vertex, right=of ci] (0) {0};
\foreach \n/\e in {1/a,2/b,3/c,4/d,5/e,6/f,7/g,8/h} {
  \pgfmathtruncatemacro{\left}{\n-1}
  \node [vertex] (\n) [right=of \left] {\n};
  \draw  (\left) -- node[below=5mm, anchor=base] {\e} (\n);
};

\node (x1) [above= of 4,vertex] {};
\draw (1) edge [in=180] (x1);
\draw (x1) edge (4)
  edge [bend left] (5);
\draw (2) edge [out =45] (3);

\draw (6) edge [out = 45] (7);
\end{tikzpicture}\\\vspace{5ex}
\begin{tikzpicture}
\footnotesize
\tikzstyle{blob}=[draw, ellipse]
\node (a) [blob] {0, 8};
\node (b) [blob, above right=of a] {6, 7};
\node (c) [blob, above left = of a] {1, 4, 5};
\node (d) [blob, above left = of c] {2, 3};

\draw (a) edge node [right] {h} (b)
  edge node [left] {a} (c);
\draw (c) edge [bend left] node [left] {b} (d)
  edge [bend right] node [right] {d} (d);
\draw (b) edge node [above] {f} (c);
\end{tikzpicture}
\caption{\label{fig:cactus} The segments attached to chain $C_i$ and the corresponding part of the cactus. We first tentatively assign vertices 1--7 to the blob containing the endpoints $\{0,8\}$ of $C_i$. The top level cuts  are the pairs in the block $\sset{a,f,h}$. So we create a cycle with three edges and attach it to the blob containing 0 and 8. We move vertices 1--5 to the blob between $a$ and $f$, vertices 6--7 to the blob between $f$ and $h$, and keep vertices $0$ and $8$ in the parent blob. We then recurse into the first blob. The second level cuts are the pairs in the block $\sset{b,d}$. So we create a cycle with two edges and move vertices 2 and 3 to the new blob.}
\end{figure}
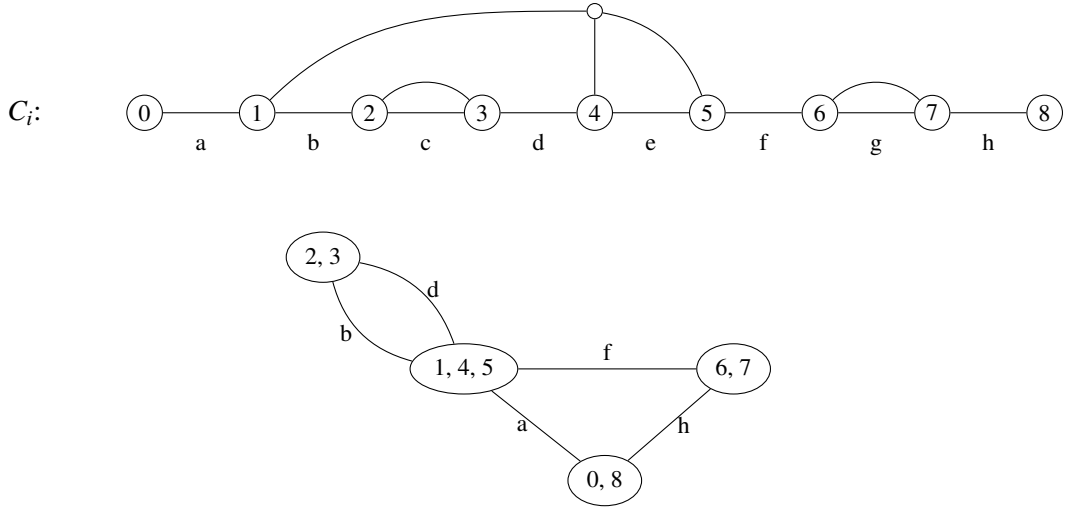

Graphs that contain nodes of degree two can be handled in the same way, if we add a cycle to each degree two node $u$. This cycle creates a segment w.r.t.\ the chain to which $u$ s-belongs and hence the algorithm correctly identifies the two incident edges as cut edges.



\begin{lemma} The above incremental procedure constructs a cactus representation of the 2-edge-cuts in $G$ in linear time.
\end{lemma}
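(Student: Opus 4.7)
The plan is to establish correctness by induction on the phase $i$, and to bound the running time by charging work to chains and segments.

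For correctness, I would maintain the invariant that at the end of phase $i$, the pair $(C,\phi)$ is a cactus representation of the $2$-edge-cuts of the graph $G^i$ obtained by ``forgetting'' everything not yet processed (formally, the smoothed union of $C_1,\ldots,C_i$ together with the segments already attached). The base case $i=2$ is immediate: the initial $K_2^3$ has no $2$-edge-cut, and the single blob containing its two branch vertices is trivially a cactus representation. For the inductive step, Lemma~\ref{lem:cuts on a chain} guarantees that any new $2$-edge-cut introduced in phase $i$ consists of edges of $C_i$, and Lemma~\ref{lem:h induces all cuts} tells us that every such cut is captured by the overlap graph $H$ computed in part~II of the phase. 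Therefore it suffices to argue that, given the efficient block representation of cuts on $C_i$ from Section~\ref{sec:efficient cut representation}, the incremental update produces the correct cactus.

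The heart of the argument is to match the blocks of the contact relation $\equiv$ to cycles of the cactus. By the lemma of Section~\ref{sec:efficient cut representation}, within one block $A = \{e_1,\ldots,e_\ell\}$ every pair $\{e_j,e_k\}$ is a $2$-edge-cut, while edges from different blocks never form a $2$-edge-cut. Thus $A$ generates exactly one cycle of length $\ell$ in the cactus, and the vertices of $C_i$ lying between consecutive edges $e_j,e_{j+1}$ must go into a common blob (distinct from the one between $e_{j+1},e_{j+2}$), which is precisely what the algorithm does. The recursive call on each subpath between $e_j$ and $e_{j+1}$ then handles the cuts nested strictly inside the interval induced by $A$; since the intervals associated with blocks form a laminar family, this recursion is well-defined and terminates. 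Combined with the inductive hypothesis for blobs that existed before phase $i$, this shows that the updated $(C,\phi)$ is a cactus representation of $G^{i+1}$ after phase $i$. The final check is that each blob is exactly a $3$-edge-connected component: any two vertices placed in the same blob by the algorithm lie on the same subpath of $C_i$ that survives all splits by blocks of $\equiv$, so no $2$-edge-cut separates them, while vertices placed in distinct blobs are separated by one of the recorded $2$-edge-cuts. This matches the characterization in Lemma~\ref{cactus representation}.

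For the running time, in phase $i$ the work splits into: (a) building the overlap graph $H$ and the block decomposition of $\equiv$, which by Section~\ref{sec:linear time alg} and Section~\ref{sec:efficient cut representation} takes time $O(|C_i| + s_i)$, where $s_i$ counts the segments rooted on $C_i$; (b) walking along $C_i$ while splitting blobs and creating new cactus nodes, where each vertex of $C_i$ is touched a constant number of times and each new cactus node/edge is created in $O(1)$; and (c) attaching the segments, which is $O(s_i)$ amortized since each segment is attached once. Since the chains partition $E(G)$ and each segment corresponds to a chain, $\sum_i(|C_i|+s_i)=O(n+m)$, giving the claimed linear total time.

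The main obstacle I expect is (c) in the correctness argument: verifying that the recursive descent into each sub-interval correctly inherits the local overlap structure, so that no $2$-edge-cut is missed and no spurious cut is introduced. The subtlety is that the segments attached in the current phase are precisely those whose attachment points were consumed by the interval $[a,b]$ of the ancestor block, and one has to check that the contact equivalence restricted to these nested intervals coincides with the block structure obtained by running the Section~\ref{sec:efficient cut representation} construction on the sub-chain alone. This follows from laminarity of $I$, but requires care. The handling of degree-two vertices by attaching a dummy cycle, as described in the paragraph preceding the lemma, slots into this argument without change, since each such dummy simply contributes one extra segment.
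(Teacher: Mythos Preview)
Your approach is structurally sound but genuinely different from the paper's. You argue by induction on the phase, maintaining an invariant that the current cactus represents the 2-edge-cuts of the intermediate graph $G^i$, and you lean on laminarity of the interval family to justify the recursive descent. The paper instead gives a direct, non-inductive argument about the \emph{final} cactus: it checks that every vertex lands in exactly one blob, that each cactus edge lies on exactly one cycle (because cycles are attached at a single node), and then proves the two directions of the cut correspondence separately. The interesting direction --- that a cut of the cactus is a cut of $G$ --- is handled by a short timing argument: if some edge $\edge uv$ of $G$ crossed between the two sides without being a cactus edge, one looks at the moment the later of the two blobs $B_u$, $B_v$ was created and derives a contradiction with the cut edges of the cycle introduced at that moment. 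Your inductive route is more explicit about why the block structure matches cactus cycles; the paper's route is shorter and avoids tracking an intermediate invariant altogether.

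One genuine wrinkle in your invariant: after phase $i$ the current graph $G_c$ already contains chains $C_j$ with $j>i$ (they were added as parts of segments in earlier phases), yet their own segments have not been attached. Such a $C_j$ is still a single link, and $G^{i}$ may well have 2-edge-cuts involving the corresponding edge that are \emph{not} cuts of $G$ and are \emph{not} recorded in your cactus. So the invariant ``$(C,\phi)$ is a cactus representation of $G^i$'' is not literally true as stated. The fix is to weaken it to ``the cactus records exactly the 2-edge-cuts of $G$ whose edges lie on $C_1,\ldots,C_i$'', which is what your argument actually establishes and which suffices for the final conclusion. A second, smaller point: your running-time claim that ``each vertex of $C_i$ is touched a constant number of times'' does not obviously survive deep recursion; the paper's argument is that removing a subpath from a blob is $O(1)$ via list splicing, so the cost is $O(1)$ per blob created rather than per vertex visited.
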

\begin{proof} Each vertex in $G$ s-belongs to some chain. In the phase in which that chain is treated, all its vertices are added to a blob. Whenever we move a vertex to different blob, we remove it from its previous blob. Therefore each vertex of $G$ is contained in exactly one blob.

Whenever we add edges to the cactus, we do so by adding a cycle that shares exactly one node with the existing cactus. Hence every edge in the cactus lies on exactly one cycle.

Let $\{e_1,e_2\}$ be a 2-edge-cut in $G$. The two edges must lie on a common cycle in the cactus, since the edges in the cactus are in one-to-one correspondence with edges of $G$ and cutting a cycle in only one place cannot disconnect a graph. As the cycles of the cactus touch in at most one vertex, $e_1$ and $e_2$ are a cut in the cactus as well.

Conversely let $e_1'$, $e_2'$ be a cut in the cactus and let $e_1$, $e_2$ be the corresponding edges in $G$. Then $e_1'$, $e_2'$ must lie on some common cycle which, upon their removal, is split into two nonempty parts $H_1$, and $H_2$. Assume that $G -e_1 -e_2$ is still connected, then there must be a path from a vertex in the preimage of $H_1$ to a vertex in the preimage of $H_2$ in $G-e_1-e_2$. This path must contain at least one edge $\edge uv$ that does not participate in any 2-edge-cut, as otherwise it would be a path in the cactus as well. Moreover, $u$ and $v$ must lie in different blobs $B_u$ and $B_v$ of the cactus.

The one that was created last, say $B_u$, must be different from the initial blob. Consider the time when $B_u$ was created in the incremental construction of the cactus. We introduced a cycle to some preexisting blob $B^*$ on which all edges were cut edges, in particular the two cut edges incident to $B_u$. However, the edge $\edge uv$ still connects $B_u$ to the rest of graph, since $B_v$ also exists at this time, a contradiction.
\end{proof}

By applying the techniques of this section, the certifying algorithm for $3$-vertex-connectivity~\cite{Schmidt2013} (which is also based on chain decompositions) can be used to compute the $3$-vertex-connected components of a graph. This has been conjectured in~\cite[p.\ 18]{Schmidt2010b} and yields a linear-time certifying algorithm to construct a SPQR-tree of a graph; we refer to~\cite{Hopcroft1973,Gutwenger2001} for details about $3$-vertex-connected components and SPQR-trees. The full construction can be found in the appendix, section~\ref{3-Vertex Components}.

\section{Conclusion}\label{Conclusion}
We presented a certifying linear time algorithm for 3-edge-connectivity based
on chain decompositions of graphs. It is simple enough for use in a classroom setting and can serve as a gentle introduction to the certifying 3-vertex-connectivity algorithm of~\cite{Schmidt2013}. We also provide an implementation in Python, available at \url{https://github.com/adrianN/edge-connectivity}.

We also show how to extend the algorithm to construct and certify a cactus representation of all 2-edge-cuts in the graph. From this representation the 3-edge-connected components can be readily read off. The same techniques are used to find the 3-vertex-connected components using the algorithm from~\cite{Schmidt2013}, and thus present a certifying construction of SPQR-trees.

Mader's construction sequence is general enough to construct $k$-edge-connected graphs for any $k\geq 3$, and can thus be used in certifying algorithms for larger $k$. So far, though, it is unclear how to compute these more complicated construction sequences. We hope that the chain decomposition framework can be adapted to work in these cases too.


\appendix

\section{Computing a Spanning Subgraph of an Overlap Graph}

We first assume that all endpoints are pairwise distinct. We will later show how to remove this assumption by perturbation.

For every interval $I = [a,b]$ define its set of left and right neighbors:
\begin{align*}
L(I) &= \{ I' = [a',b'] ; a' < a < b' < b \},\\
R(I) &= \{ I' = [a',b'] ; a < a' < b < b' \}.
\end{align*}
If the set of left neighbors is nonempty, let the interval $I' \in L(I)$ with the rightmost right endpoint be the immediate left neighbor of $I$. Similarly, if the set of right neighbors is nonempty, the immediate right neighbor of $I$ is the interval in $R(I)$ with the leftmost left endpoint.

\begin{lemma} The graph $G'$ formed by connecting each interval to its immediate left and right neighbor (if any) forms a spanning subgraph of the overlap graph $G$ and has exactly the same connected components.
\end{lemma}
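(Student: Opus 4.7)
The subgraph direction is immediate: every edge of $G'$ joins an interval $I$ to an element of $L(I) \cup R(I)$, which by definition overlaps $I$, so $G' \subseteq G$. For the claim that the connected components coincide, it suffices to exhibit a $G'$-path between any two overlapping intervals. Fix such a pair $I = [a,b]$ and $J = [\alpha,\beta]$ with $a < \alpha < b < \beta$. Assuming all endpoints distinct (restored afterwards by perturbation), let $J^* = [a_*, b_*]$ be the immediate right neighbor of $I$ and $I^\dagger = [a^\dagger, b^\dagger]$ be the immediate left neighbor of $J$. Both exist because $J \in R(I)$ and $I \in L(J)$, and they contribute the $G'$-edges $I$--$J^*$ and $I^\dagger$--$J$.

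If $J^* = J$ or $I^\dagger = I$ we are done. Otherwise $a < a_* < \alpha$ and $b < b^\dagger < \beta$ strictly, and we split on the positions of $b_*$ and $a^\dagger$. If $b_* < \beta$ then $(J^*, J)$ is itself an overlapping pair, and its span $\beta - a_*$ is strictly less than $\beta - a$; an induction on the span produces a $G'$-path from $J^*$ to $J$, which we prepend with $I$--$J^*$ to obtain the path from $I$ to $J$. The symmetric case $a^\dagger > a$ uses $(I, I^\dagger)$ in the analogous way, appending $I^\dagger$--$J$.

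The remaining case is the doubly-nested one: $J \subsetneq J^*$ and $I \subsetneq I^\dagger$. Here the chain of strict inequalities $a^\dagger < a < a_* < \alpha < b^\dagger < \beta < b_*$ (the outer pieces come from the nesting, the inner pieces from the defining properties of $I^\dagger$ and $J^*$) immediately gives $a^\dagger < a_* < b^\dagger < b_*$, so $I^\dagger$ and $J^*$ overlap as well. Since the $G'$-edges $I$--$J^*$ and $I^\dagger$--$J$ are available, it suffices to connect $I^\dagger$ and $J^*$ in $G'$. For that step I would use a secondary induction on the containment measure $\tau(I,J) = |\{K : K \supsetneq I\}| + |\{K : K \supsetneq J\}|$, which drops strictly from $(I,J)$ to $(I^\dagger, J^*)$ because $I^\dagger$ and $J^*$ themselves leave the respective counted sets. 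Iterating this ``inflation'' must terminate after finitely many steps (since $\tau$ is a non-negative integer), at which point one of the easy cases above applies; unwinding along the always-present edges at each level completes the path.

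The principal obstacle is precisely this doubly-nested case: the span measure used in the easy cases strictly \emph{increases} under inflation, and $\tau$ need not decrease in the easy cases, so neither parameter alone is well-founded. The cleanest organization is to run the two inductions in a nested fashion (outer: closure under inflation, controlled by $\tau$; inner: the span reduction), verifying that each inflation step preserves the inductive pre-conditions it hands off. All other steps — verifying $G' \subseteq G$, the case split, and the perturbation argument for coincident endpoints — are routine.
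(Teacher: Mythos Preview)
Your case analysis is sound, and the observation that $I^\dagger$ and $J^*$ overlap in the doubly-nested situation is correct. But the termination argument has a real gap. You correctly note that neither span nor $\tau$ alone is well-founded, yet the ``nested induction'' you sketch does not repair this: whichever parameter you place on the outside, the inside step can increase it. Concretely, an easy step can increase $\tau$ arbitrarily (the new left interval $J^*$ may have many more strict supersets than $I$ did), so an outer induction on $\tau$ with inner induction on span breaks at the first easy step; and inflation strictly increases span, so the reverse nesting breaks at the first doubly-nested step. Lexicographic combinations fail for the same reason, and a linear combination $\tau + c\cdot(\text{span})$ would need $c$ both large (to absorb $\tau$-growth in easy steps) and small (to absorb span-growth under inflation). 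You have not exhibited any order on pairs that decreases at every step of your recursion, and without one the argument is incomplete.

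The paper sidesteps this by organizing the proof asymmetrically rather than two-sidedly. It picks a minimal counterexample $(I,I')$ minimizing the left endpoint $a$ of $I$, then runs only the \emph{one-sided} right walk from $I$ (repeatedly replace the left interval by its immediate right neighbour --- your easy case~1 alone) until it reaches some $U\supsetneq I'$; symmetrically the left walk from $I'$ reaches some $U'\supsetneq I$. Each one-sided walk terminates for a trivial reason (left, respectively right, endpoints move strictly monotonically), and the single ``inflation'' that produces $U,U'$ happens only at the end, where the minimality hypothesis on $a$ absorbs it: $U$ and $U'$ overlap with $a(U')<a$, contradiction. Your two-sided recursion interleaves the walks with inflation at every level, which is precisely what destroys the obvious measures. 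The fix is to decouple them as the paper does, or equivalently to phrase your argument as a minimal-counterexample on the leftmost endpoint.
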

\begin{proof} Clearly, every edge of $G'$ is also an edge of $G$ and hence
  connected components of $G'$ are subsets of connected components of $G$.

For the other direction, assume $I$ and $I'$ are overlapping intervals that are not connected in $G'$. Then $a < a' < b < b'$, where $I = [a,b] =: I_0$ and $I'=[a',b']$. Let $I_0,I_1,I_2,\ldots$ be such that $I_\ell = [a_\ell,b_\ell]$ is the immediate right neighbor of $I_{\ell - 1}$ for all $\ell \geq 1$. Consider the last $I_\ell$ in this sequence such that $a_\ell < a' < b_\ell < b'$; clearly, such an interval exists, as $I_0$ is such an interval. Then $I'$ is a right neighbor of $I_\ell$, but not the immediate right neighbor of $I_\ell$, as otherwise $I$ and $I'$ would be connected in $G'$. Hence, the immediate right neighbor $I_{\ell +1} =: U =: [c,d]$ of $I_\ell$ exists, is different from $I'$, and must contain $I'$. Thus
\[
	a < c < a' < b < b' < d.
\]
Starting from $I'$ and going to immediate left neighbors, we obtain in the same fashion an interval $U' = [c',d']$ with
\[
	c' < a < a' < b < d' < b'.
\]
We conclude that $U'$ and $U$ overlap, but are not connected in $G'$. 

Consider now a particular choice for the overlapping intervals $I$ and $I'$. We choose them such that the left endpoint of $I$ is as small as possible. However, the left endpoint of $U'$ is to the left of the left endpoint of $I$, and we have derived a contradiction. \qed
\end{proof}

\begin{algorithm}[t]
\caption{Finding a spanning forest of a overlap graph}
\label{alg:spanning_tree}
\begin{algorithmic}
	\Procedure{SP}{$I=\{[a_0,a_0'],\ldots, [a_\ell,a_\ell']\}$}
	\State stack = [\,]
	\State sort $I$ lexicographically in descending order
	\For{$[l,r]$ in $I$}
	  \While{stack not empty and $r>$ top(stack) right endpoint}
	    \State pop(stack)
	  \EndWhile
	  \If{stack not empty and $r\geq$ top(stack) left endpoint}
	    \State connect $[l,r]$, top(stack)
	  \EndIf
	  \State push(stack, $[l,r]$)
	\EndFor
	\State stack = [\,]
	\State sort $I$ lexicographically in ascending order where the key for $[l,r]$ is $[r,l]$
	\For{$[l,r]$ in $I$}
	  \While{stack not empty and $l$<top(stack) left endpoint}
	    \State pop(stack)
	  \EndWhile
	  \If{stack not empty and $l\leq$ top(stack) right endpoint}
	    \State connect $[l,r]$, top(stack)
	  \EndIf
	  \State push(stack, $[l,r]$)
	\EndFor
	\EndProcedure
\end{algorithmic}
\end{algorithm}

It is easy to determine all immediate right neighbors by a linear time sweep over all intervals. We sort the intervals in decreasing order of left endpoint and then sweep over the intervals starting with the interval with rightmost left endpoint. We maintain a stack $S$ of intervals, initially empty. If $I_1 = [a_1,b_1],\ldots,I_k=[a_k,b_k]$ are the intervals on the stack with $I_1$ being on the top of the stack, then $a_1 < a_2 < \ldots < a_k$ and $b_1 <b_2 < \ldots < b_k$, $I_1$ is the last interval processed, and $I_{\ell+1}$ is the immediate right neighbor of $I_\ell$ if $I_\ell$ has right neighbors. If $I_\ell$ does not have right neighbors, $a_{\ell +1} > b_{\ell}$. Let $I = [a,b]$ be the next interval to be processed. Its immediate right neighbor is the topmost interval $I_\ell$ on the stack with $b_\ell > b$ (if any). Hence we pop intervals $I_\ell$ from the stack while $b>b_\ell$ and then connect $I$ to the topmost interval if $b>a_\ell$, and push $I$. The determination of immediate left neighbors is symmetric.

It remains to deal with intervals with equal endpoints. We do so by perturbation. It is easy to see that the following rules preserve the reachability by overlaps and eliminate equal endpoints. E.g., in (4), the two intervals are forced to overlap, so reaching one of the two intervals gives a path to the other; the same reasoning motivates (2) and (3).
\begin{compactenum}[(1)]
\item if a left and a right endpoint are at the same coordinate, then the left endpoint is smaller than the right endpoint.
\item if two left endpoints are equal, the one belonging to the shorter interval is smaller.
\item if two right endpoints are equal, the one belonging to the shorter interval is larger.
\item if two intervals are equal, one is slightly shifted to the right.
\end{compactenum}
In other words, the endpoints of an interval $I_i=[a,b]$ are replaced by $((a,-1,b-a,i)$ and $(b,1,b-a,i))$ and comparisons are lexicographic. The perturbation need not be made explicitly, it can be incorporated into the sorting order and the conditions under which edges are added, as described in Algorithm~\ref{alg:spanning_tree}.

\section{Computing all 3-Vertex-Connected Components}\label{3-Vertex Components}

A pair of vertices $\{x,y\}$ is a separation pair of $G$ if $G - x - y$ is disconnected. Similar to the edge-connectivity case, it suffices to compute all vertices that are contained in separation pairs of $G$ in order to compute all $3$-vertex-connected components of $G$. We assume that $G$ is $2$-vertex-connected and has minimum degree $3$.

For a rooted tree $T$ of $G$ and a vertex $x \in G$, let $T(x)$ be the subtree of $T$ rooted at $x$. The following lemmas show that separation pairs can only occur in chains. Weaker variants of Lemma~\ref{lem:separationpair} can be found in~\cite{Hopcroft1973,Vo1983,Vo1983a}.


\begin{lemma}\label{lem:separationpair}
Let $T$ be a DFS-tree of a $2$-connected graph $G$ and $\calC$ be a chain decomposition of $G$. For every separation pair $\{x,y\}$ of $G$, $x$ and $y$ are contained in a common chain $C \in \calC$.
\end{lemma}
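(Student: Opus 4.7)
The proof will parallel that of Lemma~\ref{lem:cuts on a chain} (its edge-cut analog) and proceed in three stages: a structural analog of Lemma~\ref{lem:facts about cuts} placing $x$ and $y$ on a common root-to-leaf path of $T$, existence of an appropriate back-edge, and a minimality argument showing the induced chain contains both $x$ and $y$.

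For the first stage, I suppose towards a contradiction that $x$ and $y$ are tree-incomparable and pick a component $A$ of $G-\{x,y\}$ not containing the root $r$ (the case $r \in \{x,y\}$ is handled by symmetry). Every vertex of $A$ must be a descendant of $x$ or $y$, since otherwise its tree-path to $r$ avoids $\{x,y\}$. Because the endpoints of every DFS edge are ancestor-related and $T(x), T(y)$ are disjoint under incomparability, no edge connects $T(x)-x$ to $T(y)-y$; thus $A$ is confined to one of these subtrees, WLOG $A \subseteq T(x)-x$. But then back-edges from $A$ can only target ancestors in $T(x)$ and never reach $y$, so $\{x\}$ alone separates $A$, contradicting $2$-vertex-connectivity. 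Hence WLOG $x < y$ in the tree.

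Next, I establish a back-edge $(u,v)$ with $u \leq x$ and $v \geq y$. Let $A$ be a component of $G-\{x,y\}$ not containing $r$; a similar subtree-disjointness argument shows $A \subseteq T(x)-x$, and combined with Lemma~\ref{lem:sepconnect} and $2$-connectivity (both $G-x$ and $G-y$ are connected), $A$ must contain vertices of $T(y)-y$ and have a back-edge to $\{x\} \cup \text{ancestors}(x)$: any route from $A$ to $r$ in $G-y$ must pass through $x$, forcing such a back-edge. Among all such back-edges, let $(u^*, v^*)$ be one with $u^*$ closest to $r$. I claim the chain $C^*$ induced by $(u^*, v^*)$ contains both $x$ and $y$. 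The walk building $C^*$ starts at $v^* \geq y$ and heads up through $y$ and $x$ toward $u^* \leq x$. If the walk stopped at a strict descendant of $y$, that vertex would be an inner vertex of an earlier chain whose back-edge $(s'', w'')$ satisfies $s'' \leq x$ and $w'' \geq y$, contradicting the extremality of $u^*$ in tree order. A refined minimality argument, selecting $(u^*, v^*)$ as the DFS-first back-edge at $u^*$, rules out a stopping point strictly between $x$ and $y$. Hence $t(C^*) \leq x$ (or $u^* = x$ and $t(C^*) \leq y$), and in both cases $C^*$ contains $y$ on its tree-path portion and $x$ either on the tree-path portion or as $s(C^*)$.

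The main technical obstacle will be tracking the visited markers induced by earlier chains, since these determine where the walk in the construction of $C^*$ stops. The extremality of $u^*$ in tree order is the central tool; subtle cases arise when multiple back-edges at $u^*$ target descendants of $y$ in the same subtree of $T(y)$, where one must use the back-edge processed first to avoid visited markers left by sibling back-edges at $u^*$.
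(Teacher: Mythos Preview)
Your comparability argument (stage one) is essentially fine and matches the paper's. The real problem is in stage two, and it is a genuine gap rather than a detail.

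\textbf{The minimality criterion fails.} Take the DFS-tree $r - x - a - y$ with $y$ having two children $b$ and $c$; add back-edges $(r,a)$, $(r,b)$, $(x,c)$, and suppose $(r,a)$ is processed before $(r,b)$ at $r$. This graph is $2$-connected and $\{x,y\}$ is a separation pair (removing $x,y$ isolates $c$). The back-edges with $u\le x$ and $v\ge y$ are $(r,b)$ and $(x,c)$; your rule selects $(u^*,v^*)=(r,b)$. But $C_1$ (from $(r,a)$) has already marked $a$ visited, so the chain from $(r,b)$ is $r\to b\to y\to a$, stopping at $a$. It contains $y$ but not $x$. Your ``refined minimality'' (first back-edge at $u^*$) does nothing here: the interfering chain comes from the \emph{earlier} back-edge $(r,a)$ whose target lies strictly between $x$ and $y$, so it contradicts neither the minimality of $u^*$ nor the first-at-$u^*$ choice. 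The chain that actually works is the one from $(x,c)$, which your selection discards.

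\textbf{What is missing.} The edge-cut analogy breaks down because, unlike Lemma~\ref{lem:facts about cuts}(\ref{components two tree-edges}), nothing forbids back-edges from proper ancestors of $x$ to vertices strictly between $x$ and $y$; such back-edges can mark that stretch visited without contradicting your extremality hypothesis. The paper handles this by a case split: it first looks at the minimal-source back-edge into $T(x')$ (where $x'$ is the child of $x$ on the path to $y$). If that back-edge lands in $T(y)$, its chain is the first to enter $T(x')$ and therefore runs all the way through both $y$ and $x$. If instead it lands in $T(x')\setminus T(y)$, the paper uses the separation property to locate a child $y'$ of $y$ such that every back-edge into $T(y')$ starts in $T(y)\cup\{x\}$, and then takes the first chain with a back-edge from $x$ into $T(y')$; that chain has $x$ as its source and necessarily reaches $y$. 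Your single minimality criterion cannot replace this case distinction.

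A secondary issue: your claim that the component $A$ ``must contain vertices of $T(y)-y$'' is not true for an arbitrary component $A$ not containing $r$ (a component can sit strictly between $x$ and $y$ and touch $y$ only via the tree edge $p(y)y$); you would at least need to argue that \emph{some} such component lies in $T(y)$, which again leads back to the paper's case analysis.
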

\begin{proof} The following simple observation will be useful. Let $r$ be the root of $T$ and let $x \neq r$ be any vertex. Then for every $t \in T(x) - x$, there is a path $P$ from $t$ to a vertex $s \in G - T(x)$ such that $P$ consists only of vertices in $T(x)-x \cup s$.

We first prove that $x$ and $y$ are \emph{comparable} in $T$, i.e., contained in a leaf-to-root path of $T$. Assume they are not. Then $G-x-y$ consists of at most three connected components: one connected component containing the least common ancestor of $x$ and $y$ in $T$, and the at most two connected components that contain the proper descendants of $x$ and $y$, respectively. According to the observation above,
these components coincide, contradicting that $\{x,y\}$ is a separation pair.

Let $x'$ be the child of $x$ in $T$ that lies on the path $y \rightarrow_T x$. Clearly, if $x' = y$, the chain containing the edge $xy$ is a common chain containing $x$ and $y$. Otherwise, $x' \neq y$. If $x = r$, then there is a back-edge $rt$ such that $t \in T(y)$, according to the fact that $G - r$ is connected by $T - r$ and due to the observation above. 
This back-edge $rt$ implies that the first chain $C$ that traverses a vertex of $T(y)$ starts at $r$ and, hence, contains $x$ and $y$.

In the remaining case, $x' \neq y$ and $x \neq r$. Let $st$ be a back-edge that connects an ancestor $s$ of $x$ with a descendant $t$ of $x'$ (possibly $x'$ itself) such that $s$ is minimal; this edge $st$ exists, since $G$ is $2$-vertex-connected. According to~\cite{Schmidt2013a}, $C_1$ is the only cycle in $\calC$ and it follows that $s < x$. If $t \in T(y)$, the first chain $C$ in $\calC$ that contains such a back-edge contains $x$ and $y$ and, hence, satisfies the claim. Otherwise, $t$ is a vertex in $T(x')-T(y)$. Due to the back-edge $st$, $G-x-T(y)$ is contained in one connected component of $G-x-y$. According to the observation above 
(applied on $y$), $\{x,y\}$ can form a separation pair only if $y$ has a child $y'$ such that all back-edges that end in $T(y')$ start either in $T(y)$ or at $x$. Since $G$ is $2$-connected, there must be a back-edge from $x$ to $T(y')$. The first chain $C$ in $\calC$ containing such a back-edge gives the claim, as it contains $x$ and $y$.
\end{proof}

Similar to edge-connectivity, the connected components of the overlap graph for $C_i$ represent all vertices in separation pairs that are contained in $C_i$. The connected components of the overlap graph can be computed efficiently~\cite[Lemma~51]{Schmidt2013}. After finding all these vertices for $C_i$, a simple modification allows the algorithm in~\cite[p.\ 508]{Schmidt2013} to continue, ignoring all previously found separation pairs: For every separation pair $\{x,y\}$, $x < y$, that has been found when processing $C_i$, there is a vertex $v$ strictly between $x$ and $y$ in $C_i$. Furthermore, by doing a preprocessing~\cite[Property~B, p.\ 508]{Schmidt2013} one can assume that $t(C_i) \rightarrow_T s(C_i)$ also has an inner vertex $w$. We eliminate every separation pair $\{x,y\}$ after processing $C_i$ by simply adding the new back-edge $vw$ to $G$. As the new chain containing $vw$ is just an edge, this does not harm future processing steps.

According to Lemma~\ref{lem:separationpair}, this gives all vertices in the graph that are contained in separation pairs. The $3$-vertex-connected components can then be computed in linear time by iteratively splitting separation pairs and gluing together certain remaining structures, as shown in~\cite{Hopcroft1973,Gutwenger2001}.


\end{document}